\journal{Journal of \LaTeX\ Templates} 
		\tikzset{myarrow/.style={arrows={-Latex[round,bend]}}}
	\newtheorem{theorem}{Theorem}
	\newtheorem{corollary}{Corollary}
	\newtheorem{lemma}{Lemma}
	\newtheorem{definition}{Definition}
	\newtheorem{remark}{Remark}
	\newcommand{\describes}{\rightrightarrows}
	\newcommand{\bbset}[1]{\mathbb{#1}}
	\newcommand{\N}[0]{\bbset{N}}      			
	\newcommand{\Nplus}{\N_{+}}
	\newcommand{\Z}[0]{\bbset{Z}}      			
	\newcommand{\Q}[0]{\bbset{Q}}      			
	\newcommand{\R}[0]{\bbset{R}}      			
	\newcommand{\C}[0]{\bbset{C}}      			
	\newcommand{\Rc}[0]{\bbset{R}_c}   			
	\newcommand{\Cc}[0]{\bbset{C}_c}      	
	\newcommand{\Cs}[0]{\mathcal{C}}									
	\newcommand{\Ca}[0]{\mathcal{C}^*}								
	\newcommand{\Cas}[1]{\mathrlap{\phantom{\mathcal{C}}^{*}}\mathcal{C}_{#1}}		
	\newcommand{\Bs}[0]{\mathcal{B}}      						
	\newcommand{\Es}[0]{\mathcal{E}}      						
	\newcommand{\CBs}[0]{\mathcal{C}\mathcal{B}}      			
	\newcommand{\CEs}[0]{\mathcal{C}\mathcal{E}}      			
	\newcommand{\WEs}[0]{\mathcal{W}\mathcal{E}}				
	\newcommand{\SG}{\mathcal{G}}								
	\newcommand{\Oa}{\mathcal{O}^{\approx}} 
	\newcommand{\Ob}{\mathrlap{\phantom{\mathcal{O}}_{\sigma}}\mathcal{O}^{<}}   
	\newcommand{\As}[0]{\mathcal{A}}
	\newcommand{\Ms}[0]{\mathcal{M}}
	\newcommand{\Ns}[0]{\mathcal{N}}							
	\newcommand{\Qs}[0]{\mathcal{Q}}	
	\newcommand{\fa}{f^{\mathrm{a}}}	%
	\newcommand{\fb}{f^{\mathrm{b}}}	%
	\newcommand{\fc}{f^{\mathrm{c}}}	%
	\newcommand{\Fta}{\Ft{f}^{\mathrm{a}}}	%
	\newcommand{\Ftb}{\Ft{f}^{\mathrm{b}}}	%
	\newcommand{\Ftc}{\Ft{f}^{\mathrm{c}}}	%
	\newcommand{\Gop}{\mathscr{G}}			
	\newcommand{\ind}[0]{\mathds{1}}		
	\newcommand{\tm}{{\bm{m}}}			
	\newcommand{\ts}{{\bm{s}}}			
	\renewcommand{\th}{{\bm{h}}}			
	\newcommand{\rp}{\mathrlap{\phantom{r}^{\hspace{1pt}\prime}}r} 					
	\newcommand{\rpp}{\mathrlap{\phantom{r}^{\hspace{1pt}\prime\prime}}r} 		
	\newcommand{\bp}{\mathrlap{\phantom{b}^{\hspace{1pt}\prime}}b}					
	\newcommand{\ap}{\mathrlap{\phantom{a}^{\hspace{1pt}\prime}}a}					
	\newcommand{\di}[1]{\;\mathrm{d}#1} 	
	\newcommand{\Ft}[1]{\hat{#1}}         
	\newcommand{\iu}[0]{i}                
	\newcommand{\spacedot}[0]{\,\cdot\,}      
	\newcommand{\TM}[0]{\text{TM}}
	\newcommand{\OM}[0]{\text{OM}}
	\newcommand{\TMu}[0]{\overline{\TM}_{\text{BW}}}
	\newcommand{\TMl}[0]{\underline{\TM}_{\text{BW}}}
	\newcommand{\bw}{B}														
	\newcommand{\qnt}{\mathsf{Q}}									
	\newcommand{\en}{\varphi}											
	\DeclareMathOperator{\sinc}{sinc}            	
	\def\e{\mathop{\mathrm{e}}\nolimits}					
	\newcommand{\tot}{\rightarrow}								
	\renewcommand{\part}{\hookrightarrow}						
	\newcommand{\At}[0]{A_{\mathrm{T}}^{\Phi}}		
	\DeclareMathOperator*{\mima}{\Theta\vphantom{p}} 
	\DeclareMathOperator*{\suppess}{ess\hspace{1pt}supp} 
	\let\inf\relax\DeclareMathOperator*{\inf}{inf\vphantom{p}}	
\begin{document}
	\setboolean{arxiv}{true} 
	
	\begin{frontmatter}
	\title{On the Arithmetic Complexity of the Bandwidth of Bandlimited Signals}

\author{Holger Boche\fnref{fna}}
\fntext[fna]{Holger Boche is with the Technische Universit\"at M\"unchen,
							Lehrstuhl f\"ur Theoretische Informationstechnik, 80290 Munich, Germany, and the Munich
							Center for Quantum Science and Technology (MCQST), Schellingstr. 4, 80799 
							Munich, Germany. e-mail: boche@tum.de.}
\author{Yannik N. Böck\fnref{fnb}}
\fntext[fnb]{Yannik~N.~B{\"o}ck is with the Technische Universit\"at M\"unchen,
							Lehrstuhl f\"ur Theoretische Informationstechnik, 80290 Munich, Germany. e-mail: yannik.boeck@tum.de.}
\author{Ullrich J. Mönich\fnref{fnb}}
\fntext[fnc]{Ullrich~J.~M{\"o}nich is with the Technische Universit\"at M\"unchen,
							Lehrstuhl f\"ur Theoretische Informationstechnik, 80290 Munich, Germany. e-mail: moenich@tum.de.}

	\begin{abstract}
		\ifthenelse{\boolean{arxiv}}{\normalsize}{}
		The bandwidth of a signal is an important physical property that is of relevance in many
		signal- and information-theoretic applications. In this paper we study questions related to the computability of the bandwidth of computable
		bandlimited signals. To this end we employ the concept of Turing computability, which exactly describes what
		is theoretically feasible and can be computed on a digital computer.
		Recently, it has been shown that there exist computable bandlimited signals with finite energy, the actual bandwidth
		of which is not a computable number, and hence cannot be computed on a digital computer.
		In this work, we consider the most general class of band-limited signals, together with different computable representations thereof. 
		Among other things, our analysis includes a characterization of the arithmetic complexity 
		of the bandwidth of such signals and
		yields a negative answer to the question of whether it is at least possible to compute non-trivial upper or lower bounds for the 
		bandwidth of a bandlimited signal. 
		Furthermore, we relate the problem of bandwidth computation to the theory of oracle machines.
		In particular, we consider halting and totality oracles, which belong to the most frequently 
		investigated oracle machines in the theory of computation.
	\end{abstract}

	\ifthenelse{\boolean{arxiv}}
		{	\begin{keyword}	Bandlimited Signal, Shannon Sampling, Bandwidth, Turing Machine.
			\end{keyword}\end{frontmatter}\pagebreak%
		}%
		{	\begin{IEEEkeywords}	Bandlimited Signal, Shannon Sampling, Bandwidth, Turing Machine.
			\end{IEEEkeywords}
		}%

\section{Introduction}\label{sec:introduction}
	 
	\ifthenelse{\boolean{arxiv}}
		{	The~%
		}%
		{	\IEEEPARstart{T}{he}~%
		}%
	applications of bandlimited signals are a prominent field of research within the
	community of information theory. While most real-world physical systems are analog and 
	continuous in time, the actual processing of information is often done on digital devices that operate in discrete-time
	computational cycles. Hence, the conversion of signals from the analog to the 
	digital domain and vice versa is indispensable for modern technology \cite{isermann89_book, ogata95_book}.
	The link between both domains is established by various sampling theorems
	\cite{shannon49,higgins96_book}, including signal recovery in the presence of noise \cite{pawlak96}, 
	estimates for the error arising from finite-length sampling-series approximations \cite{cambanis82a},
	and refined theories for the treatment of multi-band signals \cite{landau67a,herley99}. 
	All of the mentioned applications include the bandwidth of the involved signals as an essential parameter.
	The processing of discrete-time representations of bandlimited signals has been studied as well, see e.g.
	\cite{habib01,monich17b}. There, the authors consider the replication of time-continuous LTI systems
	in the discrete-time domain. Again, the bandwidth of the involved signals appears as a crucial quantity.  
	Furthermore, bandlimited signals play a significant role in wireless communication systems, where the
	spectrum of the transmit signal has to be controlled in order to not interfere with other
	systems \cite{weinstein71,bingham90}.

	For a bandlimited signal \(f\), we refer to the smallest number
	\(\sigma\), such that \(f\) is bandlimited with bandwidth \(\sigma\), as the the actual bandwidth of \(\bw(f)\). 
	According to Shannon's sampling theorem, a bandlimited signal \(f\) with finite energy is uniquely
	determined by the sequence of samples \((f(\sfrac{k}{r}))_{k \in \Z}\) if  
	\(r \geq r_{\text{min}} = \bw(f)/\pi\) holds true, in which case it may be reconstructed
	by means of the Shannon sampling series. Hence, the actual bandwidth of a bandlimited signal
	is a relevant quantity.

	In this paper we will study questions related to the computability of the actual bandwidth
	\(\bw(f)\) of computable bandlimited signals.
	Our analysis is based on the theory of Turing computability, which
	characterizes the fundamental limits of digital computation.
	There exists a variety of problems that have been shown to be uncomputable on a digital computer, e.g.,
	the computation of the Fourier transform for certain signals \cite{boche19g,boche19f} or
	the spectral factorization \cite{boche19h}. That is, all of these problems
	lack a way to control the approximation error involved in the computation.

	\begin{figure}
		\centering
		\begin{tikzpicture}[font=\small]
			\node[rectangle,draw,inner ysep=1em,inner xsep=0.5em,align=center] (n1) {Algorithm\\\(\TM\)};
			\node[left=2em of n1,align=left] (n-f) {Input\\data \(f\)};
			\node[above=1.7em of n1] (n-eps) {Desired approximation error \(\epsilon\)};
			\node[right=1.9em of n1,text width=4.9cm] (n-res) {Output: \(\TM(f,\epsilon)\) such
				that\\ \(\lVert \text{"true solution"} - \TM(f,\epsilon) \rVert < \epsilon\)};
			\draw[myarrow] (n-f)--(n1);
			\draw[myarrow] (n-eps)--(n1);
			\draw[myarrow] (n1)--(n-res);
		\end{tikzpicture}
		\caption{%
			We adopt the computability model considered in \cite{boche21d_accepted}: The algorithm \(\TM\) gets two inputs: the data \(f\) and the desired
			approximation error \(\epsilon\). The computed output \(\TM(f,\epsilon)\) is guaranteed to
			be \(\epsilon\)-close to the true solution.}
		\label{fig:eps-error-input}
	\end{figure}
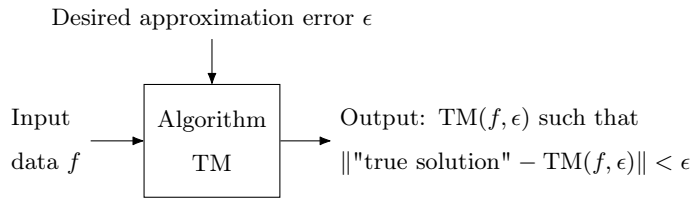

	Today, simulations in science and engineering rarely treat the involved approximation error explicitly, 
	which, for example, can be observed in the absence of error bars in plots.
	In those cases, the computer computes some rational number, which serves as an approximation 
	of the solution, without any quality guarantees.
	A framework which establishes such quality guarantees, i.e., an algorithmic control of the approximation error,
	is provided by the concept of computability. There, the algorithm obtains the desired approximation error
	\(\epsilon\), which could be the maximum tolerable error, along with the actual data \(f\) as an input, and  
	returns a solution that satisfies the error specification. 
	This algorithmic control of the error is illustrated in Fig.~\ref{fig:eps-error-input}.

	In \cite{boche21d_accepted}, it has been shown that there exist computable bandlimited signals \(f\)
	for which the actual bandwidth \(\bw(f)\) is not a computable number. 
	This means that there cannot exist any algorithm for the computation of \(\bw(f)\) with an
	effective control of the approximation error. 
	To the best of our knowledge, \cite{boche21d_accepted} is the first work on computable signals that
	satisfy essential properties of high practical relevance, like finite energy, continuity, etc., and 
	at the same time defy computability with respect to a fundamental signal parameter.
	
	In this work we aim to further develop the theory of computability of bandlimited signals. 
	Several open questions and conjectures were posed in \cite{boche21d_accepted} with regard to the class of computable 
	entire functions of exponential type \(\CEs_\pi\), which corresponds to the most general class of bandlimited signals:
	\begin{enumerate}
		\item For specific classes of computable bandlimited signals, the bandwidth \(\bw(f)\) is the
			limit value of a monotonically non-decreasing computable sequence of rational numbers.
			Does this hold true for the class \(\CEs_\pi\) as well? It was conjectured that the answer is no. 
		\item For specific classes of computable bandlimited signals, the subset of signals that satisfy
			\(\bw(f) > \lambda\) for a given \(\lambda\) is semi-decidable.
			Does this hold true for the class \(\CEs_\pi\) as well? Again, it was conjectured that the answer is no. 
		\item Even if it not possible to compute \(\bw(f)\) for problematic signals in \(\CEs_\pi\), it might still
			be possible to compute meaningful bounds for \(\bw(f)\). Hence, the question is: Do there exist Turing machines \(\TMl\)
			and \(\TMu\) such that for all signals \(f\) in \(\CEs_\pi\) that satisfy \(\bw(f) \leq \pi\), we have
			\begin{align*} \TMl(f) \leq \bw(f) \leq \TMu(f) ?
			\end{align*}
			It was conjectured that the only Turing machines that satisfy this requirement yield trivial values,
			i.e., \(\TMl(f) = 0\) and \(\TMu(f) = \pi\), for all signals in \(\CEs_\pi\).
	\end{enumerate}
	In this work, we provide a comprehensive study of the computability properties of the bandwidth 
	\(\bw(f)\) of signals \(f\) in the class \(\CEs_\pi\), proving all of the above conjectures correct. 
	As indicated above, the class \(\CEs_\pi\) respects the common general
	definition of bandlimited signals, which is a broad extension of the class of admissible signals compared
	to \cite{boche21d_accepted}. In particular, we present a sharp characterization of the \enquote{degree} 
	of non-computability of the number \(\bw(f)\) in the following sense. On the one hand, the arithmetic complexity
	of the number \(\bw(f)\) can never exceed the class \(\Pi_2\) (which will subsequently be introduced in a formal manner).
	On the other hand, for every number \(x \in \Pi_2 \cap [0,\pi]\), there exists a signal \(f\in\CEs_\pi\) with
	\(\bw(f)\) equal to \(x\). This characterization is analogous to the notion of 
	achievability and converse in information theory.

	Our analysis leads to interesting insights about the general limits of
	computability. In recent years, attempts have been made to push the boundaries of computing
	through various approaches, including analog, neuromorphic and quantum hardware.
	At least in theory, the computational capabilities of these technologies go beyond those of digital
	computers. For the question of whether any of them may also yield practical advantages at some point in the future, 
	it is essential to precisely understand the phenomena underlying the limitations of digital computing, especially in the
	context of mathematical models that describe practically relevant problems from engineering and 
	science. 
	
	Furthermore, our analysis uncovers interesting property of bandlimited signals, concerning the relation between
	the arithmetic complexity of \(\bw(f)\) and the structural properties of \(f\) in the time-domain. To the authors' 
	knowledge, no distinguished physical quantity other than
	the bandwidth of bandlimited signals has yet been identified to show a comparable behavior.
	The class \(\CEs_\pi\) includes the subclass \(\CBs_\pi^\infty\) of bandlimited signals 
	with finite \(L^\infty\)-norm, which, in contrast to \(\CEs_\pi\), exhibits a Banach space structure. In particular, this structure restricts the
	decay of the corresponding signals on the time-axis. While all signals in \(\CEs_\pi\) 
	are computationally well-behaved in the time domain, we observe a jump in the arithmetic complexity 
	of the bandwidth when extending the class of feasible signals from \(\CBs_\pi^\infty\) to \(\CEs_\pi\).
	Hence, the computability properties of the bandwidth are coupled directly with the presence or absence 
	of a Banach space structure in the time domain. Although not previously observed, the authors believe that 
	such characteristics may occur for a variety of mathematical models used in the
	applied sciences.
	
	The remainder of the paper is structured as follows. Sections \ref{sec:BandlimitedSignals} to \ref{sec:hierarchy_real_numbers}
	are dedicated to preliminaries. There, we introduce the basic concepts of bandlimited signals, Turing machines and 
	\(A\)-computable functions, as well as the arithmetical hierarchy of real numbers, which provides the theoretical framework
	for classifying different degrees of uncomputability. Our main results are presented in Sections 
	\ref{sec:ComputableBandlimitedSignals} to \ref{sec:OracleComputations}. We consider 
	different computable representations of bandlimited signals and derive the answers to the above-mentioned conjectures. 	
	Furthermore, we characterize the problem of computing the bandwidth of bandlimited signals by means of the 
	arithmetical hierarchy of real numbers. Last but not least, we relate the problem of computing the 
	bandwidth of bandlimited signals to a class of oracle computation machines. The paper closes in Section 
	\ref{sec:relation-prior-work}, with a discussion on our results and their implications.

\section{Bandlimited Signals}\label{sec:BandlimitedSignals}
	Our analysis of bandlimited signals is based on commonl used definitions,
	several of which have already been employed in \cite{boche21d_accepted}. For the sake of self-containedness,
	we introduce all of the relevant definitions in the following.
	
	By \(\Nplus := \{1,2,3,\ldots\}\), we denote the set of (positive) natural numbers.
	By \(\N := \Nplus \cup \{0\}\) the set of natural numbers including zero. 
	For \(\Omega \subseteq \R\), let \(L^p(\Omega)\), \(1\leq p < \infty\), be the space of all
	measurable, \(p\)-th power Lebesgue integrable functions on \(\Omega\), with the usual norm
	\(\lVert\spacedot\rVert_p\).
	A function \(f\) is said to be entire if it is defined and holomorphic on all of \(\C\).
	\begin{definition}\label{de:bandlimited} 
		An entire function \(f\) is called \emph{bandlimited} if
		there exists \(\sigma \geq 0\) such that for all \(\epsilon > 0\) there exists a constant
		\(C(\epsilon)\) with
		\begin{align}\label{eq:exponential_type}
			\lvert f(z) \rvert
			\leq
			C(\epsilon) \e^{  (\sigma+\epsilon) \lvert z \rvert }
		\end{align}
		for all \(z\in \C\) \cite{levin96_book,higgins96_book}.
		By \(\Es_{\sigma}\) we denote the set of all entire functions that are bandlimited with
		bandwidth \(\sigma\).
	\end{definition}
	
	In particular, we will consider signals that are bandlimited with bandwidth \(\pi\).

	According to the definition above, \(f \in \Es_{\sigma_1}\) implies
	\(f \in \Es_{\sigma_2}\) whenever \(\sigma_1\) satisfies \(\sigma_1 \leq \sigma_2\). 
	That is, a signal that is bandlimited with bandwidth \(\sigma_1\) is also bandlimited with
	any bandwidth \(\sigma_2\) larger than \(\sigma_1\). 
	For a given bandlimited signal \(f\), we denote by
	\begin{align}\label{eq:def_bandwidth}
		\bw(f)
		=
		\min \{ \sigma \geq 0 \colon f \in \Es_{\sigma}\}
	\end{align}
	the \emph{actual} bandwidth of the signal.
	
	\begin{remark}
		For an entire function \(f\) that satisfies \eqref{eq:exponential_type} for some \(\sigma\in\R\),
		the minimum in \eqref{eq:def_bandwidth} does exist. For details, see \cite[Appendix~B, p.~15]{boche21d_accepted}. 
	\end{remark}

	In the following, we introduce further signal spaces with practically relevant
	properties. If we additionally restrict the set \(\Es_\pi\) to signals with finite \(L^1\)-norm in the time domain, i.e.,
	\begin{align*}
		\int_{-\infty}^{\infty} \lvert f(t) \rvert \di{t} < \infty ,
	\end{align*}
	we obtain the Bernstein space \(\Bs_{\pi}^{1}\).
	On the other hand, if we restrict the set \(\Es_\pi\) to signals with well-defined Fourier transform in
	\(L^2\), we obtain the Bernstein space \(\Bs_{\pi}^{2}\).
	According to Plancherel's theorem, these signals also have a finite \(L^2\)-norm in the time
	domain. The Bernstein space \(\Bs_{\pi}^{2}\) is the frequently used space of bandlimited signals with finite energy.
	According to the Paley--Wiener theorem \cite[Theorem~7.2, p.~68]{higgins96_book}, the
	support of the Fourier transform \(\Ft{f}\) of a signal \(f \in \Bs_{\pi}^2\) is contained
	in \([-\pi,\pi]\), and we have
	\begin{align*}
		f(t)
		=
		\frac{1}{2\pi} \int_{-\pi}^{\pi} \Ft{f}(\omega) \e^{\iu \omega t} \di{\omega} .
	\end{align*}
	Hence, for the space \(\Bs_{\pi}^2\) we have a further, different characterization of the actual
	bandwidth.
	For \(f \in \Bs_{\pi}^2\), \(\bw(f)\) is the smallest number \(\sigma>0\) such that
	\begin{align*}
		f(t)
		=
		\frac{1}{2\pi} \int_{-\sigma}^{\sigma} \Ft{f}(\omega) \e^{\iu \omega t} \di{\omega}
	\end{align*}
	holds true for all \(t \in \R\).
	According to Plancherel's identity, this is also the smallest \(\sigma > 0\) such that
	\begin{align*}
		\int_{-\infty}^{\infty} \lvert f(t) \rvert^2 \di{t}
		=
		\frac{1}{2\pi} \int_{-\sigma}^{\sigma} \lvert \Ft{f}(\omega) \rvert^2 \di{\omega} .
	\end{align*}
	is satisfied.
	The actual bandwidth \(\bw(f)\) of a bandlimited signal \(f\) is a distinguished quantity,
	because it determines the minimum sampling rate that is required so that the samples
	uniquely determine \(f\).

	The general definition of the Bernstein spaces is as follows.
	\begin{definition}
		The Bernstein space \(\Bs_{\pi}^p\), \(1 \leq p \leq \infty\), consists of all
		functions in \(\Es_{\pi}\), whose restriction to the real line is in \(L^{p}(\R)\)
		\cite[p.~49]{higgins96_book}.
		The norm for \(\Bs_{\pi}^p\) is given by the \(L^p\)-norm on the real line.
	\end{definition}
	\begin{remark}
		We have \(\Bs_{\pi}^r \subsetneq \Bs_{\pi}^s \subsetneq \Es_{\pi}\)
		for all \(1 \leq r < s \leq \infty\). 
	\end{remark}

	We will discuss properties of signals \(f \in \Es_{\pi}\) next. 
	Since every signal \(f \in \Es_{\pi}\) is entire, it can be represented as a power series
	that converges uniformly on all compact subsets of \(\C\). Denote the \(n\)-th derivative of \(f\) by \(f^{(n)}\).
	Then, defining\linebreak \(a_n := f^{(n)}(0)\) for all \(n\in\N\), we have
	\begin{align}\label{eq:power-series}
		f(z)
		=
		\sum_{n=0}^{\infty} \frac{a_n}{n!} z^n 
		=
		\sum_{n=0}^{\infty} \frac{f^{(n)}(0)}{n!} z^n,	\quad z \in \C.
	\end{align}
	The actual bandwidth \(\bw(f)\) can be determined directly from the family of coefficients \((a_n)_{n\in\N}\)
	according to
	\begin{align}\label{eq:actual-bandwidth-limsup-sqrt-an}
		\bw(f)
		=
		\limsup_{n \to \infty} \sqrt[n]{\lvert a_n \rvert} .
	\end{align}
	For details, see \cite[pp.~356]{doetsch50_book} or \cite[Theorem~3, p.~6]{levin96_book}.

\section{Preliminaries on Turing Machines and Recursive Functions}
\label{sec:prel_on_TM}

	The theory of Turing machines, recursive functions and computable analysis are well-established fields in theoretical computer science. 
	Nevertheless, in order to establish a self-contained work, we introduce 
	all definitions and results that will be required subsequently, even if they have already been given in \cite{boche21d_accepted}.
	A comprehensive treatment of the topic may be found in \cite{weihrauch00_book,pour-el89_book,boolos02_book,avigad14,soare87_book,Soare2016}.

	Turing machines, as introduced by \citeauthor{turing36} in \cite{turing36,turing37}, are a mathematical model of what we intuitively understand as
	computation machines. In this sense, they yield an abstract idealization
	of today's real-world computers. Even though the model is relatively simple in structure, 
	any algorithm that can be executed by a real-world computer can be simulated by a Turing machine. 
	In contrast to real-world computers, however, Turing machines
	are not subject to any restrictions regarding energy consumption, computation
	time or memory size. All computation steps on a Turing machine are furthermore
	assumed to be executed with zero chance of error. Thus, computability in the sense of Turing is the exact characterization of
	what can be achieved by digital hardware, e.g., central processing units
	(CPUs), digital signal processors (DSPs), or field programmable gate arrays (FPGAs), if
	practical limitations, such as energy constraints, computing errors, and hardware
	restrictions, are disregarded. %

	In formal terms, a Turing machine consists of a formal language over a finite alphabet, together with a list of 
	transformation rules for the associated words. The transformation rules can be seen as an \enquote{algorithm}, 
	where the words represent the \enquote{data} being processed. Since formal languages exhibit a number of intuitive encodings into the set of natural numbers
	(cf. Remark \ref{rem:goedel_code}), each Turing machine may be characterized by some element of the set 
	\begin{align}		\label{eq:DefSetNs}	
									\Ns := \bigcup_{n=1}^{\infty} \big\{ g : \N^{n} \part \N \big\}, 
	\end{align} 
	where we use the symbol "\(\part\)" to denote a \emph{partial mapping}.

	Recursive functions, more specifically referred to as \emph{\(\mu\)-recursive functions}, 
	characterize the notion of computability by means of different approach and were, amongst others, 
	considered by \citeauthor{Kl36} in \cite{Kl36}.
	According to \eqref{eq:DefSetNs}, the set \(\Ns\) contains all possible functions \(g : \N^n \part \N\) for all \(n\in\Nplus\),
	and is thus uncountably infinite in cardinality. The set of those functions \(g \in \Ns\) that correspond to our intuitive understanding of computability in the sense that they can be 
	fully described by a finite sequence of fundamental arithmetic-logic operations, must necessarily be of countable cardinality, and thus be a proper subset of the set \(\Ns\). 
	Contrary to Turing machines, the notion of recursive functions tries to characterize this subset directly by defining a set 
	of fundamental computable operations on the natural numbers, rather than starting from formal languages. Yet, Turing machines and recursive functions turned out to be equivalent in the following 
	sense: the class of functions characterized by the concept of Turing machines coincides with the set of recursive functions \cite{Tu37}. Hence, a function 
	\(g \in \Ns\) can be computed on some Turing machine if and only if it is a recursive function.

	In the following, we will look further into the properties of recursive functions. For \(A\subseteq \N\), denote 
	by \(\Cs(A)\subsetneq \Ns\) the set which consists of the \emph{indicator function} \(\ind_A\) \emph{of} \(A\), the \emph{
	successor function}, and all \emph{constant} and \emph{identity functions} on tuples of natural numbers \cite[Definition~2.1, p.~8]{soare87_book}. 
	By \(\Ca(A)\), denote the closure of \(\Cs(A)\) with respect to \emph{composition}, 
	\emph{primitive recursion} and \emph{unbounded search} \cite[Definition~2.1, p.~8, Definition~2.2, p.~10]{soare87_book}. Then, the set \(\Ca(A)\) is referred
	to as the set of \(A\)-computable functions. In particular, the set \(\Ca(\emptyset)\) is the set of recursive functions. %
	For brevity, we write \(\Ca\) instead of \(\Ca(\emptyset)\). Furthermore, for \(n\in\Nplus\), we denote by \(\Cas{n}(A)\) and \(\Cas{n}\) the set of 
	A-computable functions in \(n\)-variables and the set of recursive functions in \(n\)-variables, respectively. That is, we have
	\begin{align*}
		\Cas{n}(A) :&= \Ca(A) \cap \big\{g : \N^n \part \N\big\}, \\
		\Ca(A) 		&= \bigcup_{n=1}^{\infty} \Cas{n}(A)
	\end{align*}
	for all \(n\in\Nplus\) and all \(A \subseteq \N\).

	\begin{definition} 
		A set \(A\subseteq \N\) is said to be \emph{recursively enumerable} if there exists a recursive function \(g : \N \part \N\) with \emph{domain} \(D(g)\) 
		equal to \(A\).
	\end{definition}
	
	In the context of Turing machines, the domain \(D(g) \subseteq \N\) of a function \(g\in \Cas{1}\) has a dedicated interpretation. 
	Consider a Turing machine \(\TM_g\) that computes the function \(g\). Then, given an input \(m\in\N\), the Turing machine \(\TM_g\) reaches 
	its halting state after a finite number of computational steps if and only if \(m\in D(g)\) is satisfied. 
	In contrast, if \(m\in \N\setminus D(g)\), the Turing machine \(TM_g\) runs forever.
	 
	\begin{definition}
		A set \(G\subseteq \N\) is said to be \emph{recursive} if the corresponding indicator function \(\ind_{G} : \N \tot \{0,1\}\) is a recursive function.
	\end{definition}

	\begin{remark}
		A set \(G \subseteq \N\) is recursive if and only if both \(G\) and \(G^{c} := \N\setminus G\) are recursively enumerable sets. Furthermore, for a set 
		\(G \subseteq \N\), we have \(\Ca(G) = \Ca\) if and only if \(G\) is a recursive set.
	\end{remark}

	The set of \(A\)-computable functions in one variable, \(\Ca_1(A)\),  
	which will be of special significance in the following, is \emph{recursively enumerable} itself. 
	In this context, recursive enumerability refers to the existence of a \emph{universal} \(A\)-computable function \(\Phi^A \in \Ca_2(A)\) such that for all 
	\(A\)-computable functions \(g \in \Ca_1(A)\), there exists a number \(n\in\N\) such that
	\begin{align}	\forall m\in D(g)			&:~\Phi^A(n,m) = g(m), \nonumber\\
								\forall m\notin D(g)	&:~(n,m) \notin D(\Phi^A) 
	\end{align}
	hold true \cite[Theorem~1.5.3, p.~11]{Soare2016}. 	In short, we say that \(g\) satisfies \(\Phi^A(n,m) = g(m)\) for all \(m\in\N\), 
	implicitly (and with some abuse of notation) including the case of \(\Phi^A(n,m)\) and \(g(m)\) being undefined for some \(m\in\N\).
	For all \(n,m \in \N\), define \(\en_n^A(m) := \Phi^A(n,m)\). Then, the family \((\en_n^A)_{n\in\N}\) is a \emph{recursive enumeration} of the set of all \(A\)-computable 
	functions in one variable. %
	A Turing machine \(\TM_{\Phi^\emptyset}\) that computes the function \(\Phi^\emptyset\) is referred to as a 
	\emph{universal Turing machine}.
	
	The universal function \(\Phi^A\) is not unique, and hence, neither is the recursive enumeration \((\en_n^A)_{n\in\N}\). Thus, we
	consider an arbitrary but fixed recursive enumeration \((\en_n^A)_{n\in\N}\) for the rest of this work. For the sake of simplicity, we write \((\en_n)_{n\in\N}\)
	instead of \((\en_n^\emptyset)_{n\in\N}\) in the special case of \(A = \emptyset\). Within the scope of this work, we will consider this case most of the time. 

	The case of \(A \subsetneq \N\) being some non-recursive set leads to the idea of oracle-computations, which we will investigate in Section 
	\ref{sec:OracleComputations}. From the equivalence of Turing machines and recursive functions, we deduce the existence of a recursive 
	\emph{runtime function}, which will be essential in the context of oracle computations. 
	Intuitively speaking, given a universal Turing machine 
	\(\TM_{\Phi^\emptyset}\), we count the number of steps of calculation (that is, the number of successive applications of the specified transformation rules) 
	that are required for \(\TM_{\Phi^\emptyset}\), given an input \((n,m)\in\N^2\), to reach 
	its halting state. If \(m \in D(\en_n)^c\), the counting continues for an infinite amount of time. 	Expressed in a formal way, there exists a (total) recursive
	function \(\Psi : \N^3 \tot \{0,1\}\) such that the following holds true:
	\begin{itemize}
		\item	For all \(n,m,k\in\N\) that satisfy \(\Psi(n,m,k) = 1\) we have 
				\(\Psi(n,m,k+1) = 1.\)
		\item	For all \(n,m\in\N\) that satisfy \(m \in D(\en_n)\), there exists \(k\in\N\) such that 
				we have \(\Psi(n,m,k) = 1\).
		\item 	For all \(n,m\in\N\) that satisfy \(m \notin D(\en_n)\), we have 
				\(\Psi(n,m,k) = 0\) for all \(k\in\N\).
	\end{itemize}
	Even for a fixed enumeration \((\en_n)_{n\in\N}\) of \(\Ca_1\), the runtime function \(\Psi\) is not unique. Hence, we again consider an arbitrary
	but fixed runtime function \(\Psi\) for the remainder of this work.
	
	\begin{definition}\label{def:totality_set}
		For a universal recursive function \(\Phi\) with runtime function \(\Psi\), we denote by 
		\begin{align*}	\At := \{n\in \N : \forall m\in \N : \exists k \in \N : \Psi(n,m,k) = 1\}
		\end{align*} 
		the \emph{totality} set of \(\Phi\). 
	\end{definition}
	
	From the properties of \(\Psi\), it follows that the totality set contains exactly the indices of all total functions in \((\en_n)_{n\in\N}\). 
	Hence, we have 
	\begin{align*}	\At := \{n\in \N : D(\en_n) = \N\},
	\end{align*}
	which yields a direct relation to the family \((D(\en_n))_{n\in\N}\), which enumerates the set of recursively enumerable sets according to
	the universal function \(\Phi\). In this context, we will reconsider the runtime function \(\Psi\) in Lemma \ref{lem:D_in_Sg1} and Lemma 
	\ref{lem:totality_Set_in_Pi2} at the end of Section \ref{sec:hierarchy_real_numbers}. 	Ultimately, the set \(\Ca(\At)\) of \(\At\)-computable
	functions and the set \(\Ca(D(g))\) of \(D(g)\)-computable functions for \(g\in\Ca\) will play a fundamental role in Section \ref{sec:OracleComputations}
	in the scope of oracle computations.

\section{Encoding Abstract Structures into the Natural Numbers}
\label{Sec:GoedelCodes}
	Throughout this work, we consider computations on different abstract structures, like, for example, real numbers 
	and sequences thereof. By the term \enquote{abstract structure}, we refer to sets whose elements are no natural numbers. 
	In this section, we will prepare the formalization of computability on abstract structures, following
	the presentations given in \cite{weihrauch00_book, boolos02_book, soare87_book, Soare2016}.
	
	In the previous section, we have stated that Turing machines are characterized by the set \(\Ca\), which is a subset of 
	the set \(\Ns\). In other words, Turing machines characterize algorithms that operate on the natural numbers. 
	Hence, the elements of an abstract structure are not directly accessible to Turing machines and thus need to
	be represented in a suitable manner. In particular, we want to represent each element of the abstract structure in 
	question by at least one natural number. In formal terms, for an abstract structure \(\As\), we consider partial surjective
	mappings \(\nu_\As : \N \part \As\). We refer to a mapping of this kind as \emph{notation}. 
	
	\begin{remark}\label{rem:goedel_code}
		In \cite[Definition~2.3.1, p.~33]{weihrauch00_book}, \citeauthor{weihrauch00_book} employs the word \enquote{notation} as a name for the concept 
		of describing abstract objects by words of a formal language. For example, every \emph{definable} object (to be precise: every definable set) in ZFC set theory can, 
		by definition, be described by a formula in first-order predicate logic, see \cite[Definition~2.8, p.~26]{Manin10} for details. A formal language may be encoded into the 
		natural numbers, as was done by \citeauthor{Tu37} in \cite{Tu37} in order to prove
		the equivalence of Turing machines and recursive functions, or earlier by \citeauthor{Go31} in \cite{Go31} in the context of his work on incompleteness theorems.
		On the other hand, natural numbers may be denoted by words of a formal language, as is the case for the usual representation of natural numbers by means
		of the Arabic numerals. Ultimately, both the use of formal languages and the use of natural numbers as ``fundamental'' structure lead to the same notion of 
		computability on abstract structures.
	\end{remark}
	
	As indicated at the end 
	of the previous section, we will mostly concern ourselves with recursive functions regarding questions of computability, since they 
	characterize the capabilities of real-world computers. Thus, we will restrict ourselves to considering the set \(\Ca\) 
	within this section. In principle, all of the following considerations apply to general \(A\)-computable functions in the same manner.	

	The set \(\Ca_1\) of recursive functions in one variable yields a direct way to characterize recursive sets. 
	For every recursive indicator function \(\ind_G\), there exists an \(n\in\N\) such that \(\ind_G = \en_n\) holds true.
	Accordingly, we define the notation
	\begin{align}	\label{eq:NotationRecursiveSets}
					n \mapsto G \quad :\Leftrightarrow \quad \en_n = \ind_G~\text{for}~G\subseteq \N. 
	\end{align}
	This notation is (truly) partial in the following sense: a function \(\en_n\) satisfies \(\en_n = \ind_G\) for some
	recursive set \(G\subseteq \N\) if and only if it is total and attains no values other than \(0\) and \(1\). Clearly, there
	exists \(n\in\N\) such that \(\en_n\) does not satisfy these requirements. Hence, only a proper subset of the natural numbers actually represents recursive sets
	with respect to the notation defined in \eqref{eq:NotationRecursiveSets}.

	In the following, we consider \(n\)-tuples \(\tm = (m_j)_{j=1}^{n} = (m_1,m_2,\ldots,m_n) \) of natural numbers and, for \(l\leq k\leq n\), 
	the projection \([\tm]_l^k = (m_j)_{j=l}^{k}\) on the subtuple \((m_j)_{j=l}^{k} = (m_l,\ldots,m_k)\) consisting of those components of 
	\(\tm\) with index between \(l\) and \(k\). In particular, for \(l=k\), we write \([\tm]_k = m_k\). Given tuples \(\tm \in \N^l\) and \(\ts \in \N^k\) 
	for \(l,k\in\N\), we define \(\tm \circ \ts := (m_1,\ldots,m_l,s_1,\ldots,s_k)\in\N^{l+k}\).

	\begin{remark}
		For \(k,n\in\Nplus\) with \(k\leq n\), the function \([\cdot]_k : \N^n \tot \N\) is an element of the set of \emph{identity functions}, 
		which we have previously used to define the set \(\Ca\) of \(\emptyset\)-computable functions. Hence, the 
		function \([\cdot]_k : \N^n \tot \N\) is recursive by definition for all \(k,n\in\Nplus\) with \(k\leq n\).
	\end{remark} 
	
	In order to extend the idea of notations to structures that involve tuples of natural numbers, we make use of the \emph{Cantor pairing function}
	\(\langle\cdot\rangle_2 :\N^2 \tot \N\),
	\begin{align*}	(m_1,m_2) \mapsto m_2 + \frac{1}{2}(m_1+m_2)(m_1+m_2+1),
	\end{align*} 
	which maps the set \(\N^2\) bijectively to the set \(\N\).
	For \(n > 2\), the \(n\)-th extension \(\langle\cdot\rangle_n : \N^n \tot \N\) of the Cantor pairing 
	function is defined inductively by
	\begin{align*}	\langle \tm \rangle_n 	&= \langle m_1,m_2,\ldots m_n\rangle_n \\
										:	&= \langle m_1, \langle [\tm]_{2}^n \rangle_{n-1}\rangle_2\\
											&= \langle m_1, \langle m_2,m_3,\ldots m_n \rangle_{n-1}\rangle_2.
	\end{align*}
	For the sake of completeness, we also define the trivial Cantor \enquote{pairing} 
	\(	\langle\cdot\rangle_1 :\N \tot \N,~ m \mapsto m.
	\)
	For all \(n\in\Nplus\), the function \(\langle\cdot\rangle_n\) is total and for all \(m\in\N\), there exists exactly one \(\tm\in\N^n\) such that
	\(\langle\tm\rangle_n = m\) holds true. Hence, the inverse Cantor pairing function \(\amalg_n : \N \tot \N^n\) is well-defined. 
	Furthermore, \(\langle\cdot\rangle_n\) is recursive, as is \([\amalg_n(\cdot)]_k\) for all \(n,k\in\Nplus\) that satisfy \(k\geq n\). 
	Using the inverse Cantor pairing function
	for \(n = 3\), we can specify a notation for the set of rational numbers \(\Q\) by defining
	\begin{align*}	m \mapsto q := (-1)^{[\amalg_3(m)]_1}\frac{[\amalg_3(m)]_2}{1 + [\amalg_3(m)]_3}
	\end{align*}
	for \(m\in\N\). 

	We define \(\varpi_1(\cdot) := [\amalg_2(\cdot)]_1\) and \(\varpi_2(\cdot) := [\amalg_2(\cdot)]_2\) for the special case of \(n = 2\).
	Then, inverse Cantor pairing function also yields a notation for the set of finite tuples of natural numbers \(\bigcup_{n=1}^{\infty} \N^n\) by setting
	\begin{align*} m \mapsto \amalg_{\varpi_1(m)}\big(\varpi_2(m)\big),
	\end{align*} 
	which is particularly useful whenever we want to define a notation for the set of finite tuples of elements of some abstract structure \(\As\). If \(\As\) 
	admits a notation itself, a notation for \(\bigcup_{n=1}^{\infty} \As^n\) can then be defined by means of composition. 

	For the sake of readability, we write \(\amalg_k \en_n(m)\) instead of \(\amalg_k(\en_n(m))\) in the following. 
	\begin{definition}\label{de:computable-sequence-rationals} 
		An \(n\)-fold sequence of rational numbers \((r_{\tm})_{\tm\in\N^n}\) is said to be \emph{computable} if there exist 
		exists a number \(k\in\N\) such that 
		\begin{align*}	r_\tm = 	(-1)^{[\amalg_3 \en_k(\langle \tm\rangle_n)]_1}
																\frac{[\amalg_3 \en_k(\langle \tm\rangle_n)]_2}{1 + [\amalg_3 \en_k(\langle \tm\rangle_n)]_3}  
		\end{align*}
		holds true for all \(\tm\in\N^n\). 
	\end{definition}

	In general, real numbers defy exact computability by Turing machines due to their irrational and hence infinite nature. 
	Practically relevant functions, like \(\exp\), \(\sin\) and \(\cos\) are not computable exactly, even when their domain is 
	restricted to the rational numbers. Hence, a shift from the domain of exact computability,
	which we have considered so far, to the domain of approximate computability is necessary. In order for approximate computations to be meaningful,
	it is necessary to incorporate a procedure for estimating the approximation error.

	\begin{definition}\label{defeff}
		A sequence \((x_m)_{m\in\N}\) of real numbers is said to converge
		\emph{effectively} towards a number \(x_*\in\R\) if there exists a number \(c\in\N\) such that \(\xi := \en_c\) is a total recursive function 
		and 
		\begin{align}	\label{eq:ModOfConv}
						|x_*-x_m|<\sfrac{1}{2^M}
		\end{align} 
		holds true for all \(m,M\in\N\) that satisfy \(m\geq \xi(M)\).
	\end{definition}

	The function \(\xi : \N \tot \N, M \mapsto \xi(M)\) is referred to as (recursive) \emph{modulus of convergence} for the sequence \((x_m)_{m\in\N}\).

	\begin{definition}\label{de:computable-real}
		A real number \(x\) is said to be \emph{computable} if there exists a computable sequence of 
		rational numbers that converges effectively towards \(x\). 
	\end{definition}

	We denote the set of computable real numbers by \(\Rc\), by \(\Rc^{+0}\) the non-negative
	numbers in \(\Rc\), and by \(\Cc = \Rc + \iu \Rc\) the set of computable complex numbers. Prominent examples of computable, irrational 
	numbers are \(\sqrt{2}\), \(\e\), and \(\pi\). Given a computable real number \(x\), a pair \(\big((r_m)_{m\in\N},\xi\big)\) consisting 
	of a computable sequence \((r_m)_{m\in\N}\) of rational numbers that satisfies \(\lim_{m\to\infty} r_m = x\) and a corresponding 
	recursive modulus of convergence \(\xi\) such that \eqref{eq:ModOfConv} holds true is called a \emph{standard description} of the number \(x\). 
	
	\begin{remark}
		On the set of standard descriptions of computable real numbers, the computable real numbers induce an equivalence relation as follows:
		Two standard descriptions \(\big((r_m)_{m\in\N},\xi\big)\) and \(\big((\rp_m)_{m\in\N},\xi'\big)\) are equivalent,
		denoted by 
		\begin{align*}	\big((r_m)_{m\in\N},\xi\big) \sim \big((\rp_m)_{m\in\N},\xi'\big),
		\end{align*}
		if they represent the same number \(x\in\Rc\). By \(\big((r_m)_{m\in\N},\xi\big)\describes x\), we indicate that \(\big((r_m)_{m\in\N},\xi\big)\)
		is a representative of \(x\in\Rc\). In general, we will employ the symbol '\(\describes\)' in the context of computation on abstract sets
		whenever we want to indicate that some abstract object is represented by some \enquote{less abstract} object which is accessible to Turing machines.
	\end{remark}

	By Definition \ref{de:computable-sequence-rationals}, the family \((\en_k)_{k\in\N}\) induces for all \(n\in\Nplus\) a notation for the set 
	of \(n\)-fold computable sequences of rational numbers. Since Definition \ref{de:computable-sequence-rationals} implicitly requires the 
	function \(\en_k\) to satisfy \(D(\en_k) = \N\), the notation is partial. Furthermore, the mapping 
	\begin{align} 	\label{eq:DefNotationRc}
					l \quad \mapsto \quad (\varpi_1(l), \varpi_2(l)) =: (k,c) \quad \mapsto \quad (\en_k,\en_c), 
	\end{align}
	provides a notation for the set \(\Rc\). For every standard description \(\big((r_m)_{m\in\N},\xi\big)\) of some computable number \(x\), 
	there exist \(k,c \in \N\), such that \(\en_k\) characterizes the sequence \((r_m)_{m\in\N}\) in the sense of Definition 
	\ref{de:computable-sequence-rationals} and \(\en_c\) characterizes the function \(\xi\) in the sense of Definition \ref{defeff}.

	\begin{definition}\label{def:CSCN}
		An \(n\)-fold sequence \((x_\tm)_{\tm\in\N^n}\) of computable real numbers is called \emph{computable} if 
		there exists an \((n+1)\)-fold computable sequence \((r_{\tm\circ s})_{\tm\circ s\in\N^{n+1}}\) of rational 
		numbers as well as a number \(c\in\N\) such that \(\xi := \en_c\) is a total recursive function and 
		\begin{align*}	\big|x_{\tm} - r_{\tm\circ s}\big| < \frac{1}{2^M}
		\end{align*}
		holds true for all \(s\in\N\), \(\tm\in\N^{n}\), \(M\in\N\) that satisfy  
		\(s \geq \xi(\langle \tm \circ M\rangle_{n+1})\).
	\end{definition}

	The pair \(\big((r_{\tm\circ s})_{\tm\circ s\in\N^{n+1}}, \xi\big)\) is referred to as a \emph{standard description} of the sequence \((x_\tm)_{\tm\in\N^n}\).
		Again, the mapping \(l \mapsto (\en_k,\en_c)\) defined in \eqref{eq:DefNotationRc} provides a notation for set of \(n\)-fold 
	computable sequences of computable numbers. Every standard description \(\big((r_{\tm\circ s})_{\tm\circ s\in\N^{n+1}}, \xi\big)\) 
	of some sequence \((x_\tm)_{\tm\in\N^n}\) may be characterized by a pair \((\en_k,\en_c)\) according to Definitions
	\ref{de:computable-sequence-rationals} and \ref{defeff}, analogous to the notation for the set \(\Rc\) defined above.

	Finally, returning to the domain of exact computation, we introduce for all \(n\in \Nplus\) a notation for the set 
	\(\Cas{n}\) of recursive functions in \(n\) variables by defining
	\begin{align}\label{eq:enumC}		m \mapsto \en_{m}\big(\langle \cdot \rangle_{n}\big).
	\end{align}
	Clearly, \(\en_{m}\big(\langle \cdot \rangle_{n}\big) : \N^{n} \part \N\) is a recursive function for all \(m\in\N\).
	On the other hand, consider an arbitrary recursive function \(g :\N^n \part \N\). Then, \(g(\amalg_n(\cdot))\) is a recursive 
	function in one variable. Consequently, there exists an \(m\in\N\) such that \(\en_{m}(\cdot) = g(\amalg_n(\cdot))\) 
	holds true and we have
	\begin{align*} 		\en_{m}\big(\langle \cdot \rangle_{n}\big) = g\big(\amalg_n(\langle \cdot \rangle_{n})\big) = g\big(\cdot\big).
	\end{align*}
	Hence, \eqref{eq:enumC} yields for all \(n\in\Nplus\) a notation for the set \(\Cas{n}\). By setting
	\begin{align*}		m \mapsto \en_{\varpi_1(m)}\big(\langle \cdot \rangle_{\varpi_2(m)}\big),
	\end{align*}
	we extend this notation to a notation for the set \(\Ca\) of all recursive functions.

\section{Computations and Algorithms on Abstract Structures}
	In the previous section, we have introduced the notion of notations. In this section, we will employ notations to 
	formally define the idea of \emph{computability on a Turing machine} for abstract structures.
	 
	Let \(\As\) and \(\As'\) be two abstract structures with fixed notations \(\nu_\As : \N \part \As\) and 
	\(\nu_{\As'} : \N \part \As'\). That is, \(\nu_\As\) and \(\nu_{\As'}\) are (not necessarily total) 
	surjections from \(\N\) onto \(\As\) and \(\As'\), respectively. Furthermore, consider a mapping \(\Gop : \As \part \As'\).
	If there exists a recursive function \(g : \N \part \N\) such that for all pairs \((a,n)\in \As \times \N\), we have 
	\begin{align*}	a\in D(\Gop)\wedge\nu_\As(n) = a \quad\Rightarrow\quad \nu_{\As'}(g(n)) = \Gop(a),
	\end{align*}
	then we say that \emph{there exists a Turing machine \(\TM_\Gop\) that returns \(a' := \Gop(a)\) for input \(a\).}
	This Turing machine computes the mapping \(a \mapsto a' := \Gop(a)\) in the following sense: 
	\begin{itemize}
		\item Whenever \(\TM_\Gop\) is presented with a number \(n\) that \emph{denotes} the object \(a = \nu_{\As}(n) \in D(\Gop)\) with
				respect to the notation \(\nu_{\As}\), it returns a number \(g(n)\) that \emph{denotes} the object \(a' = \nu_{\As'}(g(n)) \in \As'\)
				with respect to the notation \(\nu_{\As'}\). 
		\item If \(\Gop\) is undefined at the point \(a = \nu_{\As}(n) \in \As\), then either \(\TM_\Gop\) does not halt in a finite number of steps 
					for input \(n\), or \(g(n)\) is \emph{not} an element of \(D(\nu_{\As'})\). 
		\item In particular, \(\TM_\Gop\) maps the set of numbers that denote the object \(a\in\As\) with respect to \(\nu_{\As}\)
				to a subset of the set of numbers that denote the object \(\Gop(a)\in\As'\) with respect to \(\nu_{\As'}\),
				i.e., for all \(a\in D(\Gop)\), we have
				\begin{align*}	g(\{n : \nu_\As(n) = a\}) \subseteq \{n : \nu_{\As'}(n) = \Gop(a)\}.
				\end{align*}
	\end{itemize}
	With some abuse of notation, we also write \(\TM_\Gop : \As \rightarrow \As', a \mapsto \TM_\Gop(a) = a'\), 
	despite the fact that \(\TM_\Gop\) actually computes a mapping
	on natural numbers rather than a mapping on the abstract structures \(\As\) and \(\As'\).

	\begin{remark} If both \(\As\) and \(\As'\) are either the set of computable reals \(\Rc\) or the set of computable 
		complex numbers \(\Cc\), then a function \(f : \As \part \As'\) that satisfies the above notion of computability 
		is referred to as \emph{Markov computable}. Markov computability is not the strongest
		notion of computability on \(\Rc\) and \(\Cc\) that has been used in the literature: \emph{Turing computability}, 
		which is applicable to functions on all of 
		\(\R\) and \(\C\), implies Markov computability when restricted to \(\Rc\) and \(\Cc\). For a comparison, see \cite[Appendix~2.9, p.~21]{avigad14}. 
	\end{remark}

	The majority of notations introduced in Section \ref{Sec:GoedelCodes} is induced by the enumeration \((\en_n)_{n\in\N}\). 
	If the notations of two abstract structures \(\As\) and \(\As'\) are induced by a recursive enumeration of \(\Ca_1\),
	then any mapping \(\Gop : \As \part \As'\) is, in principle, a transformation of recursive functions. 
	In the following, we will introduce a consequence of the \emph{s-m-n Theorem} \cite[Theorem~3.5, p.~16]{soare87_book}, 
	which captures a fundamental structural property of the set \(\Ca\). In particular, we will characterize the conditions
	under which a transformation of recursive functions is computable on a Turing machine.
	
	Consider the following case: for a tuple \(\tm \in \N^n\), the abstract structure \(\As\) equals the set \(\Cas{m_1} \times \cdots \times \Cas{m_n}\),
	while the abstract structure \(\As'\) equals the set \(\Ca\). For all \(j \in \{1,\ldots,n\}\), the set \(\Cas{m_j}\) is equipped with an individual notation. 
	We can extend the individual notations to a joint notation \(\nu_{\As}\) for the set \(\Cas{m_1} \times \cdots \times \Cas{m_n}\)
	by first enumerating all \(n\)-tuples of natural numbers and then applying the individual notations to the respective component. 
	Furthermore, for the set \(\Ca\), a notation \(\nu_{\As'}\) was introduced at the end of Section \ref{Sec:GoedelCodes}. Now, consider a mapping
	\begin{align*}\begin{array}{rclcl}
		\Gop 	& : & \Cas{m_1} \times \cdots \times \Cas{m_n} 	& \rightarrow & \Ca, \\
					&		&	(g_1,g_2,\ldots,g_n)											& \mapsto 		& h,
	\end{array}\end{align*}
	which, for some number \(k\in\Nplus\), is of the form
	\begin{align*}\begin{array}{llcl}
		(g_j)_{j=1}^{n} 												& = : \th_0			& \mapsto 		&h_1, \\
		(g_j)_{j=1}^{n}\circ(h_j)_{j=1}^{1}			& = : \th_1			& \mapsto 		&h_2, \\
																						& 							& \vdots  		&\\
		(g_j)_{j=1}^{n}\circ(h_j)_{j=1}^{k-1}		& = : \th_{k-1}	& \mapsto			&h_k =: h, \\																																							
	\end{array}\end{align*}
	such that for all \(j\in\{1,\ldots,k\}\), the function \(h_j\) is either an
	element of \(\Cs(\emptyset)\) or emerges from \(\th_{j-1}\) through concatenation, primitive recursion
	or unbounded search. Then, there exists a mapping \(g' \in \Ca_1\) such that
	for all \(l\in\N\) and all \((g_j)_{j=1}^{n} \in \Cas{m_1} \times \cdots \times \Cas{m_n}\) that satisfy
	\(\nu_\As(l) = (g_j)_{j=1}^{n}\), we have \(\nu_{\As'}(g'(l)) = h\). That is, there exists a Turing machine
	\(\TM_\Gop\) that computes the mapping \((g_1,g_2,\ldots,g_n) \mapsto h\). 
	
	As a rule of thumb, if we can implement a mapping \(\Gop : \Cas{m_1} \times \cdots \times \Cas{m_n} \rightarrow  \Ca\) on a real world computer,
	it is computable by a Turing machine, which applies to all common arithmetic and logic operations. Throughout the rest of this work,
	we make implicit use of this principle on several occasions.

\section{The Arithmetical Hierarchy of Real Numbers}
\label{sec:hierarchy_real_numbers}
	Since \citeauthor{turing36} published his work on the theory of computation, it has been known that almost all real numbers are uncomputable. In an attempt to 
	characterize different degrees of (un)computability, \emph{\citeauthor{zw01}} 
	introduced the arithmetical hierarchy of real numbers \cite{zw01}, which is strongly 
	related to the \emph{Kleene–Mostowski hierarchy} of subsets of natural numbers \cite{Kl43, Mo47}. 
	The hierarchical level of a real number solely depends on the logical structure that is used to define the number. 

	\begin{definition}[Kleene–Mostowski Hierarchy, cf. \cite{Kl43, Mo47}]\label{def:KlMo}
		For \(n\in\Nplus\), consider \(\tm = (m_1,\ldots,m_n) \in\N^n\). Then, the sets \(\Sigma^0_n \subsetneq 2^{\N}\), \(\Pi^0_n \subsetneq 2^{\N}\) and
		\(\Delta^0_n \subsetneq 2^{\N}\) are defined as follows:
		\begin{itemize}
			\item A set \(A\subseteq \N\) satisfies \(A\in\Sigma_n^0\) if there exists a recursive set \(G\subseteq \N\) such that for all \(j\in \N\), we have
				\(j\in A\) if and only if
				\begin{align*}
					\exists m_1 \forall m_2 \exists m_3 \ldots \qnt m_n(\langle \tm\circ j\rangle_{n+1} \in G)
				\end{align*}
				holds true, where \enquote{\(\qnt\)} is replaced by \enquote{\(\forall\)} if \(n\) is even and by \enquote{\(\exists\)} if \(n\) is odd.
			\item A set \(A\subseteq \N\) satisfies \(A\in\Pi_n^0\) if there exists a recursive set \(G\subseteq \N\) such that for all \(j\in \N\), we have
				\(j\in A\) if and only if
				\begin{align*}
					\forall m_1 \exists m_2 \forall m_3 \ldots \qnt m_n(\langle \tm\circ j\rangle_{n+1} \in G)
				\end{align*}
				holds true, where \enquote{\(\qnt\)} is replaced by \enquote{\(\exists\)} if \(n\) is even and by \enquote{\(\forall\)} if \(n\) is odd.
			\item A set \(A\subseteq \N\) satisfies \(A\in\Delta_n^0\) if satisfies both \(A\in\Sigma_n^0\) and \(A\in\Pi_n^0\). Hence, we have
				\(\Delta_n^0 = \Sigma_n^0 \cap \Pi_n^0\).
		\end{itemize}
	\end{definition}

	\begin{definition}[Zheng-Weihrauch Hierarchy, cf. \cite{zw01}] 
		For \(n\in\Nplus\), consider \(\tm = (m_1,\ldots,m_n) \in\N^n\). Then, the sets \(\Sigma_n \subsetneq \R\), \(\Pi_n \subsetneq \R\) and
		\(\Delta_n \subsetneq \R\) are defined as follows:
		\begin{itemize}
			\item A number \(x_*\in\R\) satisfies \(x_*\in\Sigma_n\) if there exists an \(n\)-fold computable sequence \((r_{\tm})_{\tm\in\N^n}\)
				of rational numbers such that 
				\begin{align*}
					x_* = \sup_{m_1\in\N}\inf_{m_2\in\N}\sup_{m_3\in\N}\ldots\mima_{m_n\in\N}~ (r_{\tm})
				\end{align*}
				holds true, where \enquote{\(\mima\)} is replaced by \enquote{\(\inf\)} if \(n\) is even and by \enquote{\(\sup\)} if \(n\) is odd.
			\item A number \(x_*\in\R\) satisfies \(x_*\in\Pi_n\) if there exists an \(n\)-fold computable sequence \((r_{\tm})_{\tm\in\N^n}\)
				of rational numbers such that 
				\begin{align*}
					x_* = \inf_{m_1\in\N}\sup_{m_2\in\N}\inf_{m_3\in\N}\ldots\mima_{m_n\in\N}~ (r_{\tm})
				\end{align*}
				holds true, where \enquote{\(\mima\)} is replaced by \enquote{\(\sup\)} if \(n\) is even and by \enquote{\(\inf\)} if \(n\) is odd.
			\item A number \(x_*\in\R\) satisfies \(x_*\in\Delta_n\) if satisfies both \(x_*\in\Sigma_n\) and \(x_*\in\Pi_n\). Hence, we have
				\(\Delta_n = \Sigma_n \cap \Pi_n\). 
		\end{itemize}
	\end{definition}

	\begin{remark}
		A real number \(x\) is computable if and only if it satisfies both \(x \in \Pi_1\) and \(x \in \Sigma_1\). Hence, we have \(\Rc = \Delta_1\). 
	\end{remark}

	Given a set \(A\subseteq \N\), we denote \(x[A] := \sum_{j\in A}\sfrac{1}{2^{(j+1)}}\). \citeauthor{zw01} 
	showed that for all \(n\in\Nplus\), if \(A\) satisfies \(A \in \Sigma_n^0\), \(A \in \Pi_n^0\) or \(A \in \Delta_n^0\), then
	\(x[A]\) satisfies \(x[A] \in \Sigma_n\), \(x[A] \in \Pi_n\) or \(x[A] \in \Delta_n\), respectively.

	Assuming the finiteness of the respective suprema and infima, an \(n\)-fold computable sequence \((x_{\tm})_{\tm\in\N^n}\) 
	of computable numbers is referred to as \(n\)-th order
	lower Zheng-Weihrauch (ZW) description of the real number 
	\begin{align*}
		x_* = \sup_{m_1\in\N}\inf_{m_2\in\N}\sup_{m_3\in\N}\ldots\mima_{m_n\in\N}~ (x_{\tm}),
	\end{align*}
	where \enquote{\(\mima\)} is replaced by \enquote{\(\inf\)} if \(n\) is even and by \enquote{\(\sup\)} if \(n\) is odd.

	Likewise, again assuming the finiteness of the respective suprema and infima, an \(n\)-fold computable sequence 
	\((x_{\tm})_{\tm\in\N^{n}}\) of computable numbers is referred to as \(n\)-th order
	upper Zheng-Weihrauch (ZW) description of the real number 
	\begin{align*}
		x_* = \inf_{m_1\in\N}\sup_{m_2\in\N}\inf_{m_3\in\N}\ldots\mima_{m_n\in\N}~ (x_{\tm}),
	\end{align*}
	where \enquote{\(\mima\)} is replaced by \enquote{\(\sup\)} if \(n\) is even and by \enquote{\(\inf\)} if \(n\) is odd.

	\begin{lemma}\label{lem:LowerZWImpliesPi2}
		If there exists an \(n\)-th order upper ZW description for a number \(x_* \in \R\), then \(x_*\in\Pi_n\) is satisfied.
	\end{lemma}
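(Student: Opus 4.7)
The plan is to convert the given $n$-th order upper ZW description, whose entries $x_\tm$ are computable real numbers, into an $n$-th order upper ZW description whose entries are rational numbers, since such a description is precisely what witnesses $x_* \in \Pi_n$. The mechanism is to exploit the effective rational approximations of the $x_\tm$ guaranteed by Definition~\ref{def:CSCN} and to absorb the approximation index into the innermost quantifier $\mima_{m_n}$, which is either $\sup$ or $\inf$ depending on the parity of $n$.

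First I would invoke Definition~\ref{def:CSCN} to obtain an $(n+1)$-fold computable sequence $(r_{\tm\circ s})_{\tm\circ s\in\N^{n+1}}$ of rationals together with a total recursive modulus $\xi$ such that $|x_{\tm} - r_{\tm\circ s}| < 1/2^M$ holds whenever $s \geq \xi(\langle\tm\circ M\rangle_{n+1})$. Setting $s^*(\tm,M) := \xi(\langle\tm\circ M\rangle_{n+1})$ and defining the shifted sequences
\begin{align*}
r'_{\tm\circ M}  &:= r_{\tm\circ s^*(\tm,M)} - \tfrac{1}{2^M}, \\
r''_{\tm\circ M} &:= r_{\tm\circ s^*(\tm,M)} + \tfrac{1}{2^M},
\end{align*}
I would verify that $r'_{\tm\circ M} < x_{\tm} < r''_{\tm\circ M}$ holds \emph{strictly} for every $M\in\N$, while both $|x_\tm - r'_{\tm\circ M}|$ and $|x_\tm - r''_{\tm\circ M}|$ are bounded by $2/2^M$ and therefore tend to zero. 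Together these two observations yield $\sup_{M\in\N} r'_{\tm\circ M} = x_\tm = \inf_{M\in\N} r''_{\tm\circ M}$.

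Next I would substitute the appropriate one-sided representation into the upper ZW expression for $x_*$, choosing $r'$ when the innermost $\mima$ is $\sup$ (i.e.\ when $n$ is even) and $r''$ when it is $\inf$ (i.e.\ when $n$ is odd). In the even case this yields
\begin{align*}
x_* = \inf_{m_1}\sup_{m_2}\cdots\sup_{m_n}\sup_{M\in\N} r'_{\tm\circ M} = \inf_{m_1}\sup_{m_2}\cdots\sup_{(m_n,M)\in\N^2} r'_{\tm\circ M},
\end{align*}
after which the Cantor pairing bijection $\amalg_2 : \N \tot \N^2$ lets me merge the pair $(m_n,M)$ into a single index $\tilde m_n$ by setting $\tilde r_{m_1,\ldots,m_{n-1},\tilde m_n} := r'_{m_1,\ldots,m_{n-1},[\amalg_2(\tilde m_n)]_1,[\amalg_2(\tilde m_n)]_2}$. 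The odd case is fully symmetric. Computability of $(\tilde r_\tm)_{\tm\in\N^n}$ as an $n$-fold rational sequence follows because $(r_{\tm\circ s})$, $\xi$, the Cantor pairing together with its projections, and rational arithmetic are all recursive, and $\Ca$ is closed under composition.

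The main technical subtlety is ensuring that the shift by $\pm 1/2^M$ yields a \emph{strict} one-sided bound on $x_\tm$: without strictness one would only obtain $\sup_M r'_{\tm\circ M} \leq x_\tm$, which is insufficient to conclude equality. The shift of exactly $1/2^M$ precisely cancels the worst-case modulus-based error of $1/2^M$ while still permitting $r'_{\tm\circ M}$ to approach $x_\tm$ arbitrarily closely from below, and symmetrically $r''_{\tm\circ M}$ to approach $x_\tm$ arbitrarily closely from above. Beyond this observation, every remaining step is a routine manipulation of computable sequences.
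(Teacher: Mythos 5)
Your proof is correct. You do use the same basic toolbox as the paper — extracting a standard description of the computable-number sequence, shifting by $\pm 2^{-M}$ to get a one-sided rational bound, and folding the precision index $M$ away via the Cantor pairing — but you deploy it at the \emph{opposite} end of the quantifier block. The paper encodes the precision index into the \emph{outermost} index $m_1$ (splitting $m_1$ as $(\varpi_1(m_1),\varpi_2(m_1))$ and always shifting up by $2^{-\varpi_2(m_1)}$), which puts $r_\tm$ above $x_{g'(\tm)}$ pointwise, yields $x_* \le \inf\sup\cdots(r_\tm)$ immediately, and then closes the gap with an $\epsilon$-argument that exploits bijectivity of the pairing. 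You instead merge $M$ with the \emph{innermost} index $m_n$, choose the sign of the $2^{-M}$ shift according to the parity of $n$ so that $\sup_M r'_{\tm\circ M} = x_\tm$ (or $\inf_M r''_{\tm\circ M} = x_\tm$) holds exactly for each fixed $\tm$, and then the equality $x_* = \inf\sup\cdots\mima(\tilde r_\tm)$ follows purely by iterating suprema/infima with no $\epsilon$-argument. The trade-off is clear: your route is more local and avoids the global limit argument entirely, at the cost of a parity case distinction in the choice of $r'$ versus $r''$; the paper's route avoids the case split because the outermost quantifier of a $\Pi_n$ form is always $\inf$, but must then argue the reverse inequality globally. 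One small remark: you emphasize the \emph{strictness} of $r'_{\tm\circ M} < x_\tm$ as the crux, but what actually matters is only the one-sidedness $r'_{\tm\circ M} \le x_\tm$ together with $|x_\tm - r'_{\tm\circ M}|\to 0$; strict versus non-strict is immaterial for a supremum.
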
\begin{proof}	 
		Consider the standard description \(\big((\rp_{\ts})_{\ts\in\N^{n+1}},\xi\big)\) of an \(n\)-th order upper ZW 
		description \((x_\tm)_{\tm\in\N^n}\) of \(x_*\) and define
		\begin{align*}
			g'(\tm)	:&= \big(\varpi_1(m_1),m_2,\ldots,m_n\big), \vphantom{\frac{1}{2^{\varpi_2(m_1)}}}\\
			g(\tm)  :&= g'(\tm)\circ\xi\big(g'(\tm),\varpi_2(m_1)\big), \vphantom{\frac{1}{2^{\varpi_2(m_1)}}}\\
			r_{\tm} :&= \rp_{g(\tm)} + \frac{1}{2^{\varpi_2(m_1)}},
		\end{align*}
		for all \(\tm\in\N^n\).
		Then, \((r_{\tm})_{\tm\in\N^{n}}\) is an \(n\)-fold computable sequence of rational numbers that satisfies 
		\(r_{\tm} \geq x_{g'(\tm)}\) for all \(\tm\in\N^n\), and hence
		\begin{align}
			x_* &\leq \sup_{m_2\in\N}\inf_{m_3\in\N}\ldots\mima_{m_n\in\N} \big(x_{g'(\tm)}\big) \nonumber\\ 
					&\leq	\sup_{m_2\in\N}\inf_{m_3\in\N}\ldots\mima_{m_n\in\N} \big(r_{\tm}\big)
					\label{eq:lem:LowerZWImpliesPi2::I}
		\end{align}
		for all \(m_1\in\N\) as well. Since \(\big((\rp_{\ts})_{\ts\in\N^{n+1}},\xi\big)\) 
		is a standard description of \((x_\tm)_{\tm\in\N}\), we have
		\begin{align*}
			&\bigg|\sup_{m_2\in\N}\inf_{m_3\in\N}\ldots\mima_{m_n\in\N}\left(x_{\tm}\right)\ldots\\
						&\qquad \ldots - \sup_{m_2\in\N}\inf_{m_3\in\N}\ldots\mima_{m_n\in\N}\left(\rp_{\tm\circ\xi((\tm),M)}\right)\bigg| < \frac{1}{2^M}
		\end{align*}
		for all \(m_1,M\in\N\). Since furthermore, \((x_\tm)_{\tm\in\N}\) is an upper ZW description of \(x_*\), we conclude that for \(\epsilon > 0\) arbitrary,
		there exist \(l,k\in\N\) such that for \(\tm' := (l,m_2,\ldots,m_n)\in\N^{n}\), 
		\begin{align*}
			x_* + \epsilon \geq \sup_{m_2\in\N}\inf_{m_3\in\N}\ldots\mima_{m_n\in\N}\left(\rp_{\tm'\circ\xi((\tm'),k)}\right)
			-\frac{1}{2^k}
		\end{align*}
		holds true. The mapping \(\langle\cdot\rangle_2^{-1} : \N \tot \N^2\) is bijective. Hence, for all \(l,k\in\N\), there exists
		\(m_1\in\N\) such that \(\langle m_1\rangle_2^{-1} = (\varpi_1(m_1),\varpi_2(m_1)) = (l,k)\) holds true. Consequently, there exists \(m_1\in\N\)
		such that 
		\begin{align}
			x_* + \epsilon \geq \sup_{m_2\in\N}\inf_{m_3\in\N}\ldots\mima_{m_n\in\N} \big(r_{\tm}\big)
			\label{eq:lem:LowerZWImpliesPi2::II}
		\end{align}
		holds true. Joining \eqref{eq:lem:LowerZWImpliesPi2::I} and \eqref{eq:lem:LowerZWImpliesPi2::II}, we conclude that 
		\begin{align*}
			x_* + \epsilon \geq \inf_{m_1\in\N}\sup_{m_2\in\N}\sup_{m_3\in\N}\ldots\mima_{m_n\in\N} \big(r_{\tm}\big) \geq x_*
		\end{align*} 
		is satisfied. Since \(\epsilon > 0\) was chosen arbitrarily, the claim follows.
	\end{proof}

	\begin{remark}
		From the definition of computable sequences of computable numbers, it is imminent that every computable sequence of rational numbers
		is also a computable sequence of computable numbers. Hence, a converse to Lemma \ref{lem:LowerZWImpliesPi2} is straightforward to prove,
		and we conclude that if a number \(x_*\in\R\) satisfies \(x_* \in \Pi_n\), there exists an \(n\)-th order upper ZW description for \(x_*\). 
		Likewise, if a number \(y_*\in\R\) satisfies \(y_* \in \Sigma_n\), there exists an \(n\)-th order lower ZW description for \(y_*\).
		Furthermore, the line of reasoning presented in the proof of Lemma \ref{lem:LowerZWImpliesPi2} applies analogously 
		to \(n\)-th order lower ZW description of some number \(y_* \in \R\).
		In summary, we arrive at the following: a number \(x_*\in\R\) satisfies \(x_* \in \Pi_n\) if
		and only if there exists an \(n\)-th order upper ZW description for \(x_*\), while a number \(y_*\in\R\) satisfies \(y_* \in \Sigma_n\) if
		and only if there exists an \(n\)-th order lower ZW description for \(y_*\).
	\end{remark}

	The following lemma was stated in a slightly different form by \citeauthor{zw01}. A close examination shows that the corresponding proof is constructive
	and exclusively employs operations that can be computed on a Turing machine. Hence, the derivation provided by \citeauthor{zw01} is sufficient to prove 
	the subsequent version of the lemma.

	\begin{lemma}[{{\cite[Lemma~3.2, p.~55]{zw01}}}]\label{lem:ZhengWeihrauchI}
		Let \((r_{\tm})_{\tm\in\N^2}\) be a computable double sequence of rational numbers such that 
		\(\inf_{m_1\in\N}\sup_{m_2\in\N}(r_{\tm})\) exists. There exists a Turing machine which computes a mapping
		\((r_{\tm})_{\tm\in\N^2} \mapsto (\rp_{m})_{m\in\N}\) such that \((\rp_{m})_{m\in\N}\)
		is a computable sequence of rational numbers and \(\limsup_{m\to\infty} (\rp_{m}) = \inf_{m_1\in\N}\sup_{m_2\in\N}(r_{\tm})\)
		is satisfied. \qed
	\end{lemma}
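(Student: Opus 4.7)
The plan is to construct the target sequence \((\rp_m)_{m\in\N}\) from \((r_{\tm})_{\tm\in\N^2}\) by an explicit, purely algorithmic dovetailing procedure, so that the existence of the claimed Turing machine follows from the closure principles of the previous section. First I would introduce the auxiliary computable double sequence of partial suprema
\begin{align*}
	s_{m_1,m_2} := \max_{k \leq m_2} r_{m_1,k},
\end{align*}
obtained from \((r_{\tm})_{\tm\in\N^2}\) by finitely many rational comparisons. For each fixed \(m_1\), the sequence \((s_{m_1,m_2})_{m_2\in\N}\) is non-decreasing in \(m_2\) with limit \(\sup_{m_2} r_{m_1,m_2}\), so \(\inf_{m_1}\lim_{m_2} s_{m_1,m_2}\) equals the target value \(x_* := \inf_{m_1}\sup_{m_2} r_{m_1,m_2}\).

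Next I would design a stage-based procedure producing \((\rp_m)_{m\in\N}\) that maintains a current \enquote{champion row} and dovetails two activities: committing to the champion by emitting its current partial supremum, and scanning other rows for a strictly smaller approximation which, once confirmed over sufficiently many stages, triggers a switch of champion. Each row must be visited infinitely often so that no row with small true supremum is permanently ignored, while the confirmation schedule must be strict enough that \enquote{slow-growing} rows (ones whose partial suprema approach their limits only belatedly) cannot depress the emitted values below \(x_*\).

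With such a construction in hand, I would verify the two limsup inequalities. For \(\limsup_m \rp_m \leq x_*\): given \(\epsilon > 0\), fix \(N_\epsilon\) with \(\sup_{m_2} r_{N_\epsilon,m_2} < x_* + \epsilon\), and argue that the construction forces all sufficiently late outputs to lie below \(s_{N_\epsilon,m}\) plus a vanishing error, hence below \(x_* + \epsilon\). For \(\limsup_m \rp_m \geq x_*\): extract an infinite subsequence from the commitment phases along which \(\rp_m\) approaches \(x_*\) from above, using that, for the eventually dominant champion, the partial suprema converge to that row's true supremum, which by definition of \(x_*\) cannot lie below \(x_*\).

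Finally, since every step of the procedure consists of finitely many rational arithmetic operations, comparisons, and bounded searches, the procedure is implementable as a recursive transformation on the standard notation for computable double sequences of rationals; by the closure principles summarized in the previous section this yields the required Turing machine. The main obstacle is the second step: a naive definition such as \(\rp_m := \min_{m_1 \leq m} s_{m_1,m}\) can have limsup strictly below \(x_*\) whenever some row attains its supremum only after an enormous delay (say, governed by an Ackermann-like schedule), so the construction must carefully throttle the influence of still-unsettled rows without ever permanently excluding a row. This balancing act is the combinatorial heart of the Zheng--Weihrauch argument invoked in the statement of the lemma.
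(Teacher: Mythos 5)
The paper does not give its own proof of this lemma: the \(\qed\) immediately after the statement, together with the preceding discussion, makes clear that it is a direct citation of Zheng--Weihrauch plus the observation that their construction is effective and hence realizes a Turing machine. Your proposal takes essentially the same route --- you set up the running maxima, describe a dovetailed champion-row search with a confirmation schedule, and then explicitly hand off the combinatorial core of the balancing argument to Zheng--Weihrauch --- so the two are aligned in substance, with your version supplying more of the scaffolding. Your identification of the obstacle is correct and worth emphasizing: the naive diagonal \(\rp_m := \min_{m_1 \leq m} s_{m_1,m}\) really can have \(\limsup_m \rp_m\) strictly below \(\inf_{m_1}\sup_{m_2} r_{\tm}\) when the rows settle arbitrarily late, which is exactly why an adaptive, priority-style schedule is needed. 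One caution on your phrasing: ``each row must be visited infinitely often'' is safe only if it refers to \emph{scanning} the partial suprema of every row unboundedly, not to \emph{emitting} values from every row infinitely often. If the output sequence contains \(s_{m_1,m_2}\) for a fixed \(m_1\) along \(m_2 \to \infty\), then \(\limsup_m \rp_m \geq \sup_{m_2} r_{m_1,m_2}\), which in general strictly exceeds \(\inf_{m_1}\sup_{m_2} r_{\tm}\); the champion index has to leave every row permanently, and timing those departures so that neither the upper nor the lower \(\limsup\) bound is violated is precisely the delicate part you correctly flag and defer to the cited source.
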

	
	In Section \ref{sec:prel_on_TM}, we introduced the totality set \(\At\) and its relation to the family \((D(\en_n))_{n\in\N}\).
	As indicated, these sets will play a role in the context of oracle computations. In the following, we will characterize 
	\(\At\) and \((D(\en_n))_{n\in\N}\) with respect to the Kleene–Mostowski hierarchy.
	
	\begin{lemma}	\label{lem:D_in_Sg1}
		For all \(n\in\N\), we have \(D(\en_n) \in \Sigma_1^0\).
	\end{lemma}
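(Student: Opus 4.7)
The plan is to unpack the definition of $\Sigma_1^0$ and directly construct a witnessing recursive set from the runtime function $\Psi$ introduced in Section \ref{sec:prel_on_TM}. Recall that $A \in \Sigma_1^0$ precisely when there exists a recursive set $G \subseteq \N$ such that, for all $m \in \N$, membership $m \in A$ is equivalent to $\exists k \in \N : \langle k, m \rangle_2 \in G$.

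Fix $n \in \N$. The natural candidate for the witnessing set is
\begin{align*}
  G_n := \{\, j \in \N : \Psi(n, \varpi_2(j), \varpi_1(j)) = 1 \,\},
\end{align*}
so that $\langle k, m \rangle_2 \in G_n$ holds iff $\Psi(n, m, k) = 1$. First I would verify that $G_n$ is recursive: its indicator function is the composition $j \mapsto \Psi(n, \varpi_2(j), \varpi_1(j))$ of the total recursive function $\Psi : \N^3 \to \{0,1\}$ with the recursive projections $\varpi_1, \varpi_2$ and the constant function $j \mapsto n$. Closure of $\Ca$ under composition then yields $\ind_{G_n} \in \Ca_1$, so $G_n$ is recursive in the sense of Section \ref{sec:prel_on_TM}.

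Next I would translate the defining properties of $\Psi$ into the required equivalence. By the second and third bullet points characterising $\Psi$, for every $n, m \in \N$ we have $m \in D(\en_n)$ if and only if there exists $k \in \N$ with $\Psi(n,m,k) = 1$. Substituting the definition of $G_n$, this reads
\begin{align*}
  m \in D(\en_n) \;\Longleftrightarrow\; \exists k \in \N : \langle k, m \rangle_2 \in G_n,
\end{align*}
which is exactly the $\Sigma_1^0$-form from Definition \ref{def:KlMo} with $n=1$ and $\tm = (k)$, completing the argument.

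There is essentially no hard step here: the runtime function $\Psi$ was constructed in Section \ref{sec:prel_on_TM} precisely so that the halting behaviour of $\en_n$ on input $m$ becomes a recursive predicate of $(n,m,k)$, and the only care needed is bookkeeping with the Cantor pairing (ensuring that the order of components in $\langle k, m \rangle_2$ matches the quantifier ordering in Definition \ref{def:KlMo}). This argument is uniform in $n$, which will be relevant later when comparing $D(\en_n)$ to the totality set $\At$ in Lemma \ref{lem:totality_Set_in_Pi2}.
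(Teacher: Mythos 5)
Your proof is correct and matches the paper's argument essentially step for step: you construct the same witnessing recursive set $G_n = \{\,j : \Psi(n,\varpi_2(j),\varpi_1(j))=1\,\}$ (called $G^\Psi_n$ in the paper), apply the same characterization of $D(\en_n)$ via the runtime function $\Psi$, and conclude by unwinding Definition~\ref{def:KlMo}. The only addition is that you spell out the closure-under-composition argument for recursiveness of $G_n$ in a bit more detail, which the paper leaves implicit.
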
\begin{proof}
		Define \(G^\Psi_n := \{m \in \N : \Psi (n, \varpi_2(m), \varpi_1(m)) = 1\}\). 
		Since \(\Psi\) is a total recursive function, the set \(G^\Psi_n\) is recursive. Furthermore, incorporating the definition of \(\Psi\), we have
		\begin{align*}	D(\en_n) 	&=\{ j \in \N : \exists k \in \N : \Psi (n,j,k) = 1\} \\
															&= \{ j \in \N : \exists k \in \N : \langle k, j \rangle_2 \in G^\Psi_n \}.
		\end{align*}
		The claim then follows from Definition \ref{def:KlMo}.
	\end{proof}
	
	\begin{lemma}	\label{lem:totality_Set_in_Pi2}
		The totality set \(\At\) satisfies \(\At \in \Pi_2^0\).
	\end{lemma}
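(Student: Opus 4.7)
The plan is to unfold the definition of $\At$ and match it directly against the $\Pi_2^0$ template from Definition \ref{def:KlMo}. Recall that $\At = \{n\in\N : \forall m\in\N \; \exists k\in\N : \Psi(n,m,k) = 1\}$, so the quantifier structure $\forall\exists$ over a decidable matrix already has the right shape. The only work is to package the matrix as membership in a single recursive set whose argument is the Cantor code of the surrounding tuple.

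First, I would define
\begin{align*}
G := \bigl\{ \ell \in \N : \Psi\bigl([\amalg_3(\ell)]_3,\,[\amalg_3(\ell)]_1,\,[\amalg_3(\ell)]_2\bigr) = 1 \bigr\}.
\end{align*}
The permutation of coordinates aligns the $\Pi_2^0$ variables $(m_1,m_2,j)$ with the argument order $(n,m,k)$ appearing in the definition of $\At$. Next, I would verify that $G$ is recursive: the inverse Cantor pairing $\amalg_3$ and the projections $[\cdot]_k$ are recursive by the remarks in Section \ref{Sec:GoedelCodes}, and $\Psi:\N^3\to\{0,1\}$ is a total recursive function by construction in Section \ref{sec:prel_on_TM}. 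Hence $\ind_G$ is obtained by composition of recursive functions with the decidable test "equals $1$", so $G$ is recursive.

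The remaining step is a direct computation. For any $j\in\N$ and any pair $(m_1,m_2)\in\N^2$, setting $\ell = \langle m_1,m_2,j\rangle_3$ gives $\amalg_3(\ell) = (m_1,m_2,j)$, so $\ell \in G$ if and only if $\Psi(j,m_1,m_2) = 1$. Therefore
\begin{align*}
j \in \At
&\iff \forall m \in \N \; \exists k \in \N : \Psi(j,m,k) = 1 \\
&\iff \forall m_1 \in \N \; \exists m_2 \in \N : \langle m_1, m_2, j \rangle_3 \in G,
\end{align*}
which is exactly the $\Pi_2^0$ condition with the outermost quantifier pattern $\forall\exists$ (and, since $n=2$ is even, the trailing quantifier is $\exists$, as the definition requires). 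This yields $\At \in \Pi_2^0$.

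I do not expect any real obstacle: the proof is essentially bookkeeping, with the only delicate point being the coordinate permutation inside $G$ so that the quantifier order in Definition \ref{def:KlMo} matches the order in Definition \ref{def:totality_set}. A parallel argument (already carried out in Lemma \ref{lem:D_in_Sg1} for $\Sigma_1^0$) shows how to treat the runtime predicate $\Psi$ as a decidable matrix, so this lemma can be viewed as a one-quantifier extension of that result.
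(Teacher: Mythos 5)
Your proof is correct and follows essentially the same route as the paper's: both define the auxiliary recursive set by permuting coordinates of $\amalg_3$ so that $\Psi([\amalg_3(\ell)]_3,[\amalg_3(\ell)]_1,[\amalg_3(\ell)]_2)$ is tested, observe recursivity from the totality of $\Psi$, and then rewrite $\At$ in the $\forall\exists$ form required by Definition~\ref{def:KlMo}. The only difference is that you spell out the coordinate-matching calculation a bit more explicitly; the substance is identical.
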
\begin{proof}
		Again, the claim follows from the existence of a suitable recursive set, which is induced
		by the runtime function \(\Psi\). Define 
		\begin{align*}	G^\Psi := \big\{ m \in \N : \Psi\big([\amalg_3(m)]_3, [\amalg_3(m)]_1, [\amalg_3(m)]_2\big) = 1\big\}.
		\end{align*}
		Since \(\Psi\) is a total recursive function, the set \(G^\Psi\) is recursive. Furthermore,
		by Definition \ref{def:totality_set}, we have
		\begin{align*}	\At &= \{n\in \N : \forall j\in \N : \exists k \in \N : \Psi(n,j,k) = 1\} \\
												&= \{n\in \N : \forall j\in \N : \exists k \in \N : \langle j,k,n \rangle_3 \in G^\Psi\}.
		\end{align*}
		The claim then follows from Definition \ref{def:KlMo}.
	\end{proof}
	
\section{Computable Bandlimited Signals}
\label{sec:ComputableBandlimitedSignals}
	Having introduced a framework for computable analysis, we are now equipped to formalize the
	concept of computable bandlimited signals. Again, for the sake of self-containedness, we repeat some
	of the definitions found in \cite{boche21d_accepted}.
	
	\begin{definition}\label{de:elementary-computable}
		We call a signal \(f\) elementary computable if there exists a natural number \(L\) and a
		sequence of computable numbers \((c_k)_{k=-L}^{L}\) that satisfy
		\begin{align*}
			f(t)
			=
			\sum_{k=-L}^{L} c_k \frac{\sin(\pi(t-k))}{\pi(t-k)} .
		\end{align*}
	\end{definition}
	
	The building blocks of an elementary computable signal are \(\sinc\) functions. 
	Hence, elementary computable signals are exactly those functions that can be represented
	by a finite Shannon sampling series with computable coefficients \((c_k)_{k=-L}^{L}\).
	Note that every elementary computable signal \(f\) is a finite sum of computable
	continuous functions and hence a computable continuous function.
	As a consequence, for every \(t \in \Rc\) the number \(f(t)\) is computable.
	Further, the sum of finitely many elementary computable signals is elementary
	computable, as well as the product of an elementary computable signal with a computable
	number.

	\begin{definition}
		A signal in \(f \in \Bs_{\pi}^p\), \(p \in (1,\infty) \cap \Rc\), is called computable in
		\(\Bs_{\pi}^p\) if there exists a computable sequence of elementary computable signals
		\((f_m)_{m \in \N}\) and a recursive function \(\xi \colon \N \to \N\) such that for all
		\(M \in \N\) we have
		\begin{align*}
			\lVert f - f_{m} \rVert_{\Bs_{\pi}^p}
			\leq
			\frac{1}{2^M}
		\end{align*}
		for all
		\(m \geq \xi(M)\).
		By \(\CBs_{\pi}^p\), \(p \in (1,\infty) \cap \Rc\), we denote the set of all signals in
		\(\Bs_{\pi}^p\) that are computable in \(\Bs_{\pi}^p\).
	\end{definition}
	
	According to this definition we can approximate any signal \(f \in \CBs_{\pi}^p\),
	\(p \in (1,\infty) \cap \Rc\), by an elementary computable signal, where we have an
	``effective'', i.e. 
	computable control of the approximation error.
	For every prescribed approximation error \(\epsilon>0\), \(\epsilon \in \Rc\), we can compute
	a number \(M \in \N\) such that \( M \geq -\log_2(\epsilon)\) holds true. Hence, the approximation error
	\(\lVert f - f_{m} \rVert_{\Bs_{\pi}^p}\) is less than or equal to \(\epsilon\) for all
	\(m \geq \xi(M)\).

	We finally give the definition of computability for bandlimited signals in
	\(\Es_{\pi}\).
	
	\begin{definition}\label{def:CEs_pi}
		We call a signal \(f \in \Es_{\pi}\) a computable bandlimited signal if the coefficients
		\((a_n)_{n \in \N}\) of the Taylor series \eqref{eq:power-series} form a computable
		sequence of computable numbers.
		By \(\CEs_{\pi}\) we denote the set of all signals in \(\Es_{\pi}\) that are computable.
	\end{definition}

	Note that a signal \(f \in \CEs_{\pi}\) is completely determined by the computable
	sequence \((a_n)_{n \in \N}\) of computable numbers according to the representation
	\eqref{eq:power-series}.
	Hence, a program for \((a_n)_{n \in \N}\) gives a complete description of \(f\).
	
	For \(f_1,f_2 \in \CEs_{\pi}\) and \(\alpha_1,\alpha_2 \in \Cc\), we have
	\(\alpha_1 f_1 + \alpha_2 f_2 \in \CEs_{\pi}\), i.e., \(\CEs_{\pi}\) has a linear structure.

	\begin{remark}
		We have \(\CBs_{\pi}^1 \subsetneq \CBs_{\pi}^2 \subsetneq \CEs_{\pi}\),
		which shows that \(\CEs_{\pi}\) is the largest of these three sets.
	\end{remark}

	\begin{remark}
		For \(f \in \Bs_{\pi}^2\), it follows from the properties of the Fourier transform \(\Ft{f}\)
		that the family \((a_n)_{n\in\N}\) of Taylor coefficients satisfies
		\begin{align*}	a_n = \frac{1}{2\pi} \int_{-\pi}^{\pi} \Ft{f}(\omega) (\iu \omega)^n \di{\omega}
		\end{align*}
		for all \(n\in\N\). If \(f\) furthermore satisfies \(f \in \CBs_{\pi}^2\), the Fourier transform \(\Ft{f}\) is a computable signal in
		\(L^2[-\pi,\pi]\) and the family of coefficients \((a_n)_{n \in \N}\) forms a computable sequence of computable numbers.
	\end{remark}
	
	\begin{definition}\label{def:WE_pi}
		Let \(f\) satisfy \(f\in\Es_\pi\). If there exists a computable double sequence \((p_\tm)_{\tm\in\N^2}\) of rational polynomials
		as well as a recursive function \(\xi : \N^2 \rightarrow \N\) such that
		\begin{align*}	\left| f(z) - p_{m_1,m_2}(z) \right| < \frac{1}{2^M}
		\end{align*}
		holds true for all \(z\in\C,m_1,m_2,M \in \N\) that satisfy \(|z| \leq m_2\) and \(m_1 \geq \xi(M,m_2)\), then \(f\) is referred to 
		as \emph{Weierstrass effective}. We denote \(\WEs_\pi\) the set of Weierstrass effective \(\pi\)-bandlimited entire functions.  
	\end{definition}

	\begin{theorem}\label{thm:CEs_equals_WEs}
		The sets \(\CEs_\pi\) and \(\WEs_\pi\) coincide.
	\end{theorem}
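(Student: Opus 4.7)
The plan is to prove both inclusions $\WEs_\pi \subseteq \CEs_\pi$ and $\CEs_\pi \subseteq \WEs_\pi$ via explicit conversion procedures between the two effective representations. For the inclusion $\WEs_\pi \subseteq \CEs_\pi$, I would extract the Taylor coefficients directly from the polynomial approximants via Cauchy's integral formula. Since $a_n = \tfrac{n!}{2\pi\iu}\oint_{|z|=1}f(z)\,z^{-(n+1)}\,\di{z}$, and since the same identity applied to the polynomial $p_{m_1,1}$ yields $n!\,c_n$ with $c_n$ the coefficient of $z^n$ in $p_{m_1,1}$, subtraction gives $|a_n - n!\,c_n| \leq n!\,\sup_{|z|\leq 1}|f(z) - p_{m_1,1}(z)| < n!\cdot 2^{-M'}$ whenever $m_1 \geq \xi(M',1)$. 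Setting $M' := M + \lceil\log_2 n!\rceil + 1$ yields a rational approximation of $a_n$ of precision $2^{-M}$, and since the procedure is uniform in $(n,M)$ and all ingredients are computable, $(a_n)_{n\in\N}$ is a computable sequence of computable numbers.

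For the converse $\CEs_\pi \subseteq \WEs_\pi$, I would construct the polynomial approximants by truncating and rationalizing the Taylor series: set $p_{m_1,m_2}(z) := \sum_{n=0}^{K(m_1,m_2)}\tilde{a}_n z^n/n!$ with $\tilde{a}_n\in\Q$ chosen close enough to $a_n$ to keep the rationalization error $\sum_n|a_n - \tilde{a}_n|/n!\cdot m_2^n$ below $2^{-(m_1+1)}$. The total approximation error then decomposes into this easily controlled term and the truncation tail $\sum_{n>K(m_1,m_2)}|a_n|/n!\cdot m_2^n$.

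The main obstacle lies in selecting the cutoff $K(m_1,m_2)$ effectively so that the truncation tail is bounded by $2^{-(m_1+1)}$. I would combine Cauchy's inequality $|a_n|/n! \leq M(\rho)/\rho^n$, valid for every $\rho > 0$, with the exponential-type growth $|f(z)| \leq C(\epsilon)\,\e^{(\pi+\epsilon)|z|}$ coming from $f \in \Es_\pi$. Optimizing $\rho$ via Stirling's formula yields geometric decay of the individual terms $|a_n|/n!\cdot m_2^n$ once $n$ exceeds a threshold of order $e\pi m_2$, and the tail past that range is then geometrically controlled in terms of an \emph{a priori} unknown constant $C(1)$. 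Since this constant is not directly computable from $(a_n)$, the crux is an effective search procedure that monitors the ratios of consecutive terms $|\tilde{a}_{n+1}|/(n+1)!\,\rho^{n+1}$ to $|\tilde{a}_n|/n!\,\rho^n$ --- which must eventually drop below $\tfrac{1}{2}$ by the exponential-type structure --- and certifies the cutoff once the observed ratios have entered and stabilized in the predicted decay regime. The hypothesis $f \in \Es_\pi$ is essential here, providing the structural guarantee that this regime is reached at a finite, effectively detectable stage.
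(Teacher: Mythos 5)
Your direction $\WEs_\pi \subseteq \CEs_\pi$ is essentially the paper's argument: both extract Taylor coefficients via Cauchy's integral formula applied to the $m_2 = 1$ polynomial approximants and observe that the resulting error is at most $n!\cdot 2^{-M'}$, which is tamed by a precision shift in $M'$. That part is fine.

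The converse direction $\CEs_\pi \subseteq \WEs_\pi$, however, has a genuine gap at the point you yourself identify as ``the crux''. You propose to certify a truncation point $K(m_1,m_2)$ by monitoring consecutive-term ratios and stopping once they have ``entered and stabilized in the predicted decay regime''. No such effective certification exists. First, the hypothesis $f \in \Es_\pi$ gives $\limsup_n |a_n|^{1/n} \leq \pi$ (a root-test condition), which does \emph{not} control the ratios $|a_{n+1}|/|a_n|$: a computable $f \in \CEs_\pi$ may have $a_n = L^n$ at even $n$ and vanishingly small $a_n$ at odd $n$, so the ratios oscillate unboundedly no matter how far out you look. Second, and more fundamentally, even when the observed ratios happen to be small over a long initial stretch, nothing in finite data rules out a single large coefficient appearing later. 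This is exactly the phenomenon exploited in Lemma~\ref{lem:No_An_to_Pm} and Theorem~\ref{thm:No_An_to_Pm} of the paper, which show there is \emph{no} Turing machine converting an $(a_n)$-description into a $((p_{\tm}),\xi)$-description uniformly; any ratio-monitoring or other on-line search procedure would constitute such a machine and would let one decide a non-recursive set, a contradiction.

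The correct fix is to abandon uniformity, which the set-theoretic statement permits. Since $f \in \Es_\pi$, there exists a natural number $L$ with $\sqrt[n]{|a_n|} \leq L$ for all $n$; this $L$ is not computable from $(a_n)$, but for a \emph{fixed} $f$ it is simply a constant. Choose the cutoff $K(m_1,m_2)$ and the precision $N(m_1,m_2)$ for the rational coefficients as explicit arithmetic functions of $m_1,m_2,$ and the hard-coded $L$ (the paper takes $K(M,J) = \max\{8LJ, M+2\}$, $N(M,J) = 2J + M + 2$, and bounds the tail via Stirling). This yields a bona fide computable double sequence of rational polynomials together with a recursive modulus, exhibiting $f \in \WEs_\pi$, without any appeal to an on-line detection scheme.
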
\begin{proof}
		We start by proving that \(f \in \CEs_\pi \Rightarrow f \in \WEs_\pi\) holds true
		for all \(f \in \CEs_\pi\). Let \(f\in\CEs_\pi\) satisfy \eqref{eq:power-series}, where
		\((a_n)_{n\in\N}\) is a computable sequence of computable numbers with standard description
		\(((r_{n,m})_{n,m\in\N},\xi)\). Without loss of generality, assume that \(\xi\) is the identity function, i.e.,
		\((r_{n,m})_{n,m\in\N}\) and \((a_n)_{n\in\N}\) satisfy
		\begin{align*} |a_n -r_{n,m}| < \frac{1}{2^m}
		\end{align*}
		for all \(n,m\in\N\). Since \(f \in \CEs_\pi\) holds true, there exists \(L\in\N\) such that 
		\begin{align*} \sqrt[n]{|a_n|} \leq L
		\end{align*}
		is satisfied for all \(n\in\N\). For \(K,m\in\N,z\in\C\), define
		\begin{align*}		\overline{f}_{K,m}(z) :		&= f(z) - \sum_{n=0}^{K}\frac{r_{n,m}}{n!}{t^n} \\
																								&= \left(\sum_{n=0}^{K} \frac{a_n-r_{n,m}}{n!}t^n\right) 
																									+ \left(\sum_{n=K+1}^{\infty} \frac{a_n}{n!}t^n\right).
		\end{align*}
		The function \(\overline{f}_{K,m}\) characterizes the error that arises when approximating the signal \(f\) by
		a computable polynomial based on the standard description \(((r_{n,m})_{n,m\in\N},\xi)\) of \((a_n)_{n\in\N}\). 
		We will prove the first part of the theorem by deriving a computable upper bound on \(\overline{f}_{K,m}\).
		Set \(\overline{a}_{n,m} := a_n-r_{n,m}\) for all \(n,m\in\N\). We have
		\begin{align} 	\max_{|z| \leq J}\left|\overline{f}_{K,m}(z)\right| 
											&\leq \left(\sum_{n=0}^{K} \frac{|\overline{a}_{n,m}|}{n!}J^n\right) + \left(\sum_{n=K+1}^{\infty} \frac{L^n}{n!}J^n\right)\nonumber \\
											&\leq \left(\sum_{n=0}^{K} \frac{\sfrac{1}{2^m}}{n!}J^n\right) + \left(\sum_{n=K+1}^{\infty} \frac{L^n}{n!}J^n\right)
												\label{eq:thm:CEs_equals_WEs::fooI} \\
											&\leq \frac{1}{2^m}4^J + \sum_{n=K+1}^{\infty} \left(\frac{LJ}{(\sfrac{1}{4})n}\right)^n \nonumber\\
											&= 		2^{2J - m} + \sum_{n=K+1}^{\infty} \left(\frac{4LJ}{n}\right)^n \nonumber\\
											&\leq	2^{2J - m} + \sum_{n=K+1}^{\infty} \bigg(\underbrace{\frac{4LJ}{K+1}}_{=: x(L,J)}\bigg)^n, \nonumber
		\end{align}
		where the bound on the first sum in \eqref{eq:thm:CEs_equals_WEs::fooI} follows from the series expansion of the function \(x \mapsto \e^x\)
		and the bound on the second sum in \eqref{eq:thm:CEs_equals_WEs::fooI} follows from the inequality \(n! \geq (\sfrac{1}{4}\cdot n)^n\), which 
		in turn follows from \emph{Stirling's approximation} for factorials. The remaining sum converges if \(x(L,J)\) is sufficiently close to \(0\).
		In particular, 
		\begin{align*}	\sum_{n=K+1}^{\infty} x^{n} = x^{K+1}\frac{1}{1-x}	
		\end{align*}
		holds true for all \(x\in[0,1)\). Hence, consider \(K\in\N\) such that \(K+1 > 8LJ\) is satisfied. Then, we have
		\begin{align*} \max_{|z| \leq J}\left|\overline{f}_{K,m}(z)\right| \leq 2^{2J-m} + 2^{-K}.
		\end{align*}
		For \(M,J\in\N, z\in\C\), we define
		\begin{align*}		N(M,J) 			:		&= 2J + M + 2, \\
											K(M,J) 			: 	&= \max\{8LJ, M + 2\}, \\
											p_{M,J}(z)	:		&= \sum_{n = 0}^{K(J,M)} \frac{r_{n,N(J,M)}}{n!}z^n.
		\end{align*}
		Then, \((p_{M,J})_{M,J\in\N}\) is a computable family of polynomials such that
		\begin{align*} \left|f(z) - p_{M,J}(z)\right| < \frac{1}{2^{M}}
		\end{align*}
		holds true for all \(M,J\in\N,z\in\C\) that satisfy \(|z| \leq J\). Hence, \(f\) satisfies 
		\(f\in\WEs_\pi\).
		
		It remains to show that \(f \in \WEs_\pi \Rightarrow f \in \CEs_\pi\) holds true for all \(f\in\WEs_\pi\).
		Since \(f\) is an entire function, there exists a unique sequence \((a_n)_{n\in\N}\) of real numbers such that 
		\eqref{eq:power-series} holds true. Hence, we can prove the second part of the theorem by showing that
		\((a_n)_{n\in\N}\) is indeed a computable sequence of computable numbers, whenever \(f\) satisfies \(f\in\WEs_\pi\).
		Observe that by \emph{Cauchy's integral formula}, we have
		\begin{align*} 	\frac{a_n}{n!} = \frac{1}{2\pi\iu} \oint_{|z| = 1} \frac{f(z)}{z^{n+1}}\di{z}.
		\end{align*}
		Assume the pair \(((p_{\tm})_{\tm\in\N^2}),\xi)\) satisfies the requirements of Definition \ref{def:WE_pi}. Furthermore, assume without 
		loss of generality that \(\xi(m_1,m_2) = m_1\) holds true for all \(m_1,m_2\in\N\), 
		i.e., \((p_{\tm})_{\tm\in\N^2}\) satisfies
		\begin{align*} \left|f(z) - p_{m_1,m_2}(z)\right| < \frac{1}{2^{m_1}}
		\end{align*}
		for all \(m_1,m_2\in\N, z\in\C\) that satisfy \(|z| \leq m_2\). In the following, for \(m\in\N\), consider the family \((c_k^m)_{k=0}^{K(m)}\)
		of coefficients of \(p_{m,1}\). That is, we have
		\begin{align*} p_{m,1}(z) = \sum_{k=0}^{K(m)} c_k^m \cdot z^k 
		\end{align*}
		for all \(z\in\C,m\in\N\). Then, for \(m\in\N,z\in\C\), we define
		\begin{align*}		r_{n,m} :	&=  \frac{n!}{2\pi\iu} 	\oint_{|z| = 1} \frac{p_{m,1}(z)}{z^{n+1}}\di{z} = c_n^m.														
		\end{align*}	
		We now show that the sequence \((r_{n,m})_{m\in\N}\) converges effectively towards \(a_n\) for all \(n\in\N\), with
		\(|a_n - r_{n,m}| \leq n!\cdot \sfrac{1}{2^m}\) satisfied for all \(m\in\N\). We have
		\begin{align*}
			|a_n - r_{n,m}| 					&\leq n! 	\int_{-\sfrac{1}{2}}^{\sfrac{1}{2}}\left| \frac{f\big(\e^{i2\pi \phi}\big) - p_{m,1}\big(\e^{i2\pi \phi}\big)}{\e^{i2\pi \phi(n+1)}}\right|\di{\phi} \\
																&< 		n! 	\int_{-\sfrac{1}{2}}^{\sfrac{1}{2}} \sfrac{1}{2^m}\di{\phi} \\
																&= \frac{n!}{2^m},
		\end{align*}
		which proves the assertion made above. Hence, \((a_n)_{n\in\N}\) is a computable sequence of computable numbers, making \(f\) and element of \(\CEs_\pi\). 
	\end{proof}

	\begin{corollary}
		If \(f\) satisfies \(f\in\CEs_\pi\), then \(f\) is Markov computable. 
	\end{corollary}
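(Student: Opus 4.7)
The approach is to invoke Theorem~\ref{thm:CEs_equals_WEs}, which identifies $\CEs_\pi$ with $\WEs_\pi$ and therefore provides, for every $f\in\CEs_\pi$, a computable double sequence $(p_{\tm})_{\tm\in\N^2}$ of rational polynomials together with a recursive modulus $\xi\colon\N^2\to\N$ such that $\lvert f(z)-p_{m_1,m_2}(z)\rvert<1/2^M$ whenever $\lvert z\rvert\leq m_2$ and $m_1\geq\xi(M,m_2)$. Markov computability of $f$ amounts to exhibiting a Turing machine that turns a standard description of an input $z\in\Cc$ together with a target accuracy $M\in\N$ into a Gaussian-rational approximation of $f(z)$ within $1/2^M$.

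The scheme I would carry out proceeds in three steps. First, from the standard description of $z$ one computes an integer $J\in\N$ with $\lvert z\rvert\leq J$; evaluating the sequence of rational approximations at the given modulus of convergence and rounding up yields such a $J$ recursively. Second, one selects the polynomial $p:=p_{\xi(M+1,J),\,J}$, which by the Weierstrass-effective bound satisfies $\lvert f(z)-p(z)\rvert<1/2^{M+1}$. Third, one computes a Gaussian-rational approximation $\widetilde{p}$ of $p(z)$ with $\lvert p(z)-\widetilde{p}\rvert<1/2^{M+1}$ and outputs $\widetilde{p}$; the triangle inequality then delivers $\lvert f(z)-\widetilde{p}\rvert<1/2^M$, as required.

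The main technical work lies in the third step, namely, evaluating the rational polynomial $p$ at the computable complex argument $z$. Since $p$ has computable degree $K$ and computable rational coefficients $(c_j)_{j=0}^K$, the elementary bound
\begin{align*}
\lvert p(z)-p(q)\rvert\leq\biggl(\sum_{j=1}^{K}j\,\lvert c_j\rvert\,J^{j-1}\biggr)\lvert z-q\rvert
\end{align*}
holds for every $q\in\C$ with $\lvert q\rvert\leq J$; using the modulus of convergence of the standard description of $z$, one recursively locates a Gaussian-rational $q$ close enough to $z$ for the right-hand side to fall below $1/2^{M+1}$, and $p(q)$ itself is produced by finitely many rational arithmetic operations. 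The remaining obstacle is purely bookkeeping: one must confirm that extracting the coefficients and the degree from the code of $(p_{\tm})_{\tm\in\N^2}$, composing with $\xi$ and with the modulus associated to $z$, and performing the ensuing rational arithmetic are all recursive operations on the input codes. This fits squarely into the rule of thumb stated at the end of Section~5, so the composite map from a code for $z$ to a code for $f(z)$ is recursive and $f$ is Markov computable.
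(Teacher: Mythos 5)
Your argument is correct and is exactly the route the paper leaves implicit: the corollary sits directly after Theorem~\ref{thm:CEs_equals_WEs}, and the intended justification is that a Weierstrass-effective description yields, for any bound on $\lvert z\rvert$, a rational polynomial approximant whose evaluation at a Gaussian-rational approximation of $z$ produces the required output with controlled error. The bookkeeping you spell out (computing the bound $J$, the Lipschitz estimate for the polynomial, and the recursiveness of coefficient extraction and rational arithmetic) is precisely what the paper suppresses.
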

	
	In other words, if \(f\) satisfies \(f\in\CEs_\pi\), there exists a Turing machine \(\TM_f\) that computes the mapping \(x\mapsto f(x)\) for
	\(x\in\Rc\). Hence, for all \(x\in\Rc\), we have \(f(x)\in\Rc\).
	
	While \(\CEs_\pi = \WEs_\pi\) holds true in set-theoretic terms, both \(\CEs_\pi\) and \(\WEs_\pi\) have and individual structure in the sense of
	computability, which we will discuss in the following.
	
	From the proof of Theorem \ref{thm:CEs_equals_WEs}, we conclude that there exists a Turing machine \(\TM\) which, given
	any description \(((p_{\ts})_{\ts\in\N^2},\xi') \describes f\), computes a mapping
	\begin{align*}	((p_{\ts})_{\ts\in\N^2},\xi') \mapsto (a_{n})_{n\in\N},
	\end{align*}
	such that \((a_{n})_{n\in\N} \describes f\) holds true.

	The remainder of this section is dedicated to proving that the converse is not satisfied.
	That is, there does not exist a Turing machine that transforms any description \((a_n)_{n\in\N} \describes f\)
	into a corresponding description \(((p_{\ts})_{\ts\in\N^2},\xi') \describes f\). Note that since \(f\) is bandlimited,
	there exists a number \(L\in\N\) such that
	\begin{align*} \sqrt[n]{|a_n|} \leq L 
	\end{align*}
	holds true for all \(n\in\N\). However, no such \(L\) can be determined 
	algorithmically from the sequence \((a_n)_{n\in\N}\). This observation will play a crucial role in the subsequent lemma. 
	
	\begin{lemma}\label{lem:No_An_to_Pm}
		There exists a sequence \((f_m)_{m\in\N}\) of signals in \(\Es_\pi\) that simultaneously satisfies the following:
		\begin{itemize}
			\item[(a)] For all \(m\in\N\), we have \(f_m \in \CEs_\pi\) (and, consequently, \(f_m \in \WEs_\pi\) as well).
			\item[(b)] There exists a computable double sequence \((a_{n,m})_{n,m\in\N}\) of computable numbers such that
				\((a_{n,m})_{n\in\N} \describes f_m\) in the sense of Definition \ref{def:CEs_pi} holds true for all \(m\in\N\).
			\item[(c)] There does \emph{not} exist a pair \(((p_{m,\ts})_{m\in\N,\ts\in\N^2},\xi'(m,\cdot,\cdot))\) such that
				\(((p_{m,\ts})_{\ts\in\N^2},\xi'(m,\cdot,\cdot)) \describes f_m\) in the sense of Definition \ref{def:WE_pi} holds true for all \(m\in\N\).
		\end{itemize}
	\end{lemma}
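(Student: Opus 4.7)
The plan is to encode a recursively enumerable, non-recursive set $G \subseteq \N$---for instance, the diagonal halting set $\{m : m \in D(\en_m)\}$---into the Taylor coefficients of $f_m$ so that each $f_m$ is either the zero function (if $m \notin G$) or a single monomial whose degree equals the halting time $k(m) := \min\{n : \Psi(m,m,n)=1\}$ (if $m \in G$). Concretely, let $h(m,n) := 1$ whenever $\Psi(m,m,n) = 1$ and $\Psi(m,m,j) = 0$ for all $j < n$, and $h(m,n) := 0$ otherwise, and set
\begin{align*}
	a_{n,m} := C \cdot (n+1)^n \pi^n \cdot h(m,n),
\end{align*}
with $C := 100$. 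For $m \in G$, the series \eqref{eq:power-series} collapses to the monomial $f_m(z) = C\,(k(m)+1)^{k(m)}\pi^{k(m)}\, z^{k(m)} / k(m)!$; for $m \notin G$, $f_m \equiv 0$. In either case $f_m$ is a polynomial, hence trivially in $\Es_0 \subseteq \Es_\pi$.

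First I would discharge (a) and (b). Property (a) is immediate, since every $f_m$ is a polynomial with computable coefficients, so $f_m \in \CEs_\pi$. For (b), the predicate $h(m,n)$ is uniformly decidable by inspecting the finitely many values $\Psi(m,m,j)$ for $j = 0, \ldots, n$, while $(n+1)^n \pi^n$ is uniformly computable as a real via the standard description of $\pi$. Combining these yields a standard description of $(a_{n,m})_{n,m \in \N}$ as a computable double sequence of computable numbers in the sense of Definition~\ref{def:CSCN}.

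The substantive work is (c). Suppose for contradiction that a uniform Weierstrass description $((p_{m,\ts})_{\ts}, \xi'(m,\cdot,\cdot))$ for $(f_m)_{m \in \N}$ exists. Evaluating the rational polynomial $p_{m, (\xi'(m,0,1),\, 1)}$ at $z = 1$ yields a rational number $\tilde f_m$ that is uniformly computable in $m$ and satisfies $|\tilde f_m - f_m(1)| < 1$ by Definition~\ref{def:WE_pi}. A short Stirling-type estimate shows that the ratio $(k+1)^k \pi^k / k!$ is bounded below by $1$ and is strictly increasing in $k \geq 0$, so $f_m(1) \geq C$ whenever $m \in G$, whereas $f_m(1) = 0$ whenever $m \notin G$. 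With $C = 100$, the predicate ``$\tilde f_m \geq C/2$'' therefore gives a total recursive characteristic function of $G$, contradicting the non-recursiveness of $G$. I anticipate the main subtlety to lie in calibrating this gap uniformly across all halting times $k(m)$: the factor $(n+1)^n \pi^n$ (rather than a simpler choice such as $\pi^n$) is chosen precisely so that the lower bound $f_m(1) \geq C$ survives the edge case $k(m) = 0$ and every value in between.
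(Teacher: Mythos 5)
Your argument is correct and pursues the same essential strategy as the paper's proof: engineer a family of polynomial signals \((f_m)_{m\in\N}\) (so \(\bw(f_m)=0\) for all \(m\)) whose Taylor coefficients form a computable double sequence of computable numbers, yet whose point values \(f_m(1)\) exhibit a gap that encodes membership in a non-recursive recursively enumerable set; a uniform Weierstrass description, which would approximate \(f_m(1)\) to any prescribed precision uniformly in \(m\), would then yield a decision procedure for that set and hence a contradiction. Where your construction reads a bit leaner is in discharging property (b). By building \(a_{n,m}\) from the \emph{total} recursive predicate \(h(m,n)\) (``\(n\) is exactly the first step at which \(\Psi(m,m,\cdot)=1\)'') times a simple computable factor, you make \(a_{n,m}\) directly computable in \((n,m)\), so a standard description in the sense of Definition~\ref{def:CSCN} is immediate. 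The paper instead defines its coefficients through a supremum over a third index, exhibits an approximating rational triple sequence, and carries out a three-way case analysis to verify the recursive modulus \(\xi(m_1,m_2,M)=M+1\)---work your variant avoids entirely. The price is a small analytic estimate, namely \((k+1)^k\pi^k/k!\geq 1\) for all \(k\geq 0\), to keep the gap at \(z=1\) from collapsing as the halting time grows; the paper sidesteps this by choosing \(f_m\) to be a partial geometric sum of \(2^{k}\) terms each equal to \(2^{-k}\), which gives \(f_m(1)\in\{0,1\}\) with no estimate needed. Both routes are sound and the overall reduction is identical.
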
\begin{proof}
		Let \(A\subset \N\) a be a non-recursive set such that there exists \(g\in\Cas{1}\) that satisfies \(D(g) = A\), i.e.,
		\(A\) is recursively enumerable. Fix any \(n\in\N\) such that \(D(\en_n) = A\) holds true and consider  
		the recursive function \(h:\N^2 \rightarrow \N\) that satisfies 
		\begin{align*} 	h(m,k) = 	\begin{cases}	\min \{l : \Phi(n,m,l) = 1\}, 	&\text{if}~ \Phi(n,m,k) = 1, \\
													k,								&\text{otherwise}.
									\end{cases}
		\end{align*}
		Using this function, define the computable triple sequence \((r_\tm)_{\tm\in\N^3}\) of rational numbers by 
		\begin{align*}	r_\tm :=	\begin{cases} 	2^{-h(m_2,m_3)},	&\text{if}~	0\leq m_1\leq 2^{h(m_2,m_3)} -1, \\
													0,								&\text{otherwise},
									\end{cases}
		\end{align*}
		for all \(\tm\in\N^3\). Next, we introduce the sequence \((a_{\tm})_{\tm\in\N^2}\) of 
		real numbers defined by 
		\begin{align*} 	a_{m_1,m_2}	:=		\begin{cases}	2^{-k},		&\text{if}~\sup_{m_3\in\N} h(m_2,m_3) = k \\
																		&\text{and}~0\leq m_1\leq 2^{k} -1, \\
															0,			&\text{otherwise},
											\end{cases}			
		\end{align*}
		which we will show to be the effective limit of \((r_\tm)_{\tm\in\N^3}\) in \(m_3\) for all \(m_1,m_2\in\N\). 
		In other words, we want to show that \((a_{\tm})_{\tm\in\N^2}\) is a computable double sequence of computable numbers
		such that for some recursive function \(\xi : \N^3 \rightarrow \N\), the pair \(((r_{\tm})_{\tm\in\N^3},\xi)\) is a standard description
		of \((a_{\tm})_{\tm\in\N^2}\). The fact that 
		\begin{align*} \lim_{m_3\to\infty} r_{m_1,m_2,m_3} = a_{m_1,m_2}
		\end{align*}
		holds true for all \(m_3\in\N\) follows directly from
		the definitions of \((r_\tm)_{\tm\in\N^3}\) and \((a_{\tm})_{\tm\in\N^2}\). It thus only remains to show
		that the convergence is effective for all \(m_1,m_2\in\N\). In particular, we want to show that the function 
		\(\xi: (m_1,m_2,M) \mapsto M + 1\) satisfies
		\begin{align}\label{eq:lem:No_An_to_Pm::fooI}
			\bigg| \Big(\lim_{m_3\to\infty}r_{m_1,m_2,m_3}\Big) - r_{m_1,m_2,m_3} \bigg| < \frac{1}{2^M}
		\end{align}
		for all \(m_1,m_2,m_3\in\N\) that satisfy \(m_3 \geq \xi(m_1,m_2,M)\).
		Consider the following case distinction:
		\begin{itemize}
			\item[1.] Fix \(m_1,m_2\in\N\) and assume that there exists \(k\in\N\) such that \(\sup_{m_3\in\N} h(m_2,m_3) = k\) and \(2^k-1 < m_1\) hold true.
				Then, \(r_{m_1,m_2,m_3} = 0\) holds true for all \(m_3\in\N\). Hence, \eqref{eq:lem:No_An_to_Pm::fooI} is satisfied trivially.
			\item[2.] Fix \(m_1,m_2\in\N\) and assume that there exists \(k\in\N\) such that \(\sup_{m_3\in\N} h(m_2,m_3) = k\) and \(2^k-1 \geq m_1\) hold true.
				Then, for all \(m_3\in\Nplus\) that satisfy \(m_3 < \log_2 (m_1 + 1)\), we have
				\begin{align*} \big|a_{m_1,m_2} - r_{m_1,m_2,m_3}\big| = \big| 0 - 2^{-k} \big| \leq 2^{-(m_3-1)}.
				\end{align*}
				For all \(m_3\in\Nplus\) that satisfy \(\log_2 (m_1 + 1) \leq m_3 \leq k\), we have
				\begin{align*} \big|a_{m_1,m_2} - r_{m_1,m_2,m_3}\big| = \big| 2^{-m_3} - 2^{-k} \big| < 2^{-(m_3-1)}.
				\end{align*}
				For all \(m_3\in\Nplus\) that satisfy \(k < m_3\), we have
				\begin{align*} \big|a_{m_1,m_2} - r_{m_1,m_2,m_3}\big| = \big| 2^{-k} - 2^{-k} \big| < 2^{-(m_3-1)}.
				\end{align*}
				Hence, \eqref{eq:lem:No_An_to_Pm::fooI} is satisfied for all \(m_3\in\Nplus\). 
			\item[3.] Fix \(m_1,m_2\in\N\) and assume that there does not exist \(k\in\N\) such that \(\sup_{m_3\in\N} h(m_2,m_3) = k\) holds true.
				Then, for all \(m_3\in\Nplus\) that satisfy \(m_3 < \log_2 (m_1 + 1)\), we have
				\begin{align*} \big|a_{m_1,m_2} - r_{m_1,m_2,m_3}\big| = \big| 0 - 0 \big| < 2^{-(m_3-1)}.
				\end{align*}
				For all \(m_3\in\Nplus\) that satisfy \(\log_2 (m_1 + 1) \leq m_3\), we have
				\begin{align*} \big|a_{m_1,m_2} - r_{m_1,m_2,m_3}\big| = \big| 2^{-m_3} - 0 \big| < 2^{-(m_3-1)}.
				\end{align*}
				Hence, \eqref{eq:lem:No_An_to_Pm::fooI} is satisfied for all \(m_3\in\Nplus\). 
		\end{itemize}
		We conclude that \eqref{eq:lem:No_An_to_Pm::fooI} is satisfied
		for all \(m_1,m_2,m_3\in\N\) that satisfy \(m_3 \geq M = \xi(m_1,m_2,M)\). Hence, the pair \(((r_{\tm})_{\tm\in\N^3},\xi)\) is a standard description
		of the computable double sequence \((a_{\tm})_{\tm\in\N^2}\) of computable numbers. Consider now the sequence \((\ap_{\tm})_{\tm\in\N^2}\)
		defined by 
		\begin{align*} \ap_{m_1,m_2} := m_1!\cdot a_{m_1,m_2}
		\end{align*}
		for all \(m_1,m_2\in\N\). Since \((a_{\tm})_{\tm\in\N^2}\) is a computable double sequence of computable numbers, so is \((\ap_{\tm})_{\tm\in\N^2}\).
		Hence, the sequence \((f_m)_{m\in\N}\) defined by
		\begin{align*}	f_m(z) := \sum_{n=0}^{\infty} \frac{\ap_{n,m}}{n!}z^n = \sum_{n=0}^{\infty} a_{n,m}z^n,	\qquad z\in\C,
		\end{align*}
		is a computable sequence of entire functions in the sense of Definition \ref{def:CEs_pi}, i.e., \((\ap_{\tm})_{\tm\in\N^2} \describes (f_m)_{m\in\N}\). 
		Furthermore, we have \(\bw(f_m) = 0\) for all \(m\in\N\). Thus, \((f_m)_{m\in\N}\) satisfies
		\(f_m\in\CEs_\pi\) for all \(m\in\N\). Defining \(\sup_{l\in\N}h(m,l) =: k(m)\), we have
		\begin{align}	f_m(z) = \sum_{n = 0}^{2^{k(m)}-1} \frac{1}{2^{k(m)}}z^n, \qquad z\in\C
									\label{eq:lem:No_An_to_Pm::fooII}
		\end{align}
		for all \(m\in\N\) that satisfy \(m\in A\), while for all \(m\in\N\) that satisfy \(m\notin A\), we have
		\begin{align*} f_m(z) = 0, \qquad z\in\C. 	
		\end{align*}
		In particular, as follows from \eqref{eq:lem:No_An_to_Pm::fooII} by direct calculation, we obtain that
		\begin{align}	\label{eq:lem:No_An_to_Pm::fooIII} 
									f_m(1) &= \mathds{1}_{A}(m)
		\end{align}
		holds true for all \(m\in\N\).
		We now show that \((f_m)_{m\in\N}\) is not a computable sequence in the sense of Definition \ref{def:WE_pi}. That is,
		there does not exist a pair \(((p_{m,\ts})_{m\in\N,\ts\in\N^2},\xi'(m,\cdot,\cdot))\) consisting of a
		computable triple sequence \((p_{m,\ts})_{m\in\N,\ts\in\N^2}\) of rational polynomials and a recursive function \(\xi:\N^3\rightarrow \N\)
		such that \(((p_{m,\ts})_{m\in\N,\ts\in\N^2},\xi'(m,\cdot,\cdot)) \describes (f_m)_{m\in\N}\) holds true.
		We show this claim by contradiction: assume there exists a pair \(((p_{m,\ts})_{m\in\N,\ts\in\N^2},\xi'(m,\cdot,\cdot))\) such that
		\(((p_{m,\ts})_{m\in\N,\ts\in\N^2},\xi'(m,\cdot,\cdot)) \describes (f_m)_{m\in\N}\) holds true. Since \((p_{m,\ts})_{m\in\N,\ts\in\N^2}\) is a computable sequence
		of polynomials, there exists a Turing machine \(\TM\) that computes the mapping 
		\begin{align*} m \mapsto p_{m,\xi'(m,\sfrac{1}{2},1),1}(1).
		\end{align*}
		Then, for all \(m\in\N\), this Turing machine satisfies 
		\begin{align} 	\label{eq:lem:No_An_to_Pm::fooIV} 
										\left| f_m(1) - \TM(m) \right| < \frac{1}{2}. 
		\end{align}
		Since \(\TM\) furthermore satisfies \(\TM(m) \in \Q\) for all \(m\in\N\), there exists a recursive function \(g:\N\rightarrow \{0,1\}\)
		that satisfies 
		\begin{align*}	g(m) = 	\begin{cases}		1, &\text{if}~ \sfrac{1}{2} < \TM(m),\\
													0, &\text{otherwise}.
								\end{cases}
		\end{align*}
		Incorporating \eqref{eq:lem:No_An_to_Pm::fooIII} and \eqref{eq:lem:No_An_to_Pm::fooIV}, we obtain \(g = \mathds{1}_{A}\), which, since
		\(g\) is a recursive function, is a contradiction to the non-recursivity of \(A\).
	\end{proof}
	
	\begin{theorem}\label{thm:No_An_to_Pm}
		There does \emph{not} exist a Turing machine that computes a mapping 
		\((a_n)_{n\in\N} \mapsto ((p_{\tm})_{\tm\in\N^2},\xi)\), 
		such that \(((p_{\tm})_{\tm\in\N^2},\xi)\describes f\) holds true for every signal \(f\in\CEs_\pi\) 
		and every description \((a_n)_{n\in\N}\describes f\).
	\end{theorem}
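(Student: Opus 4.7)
The plan is to prove the theorem by contradiction, using Lemma \ref{lem:No_An_to_Pm} as the essential ingredient. Assume for the sake of contradiction that a Turing machine \(\TM\) with the asserted property exists. The strategy is to apply \(\TM\) uniformly to the sequence \((f_m)_{m\in\N}\) furnished by the lemma and thereby contradict property (c) of that lemma.

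Concretely, let \((f_m)_{m\in\N}\) and \((a_{n,m})_{n,m\in\N}\) be as in Lemma \ref{lem:No_An_to_Pm}. By property (b), for every \(m\in\N\) the single-variable sequence \((a_{n,m})_{n\in\N}\) is a description of \(f_m\) in the sense of Definition \ref{def:CEs_pi}. A standard description of the double sequence \((a_{n,m})_{n,m\in\N}\) is given by a computable triple sequence of rationals \((r_{n,m,s})_{n,m,s\in\N}\) together with a recursive modulus; equivalently, by a pair of indices in the enumeration \((\en_k)_{k\in\N}\) of \(\Ca_1\). The $s$-$m$-$n$ theorem (invoked through the rule of thumb described at the end of Section \ref{Sec:GoedelCodes}) yields a recursive function \(m\mapsto(k_m,c_m)\) such that \((\en_{k_m},\en_{c_m})\) is a standard description of the single-variable sequence \((a_{n,m})_{n\in\N}\) of computable numbers for every \(m\in\N\). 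Hence for each \(m\) we can effectively present \((a_{n,m})_{n\in\N}\describes f_m\) as an input to \(\TM\).

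Now apply \(\TM\) to the input \((\en_{k_m},\en_{c_m})\), again uniformly in \(m\). By hypothesis on \(\TM\), the output encodes a pair \(((p_{m,\ts})_{\ts\in\N^2},\xi'_m)\) with \(((p_{m,\ts})_{\ts\in\N^2},\xi'_m)\describes f_m\) in the sense of Definition \ref{def:WE_pi}. Since the whole procedure is recursive in \(m\), the family \((p_{m,\ts})_{m\in\N,\ts\in\N^2}\) is a computable triple sequence of rational polynomials, and the function \(\xi'\colon\N^3\to\N\) defined by \(\xi'(m,\cdot,\cdot):=\xi'_m(\cdot,\cdot)\) is recursive. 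This produces exactly the object whose existence is ruled out by property (c) of Lemma \ref{lem:No_An_to_Pm}, contradicting the choice of \(A\) as a non-recursive recursively enumerable set.

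The only substantive step is the uniformization: one must verify that, although \(\TM\) operates on individual inputs, its action along the \(m\)-indexed family \((\en_{k_m},\en_{c_m})\) yields a jointly computable triple sequence and a recursive modulus in three variables. This, however, is a routine application of the $s$-$m$-$n$ theorem together with the closure of computable operations. All of the genuine combinatorial and computability-theoretic work has already been carried out in Lemma \ref{lem:No_An_to_Pm}; the theorem is the clean statement obtained by taking the contrapositive of the \emph{existence} of a uniform algorithmic translator.
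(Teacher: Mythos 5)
Your proposal is correct and follows essentially the same route as the paper: assume the Turing machine exists, apply it uniformly in \(m\) to the descriptions \((a_{n,m})_{n\in\N}\) furnished by Lemma~\ref{lem:No_An_to_Pm}(b), and observe that the resulting family contradicts Lemma~\ref{lem:No_An_to_Pm}(c). You are somewhat more explicit than the paper about the uniformization step (invoking the s-m-n theorem to justify that the output pair \(((p_{m,\ts})_{\ts\in\N^2},\xi'(m,\cdot,\cdot))\) is jointly computable in \(m\)), but this is exactly the implicit appeal the paper makes via its ``rule of thumb'' on computable transformations of recursive functions.
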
\begin{proof}
		The claim is a direct consequence of Lemma \ref{lem:No_An_to_Pm} by contradiction: assume there exists a Turing machine \(\TM\) that computes a mapping 
		\((a_n)_{n\in\N} \mapsto ((p_{\tm})_{\tm\in\N^2},\xi)\) for \((a_n)_{n\in\N}\describes f\in\CEs_\pi\),
		such that \(((p_{\tm})_{\tm\in\N^2},\xi)\describes f\) holds true. Consider any pair \(((f_m)_{m\in\N}, (a_{n,m})_{n,m\in\N})\) consisting of 
		a sequence \((f_m)_{m\in\N}\) of signals in \(\Es_\pi\) and a computable double sequence \((a_{n,m})_{n,m\in\N}\) of computable numbers such that
		\(f_m \describes (a_{n,m})_{n\in\N}\) holds true for all \(m\in\N\). Then, setting 
		\begin{align*} ((p_{m,\ts})_{\ts\in\N^2},\xi'(m,\cdot,\cdot)) := \TM((a_{n,m})_{n\in\N})
		\end{align*}
		for all \(m\in\N\) yields a pair \(((p_{m,\ts})_{m\in\N,\ts\in\N^2},\xi'(m,\cdot,\cdot))\) such that
		\(((p_{m,\ts})_{\ts\in\N^2},\xi'(m,\cdot,\cdot))\describes f_m \) holds true for all \(m\in\N\). Hence, if the Turing machine
		\(\TM\) existed, any sequence \((f_m)_{m\in\N}\) of signals in \(\Es_\pi\) that satisfies conditions (a) and (b) of
		Lemma \ref{lem:No_An_to_Pm} would necessarily violate condition (c).
	\end{proof}

	In Section \ref{sec:comp_upper_lower} and Section \ref{sec:semi_decidability}, we will analyze whether \(\bw(f)\) can be computed for \(f \in \CEs_{\pi}\),
	or, if that is not possible, whether we can at least algorithmically determine upper and
	lower bounds for \(\bw(f)\). In Section \ref{sec:arith_complexity}, we will characterize the bandwidth of computable bandlimited signals with respect to the 
	Zheng-Weihrauch hierarchy of real numbers. Furthermore, in Section \ref{sec:OracleComputations}, we will relate the bandwidth of
	computable bandlimited signals with specific properties to a number of common computational oracles.

\section{Computability of Upper and Lower Bounds}
\label{sec:comp_upper_lower}
	Let \(f \in \CEs_{\pi}\). Then we have \(\bw(f) \leq \pi\). As indicated in the introduction,
	it was posed as an open question in \cite{boche21d_accepted} whether it is possible to 
	compute meaningful bounds for the number \(\bw(f)\), given that \(f\) satisfies \(f \in \CEs_{\pi}\).
	Hence, we study the following questions:
	\begin{enumerate}
		\item\label{item:1} Let \(\TMu\) be a Turing machine with \(\TMu \colon \CEs_{\pi} \to \Rc^{+0}\)
			and \(\TMu(f) \geq \bw(f)\) for all \(f\in\CEs_{\pi}\).
			What is the output behavior of this Turing machine?
		\item\label{item:2} Let \(\TMl\) be a Turing machine with \(\TMl \colon \CEs_{\pi} \to \Rc^{+0}\)
			and \(\TMl(f) \leq \bw(f)\) for all \(f\in\CEs_{\pi}\).
			What is the output behavior of this Turing machine?
	\end{enumerate}
	
	Question~\ref{item:1} is concerned with finding Turing machines that compute an upper
	bound for the actual bandwidth of the input signal.
	Similarly, Question~\ref{item:2} is concerned with finding Turing machines that compute a
	lower bound for the actual bandwidth of the input signal.

	In \cite{boche20d} we have constructed a real-valued signal \(f_1 \in \CBs_{\pi}^1\) such
	that \(\bw(f_1) \not\in \Rc\), i.e. 
	the actual bandwidth \(\bw(f_1)\) is not computable. However, for some practical applications,
	it might be sufficient to know a non-trivial lower or upper bound for the number \(\bw(f_1)\).
	In light of this, Questions \ref{item:1} and \ref{item:2} are of high importance. 	
	We show in the following that Turing machines that satisfy the above requirements 
	and yield non-trivial bounds on the number \(\bw(f)\) for at least some feasible signals 
	\(f \in \CEs_{\pi}\) cannot exist, proving the corresponding conjecture  from \cite{boche21d_accepted} correct.

	\begin{theorem}\label{th:turing-upper}
		Let \(\TMu\) be an arbitrary Turing machine such that for all \(f \in \CEs_{\pi}\)
		and all \((a_n)_{n\in\N} \describes f\), \(\TMu\) computes a mapping
		\begin{align*} (a_n)_{n\in\N} \mapsto ((r_n)_{n\in\N},\xi) \describes \TMu(f) \in \Rc
		\end{align*}
		that satisfies \(\TMu(f) \geq \bw(f)\) for all \(f \in \CEs_{\pi}\).
		Then we have \(\TMu(f) = \pi\) for all \(f \in \CEs_{\pi}\).
	\end{theorem}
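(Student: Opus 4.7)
The plan is to argue by contradiction. Suppose some $f_0 \in \CEs_\pi$ with description $(a_n^*)_{n \in \N} \describes f_0$ satisfies $\sigma_0 := \TMu(f_0) < \pi$. Since $\sigma_0$ and $\pi$ are both computable reals with $\sigma_0 < \pi$, one can effectively produce a rational $q$ with $\sigma_0 < q < \pi$. The idea is to exhibit a computable family $(g_m)_{m \in \N} \subseteq \CEs_\pi$, whose descriptions are uniformly computable from $m$, such that $g_m = f_0$ for every $m$ outside a fixed recursively enumerable, non-recursive set $A$ and $\bw(g_m) = \pi$ for every $m \in A$. Because $\TMu$ is by hypothesis well-defined on $\CEs_\pi$ (independent of the description), this would force $\TMu(g_m) = \sigma_0 < q$ for $m \notin A$ and $\TMu(g_m) \geq \pi > q$ for $m \in A$. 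Computing $\TMu(g_m)$ to sufficient precision and comparing to $q$ would then decide $A$, contradicting its non-recursivity.

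To realise this plan, fix a total recursive $h : \N^2 \to \{0,1\}$ with $A = \{m : \exists k : h(m,k) = 1\}$, and write $\sin(\pi z)/(\pi z) = \sum_{n=0}^{\infty} b_n z^n/n!$, where $(b_n)_{n \in \N}$ is a computable sequence of computable numbers satisfying $|b_n|^{1/n} \to \pi$, so that $\bw(\sin(\pi z)/(\pi z)) = \pi$ by \eqref{eq:actual-bandwidth-limsup-sqrt-an}. For every $m,n \in \N$, set $k^*(m,n) := \min\{k \leq n : h(m,k) = 1\}$ when that minimum exists and $c_{n,m} := 2^{-k^*(m,n)}$ in that case, $c_{n,m} := 0$ otherwise. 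Then
\begin{align*}
  a_{n,m} := a_n^* + c_{n,m}\, b_n
\end{align*}
forms a computable double sequence of computable numbers (the search for $k \leq n$ in the total recursive predicate $h$ always terminates), and the $s$-$m$-$n$ theorem supplies an $m$-uniform description $(a_{n,m})_{n \in \N} \describes g_m \in \CEs_\pi$. For $m \notin A$ every $c_{n,m}$ vanishes, so $g_m = f_0$ as entire functions. For $m \in A$, let $k^* := \min\{k : h(m,k) = 1\}$; then $c_{n,m} = 0$ for $n < k^*$ and $c_{n,m} = 2^{-k^*}$ for $n \geq k^*$, so $g_m$ differs from $f_0 + 2^{-k^*} \sin(\pi z)/(\pi z)$ only by a polynomial of degree less than $k^*$. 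Combining $|b_n|^{1/n} \to \pi$ with $\limsup_n |a_n^*|^{1/n} = \bw(f_0) \leq \sigma_0 < \pi$ in \eqref{eq:actual-bandwidth-limsup-sqrt-an} then yields $\bw(g_m) = \pi$.

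The concluding step is straightforward: using the description of $g_m$ that is uniformly computable in $m$, run $\TMu$ and approximate $\TMu(g_m)$ by a rational within $\min\{q - \sigma_0,\, \pi - q\}/2$, which is a strictly positive computable real. Comparing that rational to $q$ algorithmically decides whether $m \in A$, contradicting the non-recursivity of $A$, so no such $f_0$ exists. The main obstacle is the joint verification that the above definition of $a_{n,m}$ produces a \emph{computable} double sequence and that a perturbation by $c_{n,m}\, b_n$ actually raises the bandwidth of $g_m$ from $\bw(f_0) < \pi$ up to exactly $\pi$; the former uses that $h$ is total, while the latter exploits the fact that in the $\limsup$-formula for $\bw$ the coefficients of $2^{-k^*} \sin(\pi z)/(\pi z)$ eventually dominate those of $f_0$ geometrically.
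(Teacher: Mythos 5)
Your proof is correct, but it takes a genuinely different route from the paper. The paper perturbs along a \emph{one-parameter} family
\[
  f_{\lambda}(z) = (1-\lambda)\,f(z) + \lambda\,\frac{\sin(\pi z)}{\pi z}, \quad \lambda\in[0,1]\cap\Rc,
\]
and derives the contradiction by showing that a hypothetical $\TMu$ would let one decide, from an arbitrary standard description of $\lambda$, whether $\lambda=0$ or $\lambda>0$; this is then rejected by invoking the canonical non-decidability result of Pour--El and Richards (Proposition~0). You instead index your perturbations by $m\in\N$, switch on a multiple of $\sinc$ precisely when a witness for $m\in A$ appears (for a fixed r.e.\ non-recursive $A$), and reduce the decision problem for $A$ to a threshold comparison of $\TMu(g_m)$ against a rational $q$ in the gap $(\sigma_0,\pi)$. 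Structurally, both proofs exploit exactly the same mechanism: any nonzero admixture of $\sinc$ pushes the bandwidth of the perturbed signal from $\bw(f_0)<\pi$ up to $\pi$, after which the hypothesis $\TMu(g)\geq\bw(g)$ detects it. The paper's route has the advantage of a single clean appeal to a textbook impossibility theorem at one parameter; yours has the advantage of a simpler termination argument (a single rational threshold, versus the paper's two monotone auxiliary sequences $\overline{r}_n,\underline{r}_n$), at the cost of constructing by hand what is essentially the standard reduction from an r.e.\ set to the sign-test problem. One small imprecision: for $\sinc$ all odd Taylor coefficients $b_n$ vanish, so $\abs{b_n}^{1/n}$ does \emph{not} converge to $\pi$; the correct (and sufficient) statement is $\limsup_{n\to\infty}\abs{b_n}^{1/n}=\pi$, attained along even $n$, and the dominance of $2^{-k^*}b_n$ over $a_n^*$ along that subsequence is all your argument actually uses. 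Finally, both proofs rely on the implicit well-definedness of $\TMu(f)$ as a function of the abstract signal $f$ rather than of the particular code for $(a_n)_{n\in\N}$; you make this reliance explicit, which is a useful clarification.
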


	Theorem~\ref{th:turing-upper} provides an answer to Question~\ref{item:1}.
	Any Turing machine that computes an upper bound of \(\bw(f)\) for all signals \(f \in \CEs_{\pi}\) is
	necessarily trivial in the sense that it returns the value \(\pi\) for all signals \(f \in \CEs_{\pi}\).
			
	\begin{proof}[Proof of Theorem~\ref{th:turing-upper}]
		We do a proof by contradiction and assume that there exists a Turing machine \(\TMu\) with
		properties as in the theorem and a signal \(f \in \CEs_{\pi}\) such that
		\(\TMu(f) < \pi\). 
		Since \(\TMu(f) \geq \bw(f)\), we also have \(\bw(f) < \pi\).
		For \(\lambda \in [0,1] \cap \Rc\), let
		\begin{align*}
			f_{\lambda}(z)
			=
			(1-\lambda) f(z) + \lambda \frac{\sin(\pi z)}{\pi z} .
		\end{align*}
		We have \(\bw(f_\lambda) = \pi\) for all \(\lambda \in (0,1] \cap \Rc\)
		as well as \(f_\lambda\in\CEs_\pi\) for all \(\lambda \in [0,1] \cap \Rc\),
		the latter being due to the linear structure of \(\CEs_\pi\).
		Furthermore, observe that the mapping \(\lambda \mapsto f_{\lambda}(z)\)
		is computable, i.e., there exists a Turing machine which, for all \(\lambda \in [0,1] \cap \Rc\)
		and all \(((r'_n)_{n\in\N},\xi')\describes\lambda\)
		computes a mapping \(((r'_n)_{n\in\N},\xi') \mapsto (a_n)_{n\in\N}\), such that 
		\((a_n)_{n\in\N}\describes f_\lambda\) is satisfied. Hence, by concatenation, we obtain
		a computable mapping
		\begin{align*}
			((r'_n)_{n\in\N},\xi') \quad \mapsto \quad (a_n)_{n\in\N} \quad \mapsto \quad ((r_n)_{n\in\N},\xi),  
		\end{align*}
			where \(((r_n)_{n\in\N},\xi)\) is a standard description of \(\TMu(f) \in \Rc\), such that
		\begin{align*}
			\lim_{n\to\infty} r'_n = 0 \Leftrightarrow \lim_{n\to\infty} r_n < \pi 
		\end{align*}
		holds true. Since \(\pi\) satisfies \(\pi\in\Rc\), there exists a non-negative, monotonically
		non-decreasing, computable sequence \((s_n)_{n\in\N}\) of rational numbers that satisfies
		\(	\lim_{n\to\infty} s_n = \pi.
		\)
		For all \(n\in \N\) define 
		\begin{align*}
			\overline{r}_n := \min\big\{ r_{\xi(m)} + \sfrac{1}{2^m} : m \leq n\big\} - s_n.
		\end{align*}
		Then, \((\overline{r}_n)_{n\in\N}\) is a monotonically non-increasing, computable sequence
		of rational numbers that satisfies
		\begin{align*}
			\lim_{n\to\infty} r_n < \pi \Leftrightarrow \exists n\in \N: \overline{r}_n < 0. 
		\end{align*}
		Likewise, for \(n\in \N\) define 
		\begin{align*}
			\underline{r}_n := \max\big\{ r'_{\xi'(m)} - \sfrac{1}{2^m} : m \leq n\big\}.
		\end{align*}
		Then, \((\underline{r}_n)_{n\in\N}\) is a monotonically non-decreasing, computable sequence
		of rational numbers that satisfies
		\begin{align*}
			\lim_{n\to\infty} r'_n = 0 \Leftrightarrow \forall n\in \N: \underline{r}_n \leq 0. 
		\end{align*}
		In summary, we observe that both
		\begin{align*}
				\lambda = 0 &\Leftrightarrow \exists n\in \N: \overline{r}_n < 0, \\
				\lambda > 0 &\Leftrightarrow \exists n\in \N: \underline{r}_n > 0 
		\end{align*}
		are satisfied. Define 
		\begin{align*}
			n_0 := \min \{n\in\N : \overline{r}_n < 0 \vee \underline{r}_n > 0\}.
		\end{align*}
		Then, the mapping
		\( ((r'_n)_{n\in\N},\xi') \mapsto n_0
		\)
		is recursive and the Turing machine
		\begin{align*}
			\TM_0(((r'_n)_{n\in\N},\xi')) := 	\begin{cases} 1 &\text{if~} \underline{r}_{n_0} > 0 \\
																											0 &\text{if~} \overline{r}_{n_0} < 0
																				\end{cases}
		\end{align*}
		is well-defined. But \(\TM_0\) satisfies \(\TM_0(((r'_n)_{n\in\N},\xi')) = 0\)
		if and only if \(((r'_n)_{n\in\N},\xi') \describes 0\) holds true, which is a contradiction to
		\cite[Proposition~0, p.~14]{pour-el89_book}.
	\end{proof}

	\begin{theorem}\label{th:turing-lower}
		Let \(\TMl\) be an arbitrary Turing machine such that for all \(f \in \CEs_{\pi}\)
		and all \((a_n)_{n\in\N} \describes f\), \(\TMl\) computes a mapping
		\begin{align*} (a_n)_{n\in\N} \mapsto ((r_n)_{n\in\N},\xi) \describes \TMl(f) \in \Rc
		\end{align*}
		that satisfies \(\TMl(f) \leq \bw(f)\) for all \(f \in \CEs_{\pi}\).
		Then we have \(\TMl(f) = 0\) for all \(f \in \CEs_{\pi}\).
	\end{theorem}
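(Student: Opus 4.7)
My plan is to follow the template of the proof of Theorem~\ref{th:turing-upper} and reduce to the classical impossibility of deciding whether a computable real \(\lambda \in [0,1] \cap \Rc\) equals zero, as formalized in \cite[Proposition~0, p.~14]{pour-el89_book}. Assume, for contradiction, that there exists \(f_* \in \CEs_\pi\) with \(\delta := \TMl(f_*) > 0\), and fix some description \((a_n)_{n\in\N} \describes f_*\).

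The key and most delicate step is the construction of a family \(\{f_\lambda\}_{\lambda \in [0,1]\cap\Rc} \subset \CEs_\pi\) with \(f_0 = f_*\) and \(\bw(f_\lambda) = 0\) whenever \(\lambda > 0\), whose Taylor description is uniformly computable from a standard description of \(\lambda\). The obvious dual of the \(\sinc\)-based mixture used for Theorem~\ref{th:turing-upper} fails here: any linear combination \((1-\lambda)f_* + \lambda g\) with \(g\) a polynomial (i.e.\ \(\bw(g)=0\)) still satisfies \(\bw = \bw(f_*) > 0\) for every \(\lambda \in [0,1)\). Instead I act directly on the Taylor coefficients via the damping
\begin{align*}
	b_n^{(\lambda)} := a_n \cdot \max\{0,\, 1 - n\lambda\},
	\qquad
	f_\lambda(z) := \sum_{n=0}^{\infty} \frac{b_n^{(\lambda)}}{n!}\,z^n.
\end{align*}
For \(\lambda = 0\) the damping is trivial and \(f_\lambda = f_*\); for \(\lambda > 0\) it annihilates every coefficient with \(n \geq 1/\lambda\), so \(f_\lambda\) is a polynomial of degree \(< 1/\lambda\) and therefore satisfies \(\bw(f_\lambda) = 0\). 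Since the map \(x\mapsto\max\{0,x\}\) is Markov computable on \(\Rc\), the double sequence \((b_n^{(\lambda)})_{n\in\N}\) is a computable sequence of computable numbers uniformly in the description of \(\lambda\), giving a valid \((b_n^{(\lambda)})_{n\in\N}\describes f_\lambda \in \CEs_\pi\).

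Feeding this description to \(\TMl\) yields a standard description of \(\TMl(f_\lambda)\), which equals \(\delta > 0\) when \(\lambda = 0\) (since \(f_0 = f_*\)) and must equal \(0\) when \(\lambda > 0\) (since \(0 \leq \TMl(f_\lambda) \leq \bw(f_\lambda) = 0\)). I then fix a rational \(q \in (0,\delta)\), obtainable by searching the standard description of \(\delta\) until a positive lower bound strictly below \(\delta\) is found. In complete analogy with the proof of Theorem~\ref{th:turing-upper}, the predicate \enquote{\(\TMl(f_\lambda) > q\)} is semi-decidable from the description of \(\TMl(f_\lambda)\) and holds exactly when \(\lambda = 0\), while the predicate \enquote{\(\lambda > 0\)} is semi-decidable directly from the description of \(\lambda\) and covers the complementary case. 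Running both semi-decisions in parallel would yield a Turing machine that decides \enquote{\(\lambda = 0\)} for arbitrary \(\lambda \in [0,1]\cap\Rc\), the desired contradiction with \cite[Proposition~0, p.~14]{pour-el89_book}.
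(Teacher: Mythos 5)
Your proof is correct, and it takes a genuinely different route from the paper's. The paper's own proof parametrizes over the natural numbers: starting from a recursively enumerable non-recursive set \(\As\) and a one-to-one enumeration \(\phi_\As\), it constructs a computable double sequence \(\hat a_n(m)\) of Taylor coefficients via a limiting-truncation trick, producing a computable sequence \((g_m)_{m\in\N}\subset\CEs_\pi\) with \(g_m\) a polynomial (so \(\bw(g_m)=0\)) whenever \(m\in\As\) and \(g_m=f\) whenever \(m\notin\As\); it then runs the semi-decidability of \(\As\) in parallel with the semi-decidability of \(\TMl(g_m)>0\) to decide membership in \(\As\), contradicting non-recursivity. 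You instead mirror the continuous-parameter scheme from the adjacent proof of Theorem~\ref{th:turing-upper}: a computable family \((f_\lambda)_{\lambda\in[0,1]\cap\Rc}\) with \(f_0=f_*\) and \(\bw(f_\lambda)=0\) for \(\lambda>0\), followed by the same two parallel semi-decisions, now contradicting \cite[Proposition~0, p.~14]{pour-el89_book}. The step you correctly flag as delicate — that the linear mixture \((1-\lambda)f_*+\lambda g\) cannot serve here, because adding a polynomial leaves \(\bw\) unchanged for \(\lambda<1\) — is exactly where the nonlinear coefficient damping \(a_n\mapsto a_n\max\{0,1-n\lambda\}\) earns its keep: it makes \(f_\lambda\) a polynomial of degree \(<1/\lambda\) for every positive \(\lambda\), while \(|b_n^{(\lambda)}|\le|a_n|\) keeps \(f_\lambda\in\Es_\pi\), and since \(\max\{0,\cdot\}\) is Markov computable the description \((b_n^{(\lambda)})_{n\in\N}\describes f_\lambda\) is uniformly computable from \((a_n)_{n\in\N}\) and a standard description of \(\lambda\). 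Both routes are valid and of essentially the same depth: your version is more modular, delegating the reduction from an r.e.\ set to the cited textbook result and making \ref{th:turing-upper} and \ref{th:turing-lower} obviously siblings of each other, while the paper's version is more self-contained and hands the reader the explicit reduction. Either proof is fine to adopt.
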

	
	Since zero is the trivial lower bound for \(\bw(f)\), Theorem~\ref{th:turing-lower} shows
	that for \(\CEs_{\pi}\), it is not possible to construct a Turing machine that computes a
	non-trivial lower bound for \(\bw(f)\) for all \(f \in \CEs_{\pi}\). Hence,
	Theorem~\ref{th:turing-lower} provides an answer to Question~\ref{item:2}
	
	\begin{proof}[Proof of Theorem~\ref{th:turing-lower}]
		Assume that there exists a Turing machine \(\TMl\) with properties as in the theorem and a
		signal \(f \in \CEs_{\pi}\) such that \(\TMl(f)>0\). Since \(\TMl(f) \leq \bw(f)\), we also have \(\bw(f)>0\).
		Let
		\begin{align*}
			f(z)
			=
			\sum_{n=0}^{\infty} \frac{a_n}{n!} z^n , \quad z \in \C ,
		\end{align*}
		be the power series representation of \(f\).
		Then the sequence \((a_n)_{n \in \N}\) is a computable sequence of computable numbers,
		and we have
		\begin{align*}
			\limsup_{n \tot \infty} \sqrt[n]{\lvert a_n \rvert}
			\leq
			\pi .
		\end{align*}
		Let
		\begin{align*}
			a_n(m)
			=
			\begin{cases}
				a_n , & 0 \leq n \leq m, \\
				0 , & n > m.
			\end{cases}
		\end{align*}
		The sequence \((a_n(m))_{n \in \N, m \in \N}\) is a computable double sequence of
		computable numbers.

		Let \(\As \subsetneq \N\) be a recursively enumerable non-recursive set, and
		\(\phi_{\As} \colon \N \to \As\) a recursive enumeration of the elements of \(\As\),
		where \(\phi_\As\) is a one-to-one function, i.e., for every element \(m \in \As\) there
		exists exactly one \(\hat{k} \in \N\) with \(\phi_{\As}(\hat{k})=m\).
		Let
		\begin{align*}
			a_n(m,k)
			=
			\begin{cases}
				a_n(\hat{k}), & \text{if \(m\in \{\phi_{\As}(1), \dots, \phi_{\As}(k)\}\)}, \\
				a_n(k) , & \text{otherwise} .
			\end{cases}
		\end{align*}
		Then \((a_n(m,k))_{n \in \N,m \in \N,k \in \N}\) is a computable triple sequence of
		computable numbers.
		For \(n,m \in \N\), we will analyze the behavior of the sequence \((a_n(m,k))_{k \in \N}\)
		next.
		For \(m \in \As\), we have
		\begin{align*}
			\lim_{k \tot \infty} a_n(m,k)
			=
			a_n(\hat{k}) ,
		\end{align*}
		where \(\phi_{\As}(\hat{k})=m\).
		For \(m \not\in \As\), we have
		\begin{align*}
			\lim_{k \tot \infty} a_n(m,k)
			=
			a_n .
		\end{align*}
		For \(n \in \N\), we set
		\begin{align*}
			\hat{a}_n(m)
			=
			\begin{cases}
				a_n(\hat{k}) , & \text{if \(m \in \As\)}, \\
				a_n , &  \text{if \(m \not\in \As\)} ,
			\end{cases}
		\end{align*}
		where \(\phi_{\As}(\hat{k}) = m\).

		Next, we show that there exists a recursive function \(\xi \colon \N^3 \to \N\) such that for
		all \(M \in \N\), \(n \in \N\), and \(m \in \N\), we have
		\begin{align*}
			\lvert \hat{a}_n(m) - a_n(m,k) \rvert
			\leq
			2^{-M}
		\end{align*}
		for all \(k \geq \xi(M,n,m)\).

		Let \(n \in \N\) and \(m \not\in \As\) be arbitrary but fixed.
		Then, for \(k \geq n\), we have
		\begin{align*}
			a_n(m,k)
			=
			a_n(k)
			=
			a_n
			=
			\hat{a}_n(m) .
		\end{align*}
		It follows that for \(M \in \N\) we have
		\begin{align*}
			\lvert \hat{a}_n(m) - a_n(m,k) \rvert
			=
			0
			\leq
			2^{-M} ,
		\end{align*}
		for all \(k \geq \xi(M,m,n) = n\).
		Let \(n \in \N\) and \(m \in \As\) be arbitrary but fixed.
		Then, for \(k \geq n\), we have:
		If \(m \in \{\phi_{\As}(1), \dots, \phi_{\As}(k)\}\), i.e., if
		\(m=\phi_{\As}(\hat{k})\) for \(\hat{k} \leq k\), then we have
		\begin{align*}
			a_n(m,k)
			=
			a_n(\hat{k})
			=
			\hat{a}_n(m) ,
		\end{align*}
		and it follows that
		\begin{align*}
			\lvert \hat{a}_n(m) - a_n(m,k) \rvert
			=
			0
			\leq
			2^{-M}
		\end{align*}
		for all \(M \in \N\).
		If \(m \not\in \{\phi_{\As}(1), \dots, \phi_{\As}(k)\}\), i.e., if \(\hat{k} > k\), then
		we have
		\begin{align*}
			a_n(m,k)
			=
			a_n(k)
			=
			a_n
			=
			a_n(\hat{k})
			=
			\hat{a}_n(m) ,
		\end{align*}
		where we used that \(\hat{k} > k \geq n\) in the second to last equality.
		Hence, for \(M \in \N\) we have
		\begin{align*}
			\lvert \hat{a}_n(m) - a_n(m,k) \rvert
			=
			0
			\leq
			2^{-M}
		\end{align*}
		for all \(k \geq \xi(M,m,n) = n\).

		Combining all partial results, we see that, for all \(M \in \N\), \(m \in \N\), \(n \in \N\),
		and \(k \geq \xi(M,m,n)\) we have
		\begin{align*}
			\lvert \hat{a}_n(m) - a_n(m,k) \rvert
			\leq
			2^{-M} .
		\end{align*}
		This shows that \((\hat{a}_n(m))_{n \in \N, m \in \N}\) is a computable double sequence
		of computable numbers.

		Let
		\begin{align*}
			g_{m}(z)
			=
			\sum_{n=0}^{\infty} \frac{\hat{a}_n(m)}{n!} z^n , \quad z \in \C .
		\end{align*}
		Since
		\begin{align*}
			\limsup_{n \tot \infty} \sqrt[n]{\lvert \hat{a}_n(m) \rvert}
			\leq
			\limsup_{n \tot \infty} \sqrt[n]{\lvert a_n \rvert}
			\leq
			\pi ,
		\end{align*}
		we see that \((g_m)_{m \in \N}\) is a computable sequence of computable signals in
		\(\CEs_{\pi}\).

		We now construct a Turing machine \(\TM\) that consists of two Turing machines.
		The fist Turing machine, which we denote by \(\TM_1 \colon \N \to \{\text{\(\TM\) stops,
			\(\TM\) runs forever}\}\) stops for an input \(m \in \N\) if and only if \(m \in \As\).
		The second Turing machine \(\TM_2\colon \N \to \{\text{\(\TM\) stops, \(\TM\) runs
			forever}\}\) works as follows. 
		For an input \(m \in \N\), it first computes \(g_m \in \CEs_{\pi}\).
		Then it computes \(\TMl(g_m)\).
		We use the fact that there exists a Turing machine \(\TM_{\text{>}}\) that, given an input
		\(\lambda \in \Rc\), stops if and only if \(\lambda > 0\) \cite[Proposition~0,
		p.~14]{pour-el89_book}.
		Our Turing machine \(\TM_2\) now starts \(\TM_{\text{>}}\) with \(\TMl(g_m)\) as an input.
		Hence, \(\TM_2\) will stop if and only if \(\TMl(g_m)>0\).
		For \(m \in \As\), \(g_m\) is a polynomial, for which we have \(\bw(g_m)=0\), according to
		\eqref{eq:actual-bandwidth-limsup-sqrt-an}. 
		Hence, for \(m \in \As\), we have \(\TMl(g_m)=0\).
		Consequently, \(\TM_2\) stops if and only if \(m \not\in \As\).
		Thus, \(\TM\), which runs \(\TM_1\) and \(\TM_2\) in parallel, is a Turing machine that, for
		any input \(m \in \N\), can decide whether \(m \in \As\) or \(m \not\in \As\).
		This implies that the set \(\As\) is recursive, which is a contradiction.
		Hence, \(\TMl\) cannot exist.
	\end{proof}

\section{Semi-Decidability}
\label{sec:semi_decidability}
	On a related note, one may ask whether, for a given value \(\sigma \in (0,\pi) \cap \Rc\), it is
	possible to algorithmically detect if a signal \(f\in \CEs_{\pi}\) satisfies the bandwidth
	condition \(\bw(f) > \sigma\). 
	Ideally, we would like to have a Turing machine
	\(\TM_\sigma \colon \CEs_{\pi} \to \{\text{true, false}\}\) that satisfies
	\(\TM_\sigma(f) = \text{true}\) if and only if \(\bw(f) > \sigma\) holds true. 
	However, such a Turing machine cannot exist: relabeling the output-values of
	\(\TM_\sigma\) by "\(\sigma\)" for "true" and "\(0\)" for "false", we immediately obtain a
	Turing machine that computes a non-trivial lower bound on \(\bw(f)\) for some
	\(f \in \CEs_{\pi}\), thereby contradicting Theorem~\ref{th:turing-lower}.

	As discussed in the introduction, another open question was posed in \cite{boche21d_accepted}, 
	concerning a weakening of the above problem:
	Does there exist an algorithm \(\TM_{\text{BW}>\sigma}\) that, upon being presented with
	an input \(f\in\CEs_{\pi}\), halts its computation in a finite number of steps if
	\(\bw(f) > \sigma\) holds true, and computes forever otherwise? 

	To answer this question we introduce the concept of semi-decidability.
	We call a set \(\Ms \subset \CEs_{\pi}\) semi-decidable if there exists a Turing machine
	\(\TM \colon \CEs_{\pi} \to \{\text{\(\TM\) stops, \(\TM\) runs forever}\}\) that, given an
	input \(f \in \CEs_{\pi}\), stops if and only if \(f \in \Ms\). 
	Consequently, the question about the existence of \(\TM_{\text{BW}>\sigma}\) for some
	value \(\sigma \in (0,\pi)\) is equivalent to the question about the semi-decidability of
	the set \(\left\{f \in \CEs_{\pi} \colon \bw(f) > \sigma\right\}\).

	In \cite{boche20d}, this question was studied for signals in \(\Bs_{\pi}^1\), and it was
	proved that the set \(\{ f \in \Bs_{\pi}^1 \colon \bw(f) > \sigma \}\) is semi-decidable.
	Next, we show that \(\left\{f \in \CEs_{\pi} \colon \bw(f) > \sigma\right\}\) is not
	semi-decidable.
	
	\begin{theorem}\label{th:semi-decidable-lower} 
		For all \(\sigma \in (0,\pi) \cap \Rc\), the set
		\begin{align}\label{eq:th:semi-decidable-lower:set}
			\left\{
				f \in \CEs_{\pi} \colon \bw(f) > \sigma
			\right\}
		\end{align}
		is not semi-decidable.
	\end{theorem}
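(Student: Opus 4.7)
The plan is to proceed by contradiction, reducing semi-decidability of the set in \eqref{eq:th:semi-decidable-lower:set} to the decidability of a recursively enumerable non-recursive set, using essentially the construction already developed in the proof of Theorem~\ref{th:turing-lower}. First I would fix once and for all a reference signal $f_0 \in \CEs_\pi$ with $\bw(f_0) = \pi$, for example $f_0(z) = \sin(\pi z)/(\pi z)$, whose Taylor coefficients form a computable sequence of computable numbers. Since $\sigma < \pi$, in particular $\bw(f_0) > \sigma$.

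Next, I would pick a recursively enumerable, non-recursive set $\As \subsetneq \N$ together with a one-to-one recursive enumeration $\phi_\As \colon \N \to \As$, and define a computable sequence $(g_m)_{m \in \N}$ of signals in $\CEs_\pi$ exactly as in the proof of Theorem~\ref{th:turing-lower}, starting from $f_0$ in place of the generic signal $f$ used there. As was shown in that proof, for every $m \in \N$ the signal $g_m$ lies in $\CEs_\pi$, and $(g_m)_{m \in \N}$ is a computable sequence in the sense of Definition~\ref{def:CEs_pi}, with the property that $g_m$ is a polynomial satisfying $\bw(g_m) = 0$ whenever $m \in \As$, while $g_m = f_0$ and hence $\bw(g_m) = \pi > \sigma$ whenever $m \notin \As$.

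Now suppose for contradiction that there exists a Turing machine $\TM_{\text{BW}>\sigma}$ which, given the description $(a_n)_{n\in\N}$ of any $f \in \CEs_\pi$, halts if and only if $\bw(f) > \sigma$. Concatenating $\TM_{\text{BW}>\sigma}$ with the Turing machine that produces the description of $g_m$ from $m \in \N$, one obtains a Turing machine which, on input $m$, halts if and only if $\bw(g_m) > \sigma$. By the dichotomy above, this is equivalent to $m \notin \As$. Hence $\As^c$ would be recursively enumerable; combined with the recursive enumerability of $\As$, this forces $\As$ to be recursive, contradicting the choice of $\As$.

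The substantive content of the argument is therefore already present in the proof of Theorem~\ref{th:turing-lower}; the only point requiring attention is that the reference signal must be chosen so that its bandwidth strictly exceeds $\sigma$, rather than merely being strictly positive. This is achieved by taking $f_0$ with $\bw(f_0) = \pi$. Consequently, the main obstacle, namely verifying that $(g_m)_{m\in\N}$ is a computable sequence in $\CEs_\pi$ via its Taylor coefficients with the prescribed case distinction on $\bw(g_m)$, is resolved by a direct invocation of the construction from the proof of Theorem~\ref{th:turing-lower}.
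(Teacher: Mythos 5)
Your proposal is correct and follows essentially the same strategy as the paper's proof: both reduce semi-decidability of the set in question to semi-decidability of the complement of a recursively enumerable non-recursive set $\As$, by reusing the sequence $(g_m)_{m\in\N}$ constructed in the proof of Theorem~\ref{th:turing-lower} with a reference signal of bandwidth exceeding $\sigma$. The only cosmetic difference is that you fix the concrete choice $f_0(z)=\sin(\pi z)/(\pi z)$ with $\bw(f_0)=\pi$, whereas the paper takes an arbitrary $f_*\in\CEs_\pi$ with $\bw(f_*)>\sigma$; this is immaterial.
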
\begin{proof}
		  We do a proof by contradiction and assume that there exists 
		  $\sigma \in (0,\pi) \cap \Rc$ and a Turing machine $\TM_{\text{BW}>\sigma}$ 
		  that accepts exactly the set \(\CEs_{\pi}^{>}(\sigma)\).
		  Let $f_* \in \CEs_{\pi}$ with $\bw(f_*)>\sigma$ be arbitrary.
		  Analogous to the proof of Theorem \ref{th:turing-lower}, we construct a computable sequence
		  \((g_m)_{m\in\Nplus}\) of functions in \(\CEs_\pi\) that satisfies
		  \begin{align*}
			\bw(g_m) 	=	\begin{cases}	\bw(f_*),	&\text{if}~ m\in\Nplus\setminus\As, \\	
											0, 			&\text{otherwise},
							\end{cases} 
		  \end{align*}
		  for some recursively enumerable, non-recursive set \(\As \subsetneq \Nplus\).
		  Then, the Turing machine $\TM_{\text{BW}>\sigma}$ provides an algorithm
		  that semi-decides the set \(\Nplus\setminus \As\), which contradicts the assumption.
\end{proof}

	In the broadest sense, the Theorems \ref{th:turing-upper}, \ref{th:turing-lower} and
	\ref{th:semi-decidable-lower} yield a fundamental limit to the capabilities of
	computer-aided design. 
	Algorithms for automated system design commonly feature an \emph{exit flag} functionality,
	that, upon termination of the algorithm, indicates whether the computation was successful
	or not.

	In our context, a hypothetical computer-aided design tool might want to sample a signal
	with a given sampling rate \(\sigma/\pi\).
	It would return the sampled signal together with a boolean exit flag that signals whether
	we had a problematic input signal, having a bandwidth that exceeds \(\sigma\).
	However, as Theorem~\ref{th:semi-decidable-lower} shows, such an exit flag functionality
	cannot be implemented for signals in \(\CEs_{\pi}\).

\section{Structural Properties of the Bandwidth of Computable Signals}
\label{sec:arith_complexity}
	The bandwidth of signals in \(\CBs_{\pi}^1\) is not always a computable number, as was shown in
	\cite{boche21d_accepted}. Hence, the question arises whether it is possible to characterize the degree of 
	uncomputability of the elements of \(\bw[\CBs_\pi^p]\) for \( 1 \leq p \leq +\infty\) and, more generally, \(\bw[\CEs_\pi]\).
	That is, we want to classify the elements of \(\bw[\CBs_{\pi}^p]\) for \( 1 \leq p \leq +\infty\) and \(\bw[\CEs_\pi]\)
	with respect to the arithmetical hierarchy of real numbers. 
	
	The majority of mathematical models in science and engineering are, in the final analysis, supposed to yield
	quantitative results. Despite not having received much attention in the past, a classification of such models regarding
	their arithmetic complexity is highly desirable. Not only does it yield essential insights concerning the mathematical
	structure of the model itself, but also answers whether, given a certain type of computation machine, we can expect the model
	to produce meaningful quantitative results in the first place. To the best of our knowledge, the present work is the first one 
	to yield such a characterization for a problem from engineering.
	
	As mentioned above, the bandwidth of signals in \(\CBs_{\pi}^1\) is generally not a computable number. The proof presented
	in \cite{boche21d_accepted} relies on the construction of a signal \(f \in \CBs_{\pi}^1\) which exhibits this property.
	Implicitly, the following assertion was proven as well (c.f. \cite[Theorem~2, p.~7]{boche21d_accepted}):
	
	\begin{lemma}\label{lem:CBsSignal}
		Let \((r_m)_{m\in\N}\) be a bounded, monotonically non-decreasing, computable sequence of positive rational numbers. 
		Then, there exists a signal \(f\in\CBs_{\sigma}^{1}\) that satisfies
		\begin{align}\label{eq:lem:CBsSignal::I} \sigma := \bw(f) = \lim_{m\to\infty} r_m.
		\end{align}
	\end{lemma}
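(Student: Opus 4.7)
The plan is to construct $f$ as an infinite $L^1$-convergent combination of nonnegative bandlimited ``bumps,'' one bump $K_{r_m}$ per index $m$, where each bump has bandwidth exactly $r_m$ and a \emph{nonnegative} Fourier transform. The nonnegativity is the crucial structural ingredient: it rules out any spectral cancellation between different terms of the series and thereby forces the bandwidth of the sum to be the supremum of the $r_m$.

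First I fix the building block. For a positive rational $\tau$ set
\begin{align*}
	K_\tau(t) \;:=\; \frac{\tau}{2\pi}\left(\frac{\sin(\tau t/2)}{\tau t/2}\right)^2.
\end{align*}
A substitution in $\int_{\R}(\sin x/x)^2\,dx = \pi$ gives $\lVert K_\tau\rVert_{1} = 1$, and the Fourier transform $\hat K_\tau(\omega) = \max(0,\,1-|\omega|/\tau)$ is a nonnegative triangular function supported in $[-\tau,\tau]$ and strictly positive on $(-\tau,\tau)$. Hence $K_\tau\in\Bs_\tau^1$ with $\bw(K_\tau)=\tau$, and $K_\tau$ is effectively computable in $L^1$ from the rational $\tau$ (for instance, via truncation of its Shannon sampling series). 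Now define
\begin{align*}
	f \;:=\; \sum_{m=0}^{\infty} 2^{-m}\, K_{r_m}.
\end{align*}
Since $\sum_m 2^{-m}\lVert K_{r_m}\rVert_{1} < \infty$, the series converges absolutely in $L^1(\R)$, and termwise Fourier transformation yields $\hat f = \sum_m 2^{-m}\hat K_{r_m}$ (with uniform convergence). Each $\hat K_{r_m}$ is supported in $[-r_m,r_m]\subseteq[-\sigma,\sigma]$, so $\bw(f)\leq\sigma$. Conversely, given $\sigma'\in(0,\sigma)$ choose $M$ with $r_M>\sigma'$: nonnegativity of every $\hat K_{r_m}$ together with the strict positivity of $\hat K_{r_M}$ on $(\sigma',r_M)$ forces $\hat f>0$ somewhere in $(\sigma',r_M)$, so $\hat f$ cannot be supported in $[-\sigma',\sigma']$. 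Hence $\bw(f)>\sigma'$, and letting $\sigma'\nearrow\sigma$ yields $\bw(f)=\sigma$.

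For the computability part, the partial sums $f_N := \sum_{m=0}^{N} 2^{-m} K_{r_m}$ form a computable sequence in $\CBs^1$, because each $K_{r_m}$ is effectively computable in $L^1$ from $r_m\in\Q$ and finite linear combinations preserve $L^1$-computability. The tail bound $\lVert f-f_N\rVert_{1} \leq \sum_{m>N} 2^{-m} = 2^{-N}$ is itself a recursive modulus of convergence, which places $f$ in $\CBs^1$ with the claimed bandwidth. The main subtlety I anticipate is that $\sigma$ itself need not be computable: a bounded, monotone, computable sequence of rationals has only a $\Sigma_1$ limit. Hence the construction must operate purely with the effectively available data, namely the sequence $(r_m)$ and a fixed rational upper bound for it, and never with $\sigma$ directly; once that bookkeeping is respected, the genuinely structural step is the positivity argument above, which is what converts the easy upper bound $\bw(f)\leq\sigma$ into the matching lower bound.
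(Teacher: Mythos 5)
Your proof takes essentially the same approach as the paper: build $f$ as an $L^1$-convergent series of triangular-spectrum (Fej\'er-type) kernels with \emph{nonnegative} Fourier transforms, then use the positivity to rule out spectral cancellation so that $\bw(f)$ is forced to equal $\sup_m r_m$, and obtain $f\in\CBs_\sigma^1$ from an explicit geometric tail bound on the partial sums. The paper's variant modulates each kernel so that its spectrum lies exactly on $[r_m,r_{m+1}]$ (producing a complex signal and a cleaner disjoint-support picture) and uses weights $1/m^2$ rather than your centered kernels on $[-r_m,r_m]$ with weights $2^{-m}$, but the structural idea and the computability bookkeeping are the same.
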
\begin{proof}
		Consider the indicator function \(\ind_{[-\sfrac{1}{2},\sfrac{1}{2}]} : \R \mapsto \{0,1\}\) of the
		real interval \([-\sfrac{1}{2},\sfrac{1}{2}]\) in the frequency domain and set \(\Fta := \ind_{[-\sfrac{1}{2},\sfrac{1}{2}]}\).
		Then, we have
		\begin{align*} \fa(t) = \frac{ \sin\left(\sfrac{t}{2}\right)}{\pi t}.
		\end{align*}
		Consider furthermore the convolution \(\Ftb := \Fta*\Fta\) of \(\Fta\) with itself. We have
		\begin{align*}
			\Ftb(\omega) 	&= 	\begin{cases}		1 - |\omega|,					&\quad\text{if}~ |\omega| \leq 1, \\
													0,								&\quad\text{otherwise},
														\end{cases} \\
			\fb(t)			&=	\left( \frac{ \sin\left(\sfrac{t}{2}\right)}{\pi t}\right)^2.										
		\end{align*}
		Hence, \(\fb\) satisfies \(\fb(t) \geq 0\) for all \(t\in \R\) and we have
		\begin{align*}	\|\fb\|_{L^1} = \int_{-\infty}^{+\infty} |\fb(t)| \di{t} = \int_{-\infty}^{+\infty} \fb(t) \di{t} = \Ftb(0) = 1.		
		\end{align*}
		We conclude that \(\fb\) satisfies \(\fb \in \Bs_{1}^1\). It can be shown that \(\fb\) is also computable in 
		\(\Bs_{1}^1\) \cite[Appendix~E, p.~16]{boche21d_accepted}, i.e., \(\fb \in \CBs_{1}^1\).
		We will now consider signals \(\fc_m\) that emerge from \(\fb\) by shortening and shifting in the frequency
		domain. For \(m\in\N\), define
		\begin{align*}		\Ftc_m(\omega):		&= \Ftb\left(\frac{2}{r_{m+1} - r_{m}}\left(\omega - \frac{r_{m+1} + r_{m}}{2}\right)\right), \\
							\fc_m(t)    		&=  \e^{\iu \frac{r_{m+1} + r_{m}}{2}t} \left(\frac{r_{m+1} - r_{m}}{2} \fb\left(\frac{(r_{m+1} - r_{m})t}{2}\right)\right),
		\end{align*}
		whenever \(r_m < r_{m+1}\) holds true. If \(r_m = r_{m+1}\) holds true instead, set \(\Ftc_m = \fc_m = \bm{0}\) (here, \(\bm{0}\) refers to the trivial signal in \(L_1\)).
		We observe that \(\fc_m\) satisfies 
		\begin{align*}	\|\fc_m\|_{L^1} = \|\fb_m\|_{L^1} = 1
		\end{align*}
		for all \(m\in\N\) that satisfy \(r_m < r_{m+1}\) and \(\Ftc_m\) satisfies
		\begin{align*}	\suppess \Ftc_m = [r_m, r_{m+1}] 
		\end{align*}
		for all \(m\in\N\) that satisfy \(r_m < r_{m+1}\). Hence, since \(r_m\) is a rational number for all \(m\in\N\), we have
		\(\fc_m \in \CBs_{r_m+1}^1\) and \(\bw(\fc_m) = r_{m+1} \) for all \(m\in\N\) that satisfy \(r_m < r_{m+1}\). We now define
		\begin{align*} f_k(t) :&= \sum_{m=1}^{k} \frac{1}{m^2}\cdot \fc_m(t),	\\
									f(t) :	&= \lim_{k\to\infty} f_k(t),
		\end{align*}
		and, in the following, show that \(f\) is indeed a signal \(f\in\CBs_{\sigma}^{1}\) that satisfies \eqref{eq:lem:CBsSignal::I}.
		Define \(\Ms := \{m\in\N: r_m < r_{m+1}\}\). First, observe that
		\begin{align*} \|f\|_{L^1} \leq \sum_{m=1}^{\infty} \left\| \frac{1}{m^2}\cdot \fc_m	\right\|_{L^1}
															= \sum_{m\in\Ms} \frac{1}{m^2} \leq \frac{\pi^2}{6}
		\end{align*}
		is satisfied. Since \(\Ftc_m(\omega) \geq 0\) holds true for all \(\omega \in \R \), we have
		\begin{align*} \bw(f) &= \sup \bigcup_{m\in\Ms} \suppess \Ftc_m \\
													&= \sup \bigcup_{m\in\Ms} [r_m, r_{m+1}] \\ 
													&= \lim_{m\to\infty} r_{m+1} 
		\end{align*}
		It remains to show that \(f\) is computable in \(\CBs_{\sigma}^1\). Since \(\fc_m\) is computable in \(\CBs_{\sigma}^1\)
		for all \(m\in\N\), so is \(f_k\) for all \(k\in\N\). Hence, it is sufficient to prove that \(f_k\) converges effectively 
		towards \(f\) for \(k\to\infty\). We have
		\begin{align*} 	\left\| f - f_k	\right\|_{L^1} 
										&\leq \sum_{m=k+1}^{\infty} \frac{1}{m^2}\cdot \left\| \fc_m	\right\|_{L^1} \\
										&\leq \sum_{m=k+1}^{\infty} \frac{1}{m^2} \\
										&= \frac{\pi^2}{6} - \sum_{m=1}^{k} \frac{1}{m^2},
		\end{align*}
		which is a computable sequence of computable numbers in \(k\) that converges monotonically decreasingly towards \(0\). 
		Hence, the convergence of \(f_k\) towards \(f\) is effective and we have \(f\in\CBs_{\sigma}^{1}\) with \(f\)
		satisfying \eqref{eq:lem:CBsSignal::I}.
	\end{proof}

	For the full characterization of \(\bw[\CBs_{\pi}^p]\) for \(1\leq p\leq +\infty\) we first need another lemma. Again, the statement was proven in 
	a related form in \cite[Theorem~5, p.~10]{boche21d_accepted}: for all \(\sigma \in (0,\pi)\cap\Rc\) and all \(1\leq p\leq +\infty\), the set
	\(\CBs_\pi^p\setminus\CBs_\sigma^p\) is semi-decidable with respect to \(\CBs_\pi^p\). That is, for all \(\sigma \in (0,\pi)\cap\Rc\) and all \(1\leq p\leq +\infty\),
	there exists a Turing machine \(\TM\) that computes a partial mapping \(f \mapsto m\in\N\), where \(D(\TM)\) equals \(\CBs_\pi^p\setminus\CBs_\sigma^p\).
	Such a Turing machine exists for all \(\sigma \in (0,\pi)\cap\Rc\) and all \(1\leq p\leq +\infty\), and takes a description of \(f \in \CBs_{\pi}^p\).
	In order to establish the hierarchical characterization of \(\bw[\CBs_{\pi}^p]\), we require the analogous statement for input \(\sigma \in (0,\pi)\cap\Rc\) instead
	of \(f \in \CBs_\pi^p\). That is, we want to show that for all \(f \in \CBs_{\pi}^p\) and all \(1\leq p\leq +\infty\), there exists a Turing machine
	that semi-decides the set \(\{\sigma \in (0,\pi) \cap \Rc: \sigma < \bw(f)\}\) for input \(\sigma \in (0,\pi)\cap\Rc\).

	\begin{lemma}\label{lem:CBs_semi_decidable}
		For all \(f \in \CBs_{\pi}^p\) and all \(1\leq p\leq +\infty\), the set \(\{\sigma \in (0,\pi) \cap \Rc: \sigma < \bw(f)\}\)
		is semi-decidable with respect to the set \((0,\pi) \cap \Rc\). 
	\end{lemma}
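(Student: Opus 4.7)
The plan is to reduce the lemma to the earlier semi-decidability result for the sets $\CBs_\pi^p\setminus\CBs_\sigma^p$ (recalled just before the lemma statement), by a dovetailing argument in which the role of the fixed parameter is swapped with the role of the input.

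First I would observe the following dichotomy: for $f\in\CBs_\pi^p$ fixed and $\sigma\in(0,\pi)\cap\Rc$, we have $\sigma<\bw(f)$ if and only if there exists a rational $r\in\Q\cap(0,\pi)$ with $\sigma<r<\bw(f)$. The \enquote{if} direction is immediate, and the \enquote{only if} direction follows from density of the rationals in the nonempty open interval $(\sigma,\bw(f))$.

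Given this, I would construct the desired semi-decision machine on input $\sigma\in(0,\pi)\cap\Rc$ as follows. Fix a recursive enumeration $\Q\cap(0,\pi)=\{r_1,r_2,\ldots\}$. For each $j\in\Nplus$ the following two properties are semi-decidable: (i) $r_j>\sigma$, which follows because we can generate from the standard description of $\sigma$ a computable sequence of rational approximations with an effective error bound, and wait until one of them certifies $r_j>\sigma$; and (ii) $r_j<\bw(f)$, which by the cited earlier result is semi-decidable whenever the threshold is a computable real, applied here with the rational computable number $r_j$. The overall machine dovetails the computations (i) and (ii) across all $j$, and halts as soon as, for some common $j$, both subprocedures halt. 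Soundness is clear, since halting delivers a rational $r_j$ with $\sigma<r_j<\bw(f)$; completeness is exactly the dichotomy above.

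The one point that needs care, and which I expect to be the main obstacle, is the \emph{uniformity} of step (ii) in the rational threshold $r_j$: the earlier theorem is phrased for a fixed $\sigma$, whereas here we need a single Turing machine that takes both $f$ and $r_j$ as inputs. Inspection of the standard construction (filtering $f$ by a bandpass kernel with cutoff $r_j$ and testing whether the filtered signal has positive $L^p$-norm, this positivity test itself being the usual semi-decidable \(x>0\) procedure for computable $x$) shows that all constituent operations are uniformly computable in the parameter when that parameter is presented as a standard description of a computable real, and in particular when it is rational. Once this uniformity is secured, the dovetailing step is routine and the lemma follows.
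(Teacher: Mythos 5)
Your proposal is correct in outline and reaches the conclusion by a genuinely different route from the paper. The paper's proof first reduces WLOG to $p=1$ (citing that for every $f\in\CBs_\pi^p$, $p>1$, there is $f'\in\CBs_\pi^1$ with $\bw(f)=\bw(f')$), then directly builds one Turing machine that, on input a standard description of $\sigma\in(0,\pi)\cap\Rc$, computes a standard description of $x(\sigma):=\max\{\Ft f(\omega):\omega\in[-\pi,-\sigma)\cup(\sigma,\pi]\}$ (possible because $\Ft f$ is Turing computable for $f\in\CBs_\pi^1$), then semi-decides $x(\sigma)>0$ via the one-sided rational approximants $r_m-2^{-\xi(m)}$. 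In effect the paper establishes \emph{uniformity in $\sigma$ over all of $\Rc\cap(0,\pi)$} by an explicit construction, and then positivity detection finishes the job. You instead reduce to the rational case via the density observation and dovetailing, buying a conceptually cheaper requirement: uniformity only over the rational thresholds, plus the standard semi-decidability of $\sigma < r_j$ from a standard description of $\sigma$. That is a valid and clean alternative, and it more visibly separates the two inequalities $\sigma<r_j$ and $r_j<\bw(f)$. Two points deserve tightening, however. First, your justification of the decisive uniformity step is informal (``inspection of the standard construction''), and, as you yourself observe, it actually claims uniformity in an arbitrary computable-real parameter — which, if true, makes the dovetailing superfluous and collapses your argument to the paper's direct one; so either assert only what you need (uniformity in $j$ for the rational $r_j$) or drop the dovetailing altogether. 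Second, your bandpass-filter criterion is stated for a generic $p$, but for $p>1$, and especially $p=+\infty$, the Fourier-side manipulations need care (the transform is no longer a continuous function for all elements of $\Bs_\pi^\infty$); the paper avoids this entirely by first passing to a representative in $\CBs_\pi^1$, and your argument should do something equivalent rather than treat all $p$ uniformly.
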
\begin{proof}
		If \(f\) satisfies \(f \in \CBs_{\pi}^p\) for some \(1 <  p\leq +\infty\), then there exists a signal \(f' \in \CBs_{\pi}^1\) 
		that satisfies \(\bw(f) = \bw(f')\) \cite[Proof of Theorem~5, p.~10]{boche21d_accepted}. Hence, without loss of generality,
		we can assume that \(f \in \CBs_{\pi}^1\) holds true. Since \(f\) satisfies \(f \in \CBs_{\pi}^1\), the Fourier transform \(\Ft{f}\) of 
		\(f\) is Turing computable on the interval \((-\pi,\pi]\). In particular, this implies the existence of a Turing machine \(\TM\) that
		computes a mapping 
		\begin{align*} \sigma \mapsto ((r_m)_{m\in\N},\xi)
		\end{align*}
		for \(\sigma \in (0,\pi) \cap \Rc\), where \(((r_m)_{m\in\N},\xi)\) is a standard description of the number
		\begin{align*}	x(\sigma) := \max \big\{\Ft{f}(\omega) : \omega \in [-\pi, -\sigma) \cup (\sigma,\pi]\big\} \in \Rc.
		\end{align*}
		Then, \(x(\sigma)\) satisfies \(x(\sigma) > 0 \) if and only if \(\sigma < \bw(f)\) holds true.
		Consider the computable sequence \((\rp_m)_{m\in\N}\) of rational numbers that satisfies
		\begin{align*} \rp_m := r_m - 2^{-\xi(m)}.
		\end{align*}
		We have \(\lim_{m\to\infty} \rp_m = x(\sigma)\) as well as \(\rp_m \leq x(\sigma)\) for all \(m\in\N\). Hence, there
		exists an \(m\in\N\) such that \(\rp_m > 0\) holds true if and only if \(\sigma\) satisfies \(\sigma < \bw(f)\).
		Finally, we conclude the existence a Turing machine \(\TM'\) that computes the mapping
		\begin{align*} \sigma \mapsto \min \{m\in\N : \rp_m > 0\}
		\end{align*}
		for input \(\sigma \in (0,\pi) \cap \Rc\). The Turing machine \(\TM'\) then satisfies \(D(\TM') = \{\sigma \in (0,\pi)\cap\Rc : \sigma < \bw(f)\}\), 
		which is the required property. 
	\end{proof}
	
	We will now use the above results to prove the full characterization of \(\bw[\CBs_{\pi}^p]\) for \(1\leq p\leq +\infty\) with
	respect to the arithmetical hierarchy of real numbers.
	\begin{theorem}\label{thm:CBs_Sigma1}\mbox{}
		\begin{itemize}
			\item[1.] If \(f\) satisfies \(f \in \CBs_{\pi}^p\) for \(1\leq p\leq +\infty\), then \(\bw(f)\) satisfies
				\(\bw(f) \in \Sigma_1\).
			\item[2.] If \(x\) satisfies \(x \in \Sigma_1 \cap [0,\pi]\) then there exists \(f \in \CBs_{\pi}^1\) such that
				\(\bw(f) = x\) holds true.
		\end{itemize}
		Hence, the set \(\bw[\CBs_{\pi}^p]\) coincides with the set \(\Sigma_1 \cap [0,\pi] \) for all \(1\leq p\leq +\infty\).
	\end{theorem}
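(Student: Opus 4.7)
The plan is to prove the two inclusions separately, invoking Lemma~\ref{lem:CBs_semi_decidable} for the forward inclusion $\bw[\CBs_\pi^p] \subseteq \Sigma_1 \cap [0,\pi]$ and Lemma~\ref{lem:CBsSignal} for the reverse inclusion $\Sigma_1 \cap [0,\pi] \subseteq \bw[\CBs_\pi^1]$. For part~1, let $f \in \CBs_\pi^p$ and invoke Lemma~\ref{lem:CBs_semi_decidable} to obtain a Turing machine $\TM'$ that semi-decides whether a given $\sigma \in (0,\pi) \cap \Rc$ satisfies $\sigma < \bw(f)$. Fix a computable enumeration $(q_n)_{n\in\N}$ of $\Q \cap (0,\pi)$ together with a computable sequence of standard natural-number codes $d_n$ for the $q_n$, and let $j'$ be an index such that $\TM'$ is computed by $\en_{j'}$. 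Using the runtime function $\Psi$ from Section~\ref{sec:prel_on_TM}, define the computable double sequence of rationals
\begin{align*}
    r_{n,k} := \begin{cases} q_n & \text{if } \Psi(j', d_n, k) = 1, \\ 0 & \text{otherwise}. \end{cases}
\end{align*}
Flattening via Cantor pairing yields a single computable sequence $(\rho_m)_{m\in\N}$ whose supremum equals $\sup\{q_n : q_n < \bw(f)\}$, which by density of the rationals coincides with $\bw(f)$ (the zero entries covering the edge case $\bw(f) = 0$). Hence $\bw(f) \in \Sigma_1$.

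For part~2, given $x \in \Sigma_1 \cap [0,\pi]$, I split into two cases. If $x = 0$, take $f \equiv 0 \in \CBs_\pi^1$, which trivially satisfies $\bw(f) = 0$. If $x > 0$, pick a computable sequence $(r_m)_{m\in\N}$ of rationals with $\sup_m r_m = x$ and form the running maximum $r'_m := \max\{r_0, \dots, r_m\}$, which is a bounded, monotonically non-decreasing, computable sequence of rationals with $\lim_{m\to\infty} r'_m = x$. Since $x > 0$, fix some $m_0 \in \N$ with $r'_{m_0} > 0$ (its existence being guaranteed by $\sup_m r_m = x > 0$) and set $s_m := r'_{m_0 + m}$. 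Then $(s_m)_{m\in\N}$ is a bounded, monotonically non-decreasing, computable sequence of strictly positive rationals converging to $x$, and Lemma~\ref{lem:CBsSignal} applied to $(s_m)$ delivers a signal $f \in \CBs_x^1 \subseteq \CBs_\pi^1$ with $\bw(f) = x$.

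Combining both directions with $\CBs_\pi^1 \subseteq \CBs_\pi^p$ for all $1 \leq p \leq +\infty$ yields the announced equality $\bw[\CBs_\pi^p] = \Sigma_1 \cap [0,\pi]$. The main subtlety is reconciling the strict positivity hypothesis of Lemma~\ref{lem:CBsSignal} with a general $\Sigma_1$ description; I resolve this by the case split on $x = 0$ and by the non-constructive choice of the witnessing index $m_0$. No uniform algorithm in $x$ is required here, since the theorem only asserts the existence of the signal.
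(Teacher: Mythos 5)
Your proof is correct and follows essentially the same route as the paper: part~1 semi-decides the Dedekind cut below $\bw(f)$ via Lemma~\ref{lem:CBs_semi_decidable} and the runtime function $\Psi$, and part~2 feeds a running-maximum sequence into Lemma~\ref{lem:CBsSignal}. The only cosmetic differences are that the paper packages the part~1 construction as a monotone sequence $r_m := \max\Qs(m)$ (unnecessary for $\Sigma_1$-membership, since only a supremum is required), whereas you flatten a double sequence directly, and that you make the $x=0$ edge case explicit with a separate case split while the paper silently passes a non-negative (rather than strictly positive) sequence into Lemma~\ref{lem:CBsSignal}; your treatment is the more careful of the two.
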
\begin{proof}\mbox{}
		\begin{itemize}
			\item[1.] From Lemma \ref{lem:CBs_semi_decidable}, we know that there exists a Turing machine 
				\(\TM\) that computes a mapping \(\sigma \mapsto m \in\N\) for input \(\sigma \in (0,\pi) \cap \Rc\), such that 
				\begin{align*}	D(\TM) = \{\sigma \in (0,\pi) \cap \Rc : \sigma < \bw(f)\}
				\end{align*}
				holds true. In the following, we consider for all \(\tm\in\N^3\) the rational number
				\begin{align*}	q(\tm) := (-1)^{m_3}\frac{m_1}{m_2}.
				\end{align*}
				Since \(\Q\) is a subset of \(\Rc\), Lemma \ref{lem:CBs_semi_decidable} implies the existence of a 
				recursive function \(g'' : \N^3 \rightarrow \N\), such that
				\begin{align*} 	&D(g'') \cap \left\{\tm\in\N^3 : q(\tm) \in (0,\pi) \right\} \\
												&\qquad\qquad= \left\{\tm\in\N^3 : q(\tm) \in D(\TM)\right\} 
				\end{align*}
				holds true. Since both \(0\) and \(\pi\) are computable numbers, the set \((0,\pi)\cap \Q\) is recursive in the following sense:
				there exists a total recursive mapping \(g' : \N \rightarrow \N\) such that
				\begin{align*} q(\amalg_3g'(n)) \in (0,\pi)
				\end{align*}
				holds true for all \(n\in\N\) and for all \(\sigma\in (0,\pi)\cap \Q\), there exists an \(n\in\N\) such that
				\begin{align*} q(\amalg_3g'(n)) = \sigma
				\end{align*}
				holds true. Define \(g(n) := g''(\amalg_3g'(n))\) for all \(n\in\N\). Then, \(g\) is a recursive function that satisfies
				\begin{align*} 	D(g) 	&= \{n\in\N : q(\amalg_3g'(n)) \in (0,\bw(f))\},\\
												g[\N]	&= (0,\bw(f)) \cap \Q.
				\end{align*}
				We have \(\sup (0,\bw(f)) \cap \Q = \sup\{ q \in \Q : q < \bw(f)\}\), where
				\(\{ q \in \Q : q < \bw(f)\}\) is the Dedekind cut of \(\bw(f)\). Hence, \(g\) satisfies
				\(\sup g[\N] = \bw(f)\). Now consider \(k\in\N\) such that \(g = \en_k\) and define the set
				\begin{align*} \Qs(l) := \Big\{q(\amalg_3g'(\varpi_1(m))):~ &m \in \{1,\ldots, l\}, \\
																																		&\Psi(k\circ\amalg_2(m)) = 1\Big\}.
				\end{align*}
				Then, we have \(\Qs(l) \subseteq \Qs(l + 1)\) for all \(l\in\N\) as well as  
				\begin{align*} \bigcup_{l=1}^{\infty} \Qs(l) = g[\N] = (0,\bw(f)) \cap \Q,
				\end{align*}
				and the monotonically non-decreasing, computable sequence \((r_m)_{m\in\N}\) of rational numbers, defined by
				\begin{align*} 	r_m :=	\begin{cases}	\max \Qs(m), &\quad\text{if}~\Qs(m)\neq\emptyset,\\
																							0,						&\quad\text{otherwise},
																\end{cases}
				\end{align*}
				for all \(m\in\N\), satisfies \(\sup_{m\in\N} r_m = \bw(f)\).
			\item[2.] If \(x\) satisfies \(x \in \Sigma_1\), then, per definition, there exists 
				a computable sequence \((\rp_m)_{m\in\N}\) of rational numbers that satisfies 
				\begin{align*} \sup_{m\in\N}(\rp_m) = x.
				\end{align*}
				Now, for all \(m\in\N\), define
				\begin{align*} r_m := \max \big(\{r_k : k \leq m\} \cup\{0\}\big).
				\end{align*}
				Then, \((r_m)_{m\in\N}\) is a non-negative, computable sequence of rational numbers
				that converges monotonically non-decreasingly towards \(x\). Hence, the claim follows from Lemma \ref{lem:CBsSignal}.
		\end{itemize}
	\end{proof}
	
	As Theorem \ref{thm:CBs_Sigma1} shows, the computability properties of \(\bw[\CBs_\pi^p]\) do (for \(p \geq 1\)) not depend
	on the actual value of \(p\). In other words, the computability behavior of signals in \(\CBs_{\pi}^{\infty}\) 
	is not getting any more benevolent if the requirements on the signal decay for \(t \to \infty\) in the time domain are
	strengthened. However, a transition between hierarchical levels does occur when the class of allowed signals is expanded 
	from \(\CBs_{\pi}^{\infty}\) to \(\CEs_\pi\), as we will see in the following.
	
	\begin{theorem}\label{thm:CEs_Pi2}\mbox{}
		\begin{itemize}
			\item[1.] If \(f\) satisfies \(f \in \CEs_{\pi}\), then \(\bw(f)\) satisfies \(\bw(f) \in \Pi_2\).
			\item[2.] If \(x\) satisfies \(x \in \Pi_2 \cap [0,\pi]\) then there exists \(f \in \CEs_{\pi}\) such that
				\(\bw(f) = x\) holds true.
		\end{itemize}
		Hence, the set \(\bw[\CEs_{\pi}]\) coincides with the set \(\Pi_2 \cap [0,\pi] \).
	\end{theorem}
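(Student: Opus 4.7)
The plan is to prove the two inclusions $\bw[\CEs_\pi] \subseteq \Pi_2 \cap [0,\pi]$ and $\Pi_2 \cap [0,\pi] \subseteq \bw[\CEs_\pi]$ separately, in direct analogy with the $\Sigma_1$-characterization of Theorem~\ref{thm:CBs_Sigma1} but one level higher in the Zheng-Weihrauch hierarchy.

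For the first inclusion, I would start from \eqref{eq:actual-bandwidth-limsup-sqrt-an} and rewrite
\begin{align*}
\bw(f) = \limsup_{n\to\infty}\sqrt[n]{|a_n|} = \inf_{N\in\N}\sup_{n\geq N}\sqrt[n]{|a_n|}.
\end{align*}
By Definition~\ref{def:CEs_pi}, $(a_n)_{n\in\N}$ is a computable sequence of computable numbers, so the same holds for $\big(\sqrt[n]{|a_n|}\big)_{n\in\N}$, the $n$-th root of a non-negative computable real being a recursive arithmetic operation. From a standard description of this latter sequence I would extract a computable double sequence $(\tilde q_{n,M})_{n,M\in\N}$ of rational lower approximations with $\sup_M \tilde q_{n,M} = \sqrt[n]{|a_n|}$. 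Flattening the set $\{(n,M):n\geq N\}$ into a single enumeration index via the recursive inverse Cantor pairing $\amalg_2$, and padding the complementary positions by $0$ (an a priori lower bound on $\sup_{n\geq N}\sqrt[n]{|a_n|}$), yields a computable double sequence $(\hat q_{N,j})_{N,j\in\N}$ of rationals satisfying $\sup_j \hat q_{N,j} = \sup_{n\geq N}\sqrt[n]{|a_n|}$. Hence $\bw(f) = \inf_N \sup_j \hat q_{N,j}$ is precisely a $\Pi_2$-description, and $\bw(f)\in[0,\pi]$ is immediate from $f\in\Es_\pi$.

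For the reverse inclusion, given $x\in\Pi_2\cap[0,\pi]$ with defining computable double sequence $(r_{m_1,m_2})_{\tm\in\N^2}$, I would invoke Lemma~\ref{lem:ZhengWeihrauchI} to obtain a computable sequence $(s_m)_{m\in\N}$ of rational numbers with $\limsup_{m\to\infty} s_m = x$. Truncating via $\tilde r_n := \min\{\max\{s_n,0\},4\}$ produces a computable sequence of rationals in $[0,4]$; because the clipping map is continuous and monotone and $0\leq x\leq\pi<4$, one has $\limsup_n \tilde r_n = x$. Setting $a_n := \tilde r_n^{\,n}$ gives a computable sequence of non-negative computable numbers with $\sqrt[n]{|a_n|} = \tilde r_n$, and the associated power series $f(z) := \sum_{n\geq 0} (a_n/n!)\, z^n$ is majorised by $\e^{4|z|}$, hence entire. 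The limsup formula \eqref{eq:actual-bandwidth-limsup-sqrt-an} then gives $\bw(f) = \limsup_n \tilde r_n = x \leq \pi$, so $f\in\CEs_\pi$ with the prescribed bandwidth.

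The principal conceptual content is the appearance of the $\Pi_2$ level as opposed to $\Sigma_1$ for $\CBs_\pi^p$: without any Banach-space control of time-domain decay, the Taylor coefficients $(a_n)$ only admit a two-quantifier handle on $\bw(f)$ via the $\inf\sup$ structure of $\limsup$. The technically crucial step is Lemma~\ref{lem:ZhengWeihrauchI}, which converts any $\Pi_2$-presentation into a single computable rational sequence with the matching $\limsup$; once this is available, the coupling $a_n = \tilde r_n^{\,n}$ is exactly the right device, because it transplants the $\limsup$ structure of the rational sequence onto the $n$-th-root sequence governing the bandwidth. The remainder of the argument is routine bookkeeping with standard descriptions and Cantor pairings.
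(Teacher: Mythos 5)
Your proof is correct and takes essentially the same route as the paper: both inclusions hinge on the Cauchy--Hadamard identity \eqref{eq:actual-bandwidth-limsup-sqrt-an}, and the reverse inclusion is established identically via Lemma~\ref{lem:ZhengWeihrauchI} followed by the coupling $a_n := \tilde r_n^{\,n}$. In the forward inclusion you build the rational $\Pi_2$-witness directly from a standard description of $\big(\sqrt[n]{|a_n|}\big)_n$, whereas the paper packages this step as a second-order upper ZW description of computable numbers and invokes Lemma~\ref{lem:LowerZWImpliesPi2}, whose proof carries out the same lower-approximation bookkeeping you perform inline.
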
\begin{proof}\mbox{}
		\begin{itemize}
			\item[1.] By \eqref{eq:actual-bandwidth-limsup-sqrt-an}, there exists a computable sequence \((a_m)_{m\in\N}\) of 
				computable numbers such that \(\bw(f) = \limsup_{m \tot \infty} \sqrt[m]{\lvert a_m \rvert}\) holds true.
				Define \(b_m := \sqrt[m]{\lvert a_m \rvert}\) for all \(n\in\N\). Then, \((b_m)_{m\in\N}\) is a computable sequence
				of computable numbers and 
				\begin{align*}
					\bw(f) = \limsup_{m \tot \infty} (b_m) 
				\end{align*}
				holds true. In the following, we employ a technique applied by \citeauthor{zw01}. 
				By definition, we have
				\begin{align*}
					&\limsup_{m \tot \infty} (b_m) \\
					&\quad= \lim_{m_1\to\infty}\big(\sup \big\{b_{m_2} : m_2\in\N, m_1 \leq m_2\big\}\big) \\
					&\quad= \lim_{m_1\to\infty}\big(\sup \big\{b_{m_1+m_2} : m_2\in\N\}\big)
				\end{align*}
				Observe that \(\big\{b_{m_1+1+m_2} : m_2\in\N\} \subseteq \big\{b_{m_1+m_2} : m_2\in\N\}\) is satisfied for all \(m_1\in\N\), and hence
				\(\sup \big\{b_{m_1+m_2} : m_2\in\N\}\) is monotonically non-increasing in \(m_2\). Thus,
				\begin{align*}
					&\limsup_{m \tot \infty} (b_m) \\ 		
					&\quad= \lim_{m_1\to\infty}\big(\sup \big\{b_{m_1+m_2} : m_2\in\N\}\big) \vphantom{\lim_{m_1\to\infty}}\\
					&\quad= \inf_{m_1\in\N}\big(\sup \big\{b_{m_1+m_2} : m_2\in\N\}\big) \vphantom{\lim_{m_1\to\infty}}\\
					&\quad= \inf_{m_1\in\N}\sup_{m_2\in\N} \big(b_{m_1+m_2}\big) \vphantom{\lim_{m_1\to\infty}}
				\end{align*} 
				holds true. For all \(m_1,m_2\in\N\), define \(\bp_{m_1,m_2} := b_{m_1+m_2}\). Then, \((\bp_{\tm})_{\tm\in\N^2}\) is a computable
				double sequence of computable numbers that satisfies
				\begin{align*}
					\bw(f) = \inf_{m_1\in\N}\sup_{m_2\in\N} \big(\bp_{\tm}\big).
				\end{align*}
				That is, \((\bp_{\tm})_{\tm\in\N^2}\) is a second order upper ZW description of \(\bw(f)\). 
				The claim then follows by Lemma \ref{lem:LowerZWImpliesPi2}.
			\item[2.] Consider a computable sequence \((r_{\tm})_{\tm\in\N^2}\) of rational numbers that satisfies \(\inf_{m_1\in\N}\sup_{m_2\in\N}(r_{\tm}) = x\).
				Without loss of generality, we can assume that \(r_{\tm}\) is non-negative for all \(\tm\in\N^2\).
				Following Lemma \ref{lem:ZhengWeihrauchI}, there exists a Turing machine that computes a mapping \((r_{\tm})_{\tm\in\N^2} \mapsto (\rp_{m})_{m\in\N}\) 
				such that \((\rp_{m})_{m\in\N}\) is a non-negative computable sequence of rational numbers that satisfies 
				\(\limsup_{m\to\infty} (\rp_{m}) = x\). We define \(a_m := (\rp_m)^m\) for all \(m\in\N\). Then \((a_m)_{m\in\N}\) is a computable sequence
				of rational numbers, and hence a computable sequence of computable numbers as well. By \eqref{eq:actual-bandwidth-limsup-sqrt-an}, the signal
				\begin{align*}
					f(z) = \sum_{m=0}^{\infty} \frac{a_m}{m!} z^m = \sum_{m=0}^{\infty} \frac{(\rp_m z)^m}{m!} , \quad z \in \C ,
				\end{align*}
				then satisfies \(\bw(f) = x\).
		\end{itemize}
	\end{proof}

	\begin{remark}
		Observe that the proof of the second statement of Theorem \ref{thm:CEs_Pi2} is constructive and all involved operations can,
		in principle, be computed by a Turing machine. Hence if we have a hypothetical \emph{Taylor signal generator} available, i.e.,
		a machine that maps a non-negative, \(\pi\)-bounded, computable sequence of rational numbers \((\rp_m)_{m\in\N}\) to
		the \enquote{analog} signal \(\sum_{m=1}^\infty \sfrac{(\rp_m z)^m}{m!} = f(z)\), we can in principle build an apparatus that
		receives a second order upper ZW description of a number \(x\) as an input and returns an analog signal 
		\(f\) that satisfies \(\bw(f) = x\).
	\end{remark}
	
	For \(1 \leq p \leq +\infty\) and \(f\in \CBs_\pi^p\), there exists a monotonically non-decreasing, 
	computable sequence of rational numbers \((r_n)_{n\in\N}\), such that \(\lim_{n\to\infty} r_n = \bw(f)\) is satisfied. 
	As Theorem \ref{thm:CEs_Pi2} shows, this does not hold true for signals \(f\in\CEs_\pi\), even if the monotonicity requirement is made void.
	Assume for some signal \(f\in\CEs_\pi\), there exists a computable sequence \((r_n)_{n\in\N}\) of rational numbers such that
	\(\lim_{n\to\infty} r_n = \bw(f)\) holds true. For all \(m_1,m_2\in\N\), define 
	\begin{align*}	\rp_{m_1,m_2} := r_{m_1+m_2}.
	\end{align*}
	Then, \((\rp_{m_1,m_2})_{m_1,m_2\in\N}\) is a computable double sequence of rational numbers that satisfies
	\begin{align*}	\inf_{m_1\in\N}\sup_{m_2\in\N} (\rp_{m_1,m_2}) = \sup_{m_1\in\N}\inf_{m_2\in\N} (\rp_{m_1,m_2}) = \bw(f).
	\end{align*}
	Hence, we have \(\bw(f) \in \Pi_2 \cap \Sigma_2 = \Delta_2\). But, since \(\bw[\CEs_\pi] = \Pi_2 \cap [0,\pi]\) and
	\((\Pi_2 \cap [0,\pi])\setminus \Sigma_2 \neq \emptyset\) hold true, there exist signals \(f\in \CEs_\pi\) that satisfy
	\(\bw(f)\notin \Delta_2\). Hence, for these signals, there cannot exist computable sequences \((r_n)_{n\in\N}\) of rational numbers
	that satisfy \(\lim_{n\to\infty} r_n = \bw(f)\).

\section{Bandwidth Computation and Oracle Machines}
\label{sec:OracleComputations}
	In this section, we consider bandlimited signals in the context of oracle computations.
	A machine \(\OM\) that computes a function \(g \in \Ca(A)\) for an arbitrary, non-recursive
	set \(A \subsetneq \N\) is referred to as an \emph{oracle machine}. In particular, we refer to
	the oracle machine 
	\begin{align*}	\OM_{\mathrm{T}} : n \mapsto \mathds{1}_{\At}(n)
	\end{align*}
	as \emph{totality oracle}, and to the members
	\begin{align*}	\OM_{\mathrm{H},n} : m \mapsto \mathds{1}_{D(\en_n)}(m)
	\end{align*}
	of the family \((\OM_{\mathrm{H},n})_{n\in\N}\) as \emph{halting oracles}.

	The properties of oracle computations and their relation to non-recursive sets have been intensively studied in the relevant literature;
	for a comprehensive introduction, we refer to \cite{soare87_book}. 
	Hence, any relation to a real world problem from science or engineering is interesting for the field of theoretical computer science.
	Furthermore, it yields another perspective on the \emph{degree} of uncomputability present in the problem.
	
	In the following, we want to relate the problem of computing the bandwidth of a bandlimited signal to these oracles. 
	Therefore, we consider two hypothetical devices that operate on the set \(\CEs_\pi\):
	\begin{itemize}		\item 	The device \(\Ob\) evaluates the mapping
														\begin{align*}		f	\mapsto	\begin{cases}	1, &\text{if}~\bw(f) < \sfrac{1}{2}, \\
																																			0,	&\text{otherwise}.
																												\end{cases}
														\end{align*}
														In other words, the device \(\Ob\) decides, for a signal \(f \in \CEs_\pi\),
														whether \(\bw(f)\) is smaller than \(\sfrac{1}{2}\) or not. Hence, \(\Ob\) yields a non-trivial bound on \(\bw(f)\).
										\item		For \(x \in (0,4]\), we denote by \(A[x] \subseteq \N\) the unique countably infinite set that satisfies
														\begin{align*}	x = 4\cdot\sum_{m\in A[x]}\frac{1}{2^m}.
														\end{align*}
														Furthermore, we set \(A_n[x] := \{ m \in A[x] : m \leq n+2\}\). The device \(\Oa\) then evaluates the mapping
														\begin{align*}	(f,n) \mapsto x[A_n[\bw(f)]].	
														\end{align*}
														Hence, the device \(\Oa\) yields an approximation of the number \(\bw(f)\) which is accurate up to \(n\)-binary places.
	\end{itemize}
	Additionally, we consider a device \(\SG\) which, for \((a_n)_{n\in\N} \describes f\), evaluates the mapping \((a_n)_{n\in\N} \mapsto f\). That is,
	\(\SG\) maps the description \((a_n)_{n\in\N}\) of \(f\) to the actual signal \(f \in \CEs_\pi\) according to \eqref{eq:power-series}.
	
	In the following, we will investigate the computational strength of the oracle machines \(\Ob\) and \(\Oa\). Given the fact 
	that \(\bw[\CEs_\pi] = \Pi_2 \cap [0,\pi]\) and \(\At \in \Pi_2^0\) hold true, it may not be surprising that \(\Ob\) and \(\Oa\) are at 
	least as strong as the totality oracle \(\OM_{\mathrm{T}}\). 

	\begin{theorem}	There exists a Turing machine \(\TM\) which, for \(n\in\N\), computes a mapping \(n \mapsto (a_n)_{n\in\N}\), such that
									\begin{align*}	\Ob(\SG(\TM(n))) = \OM_{\mathrm{T}}(n)
									\end{align*}
									holds true for all \(n\in\N\).
	\end{theorem}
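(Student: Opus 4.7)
The plan is to construct, uniformly in $n \in \N$, a computable sequence of Taylor coefficients $(a_k(n))_{k \in \N}$ for a signal $f_n \in \CEs_\pi$ satisfying $\bw(f_n) < \sfrac{1}{2}$ if and only if $n \in \At$; the Turing machine $\TM$ then simply outputs these coefficients, and the identity $\Ob(\SG(\TM(n))) = \OM_{\mathrm{T}}(n)$ follows immediately from the definition of $\Ob$. The key resource is the $\Pi_2^0$-representation of $\At$ supplied by Lemma \ref{lem:totality_Set_in_Pi2}, namely $n \in \At \Leftrightarrow \forall m\,\exists k\colon \Psi(n, m, k) = 1$.

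I would first produce, uniformly in $(n, m)$, an auxiliary signal $g_{n, m} \in \CEs_\pi$ whose bandwidth is $\sfrac{1}{2}$ if $\en_n(m)$ diverges and $0$ if it converges. The non-increasing computable sequence $(1 - \Psi(n, m, k))_{k \in \N}$ is a first-order upper ZW description of $\mathds{1}_{\N \setminus D(\en_n)}(m)$ uniform in $(n, m)$; rescaling by $\sfrac{1}{2}$ and promoting trivially to a second-order description allows an application of Theorem \ref{thm:CEs_Pi2}(part 2) uniformly, yielding non-negative Taylor coefficients $a_k(n, m) = \rp_k(n, m)^k$ with $0 \leq \rp_k(n, m) \leq \sfrac{1}{2}$ and $\limsup_k \rp_k(n, m) = \sfrac{1}{2}\,\mathds{1}_{\N \setminus D(\en_n)}(m)$, all computable uniformly in $(n, m, k)$.

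I would then form $f_n(z) := \sum_{m=0}^{\infty} 2^{-m-2}\, g_{n, m}(z)$, whose Taylor coefficients $A_k(n) = \sum_m 2^{-m-2}\, a_k(n, m)$ are uniformly computable via effective truncation of the $m$-sum using the uniform bound $|a_k(n, m)| \leq \sfrac{1}{2^k}$. For $n \notin \At$, any $m^*$ with $\en_n(m^*)\uparrow$ yields $A_k(n) \geq 2^{-m^*-2}\, a_k(n, m^*)$ by non-negativity, so $\bw(f_n) \geq \limsup_k \rp_k(n, m^*) = \sfrac{1}{2}$ and $\Ob(f_n) = 0 = \mathds{1}_{\At}(n)$. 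For $n \in \At$, every $g_{n, m}$ has bandwidth $0$, and the goal reduces to showing the weighted sum satisfies $\bw(f_n) < \sfrac{1}{2}$ strictly.

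The main obstacle is exactly this last strict inequality. The direct uniform bound $|A_k(n)| \leq \sfrac{1}{2^{k+1}}$ yields only $\bw(f_n) \leq \sfrac{1}{2}$, because the decay rate $\rp_k(n, m) \to 0$ is nonuniform in $m$: it depends on the unbounded halting time of $\en_n(m)$. Forcing strict inequality requires a careful refinement---either of the amplitudes in each $g_{n, m}$ (e.g.\ replacing $\sfrac{1}{2}$ by a sequence of parameters $\gamma_m$ whose behavior in $m$ is tailored to offset the weights $2^{-m-2}$) or of the positions where nonzero contributions are placed (using disjoint supports on the coefficient axis chosen sparsely enough that, for $n \in \At$, only finitely many $A_k(n)$ are nonzero), while preserving both the non-$\At$ lower bound and uniform computability of the family in $\CEs_\pi$. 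Executing this refinement---and in particular verifying that the sparse-support or amplitude-adjusted construction still yields a computable sequence in $\CEs_\pi$ while achieving the sharp separation $\bw(f_n) < \sfrac{1}{2}$ versus $\bw(f_n) \geq \sfrac{1}{2}$---is the technically delicate heart of the proof.
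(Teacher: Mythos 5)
Your plan hits a genuine gap exactly where you flag it: the strict inequality \(\bw(f_n) < \sfrac{1}{2}\) in the case \(n\in\At\). Summing the per-\(m\) signals \(g_{n,m}\) with geometric weights \(2^{-m-2}\) and taking \(k\)-th roots gives \(\sqrt[k]{|A_k(n)|}\le \sfrac{1}{2}\cdot 2^{1/k}\), so the only conclusion available is \(\bw(f_n)\le\sfrac{1}{2}\). Because the decay rate of \(\rp_k(n,m)\to 0\) is governed by the halting time of \(\en_n(m)\), which is unbounded in \(m\) even for total \(\en_n\), there is no uniform \(m\)-independent bound that would push the \(\limsup\) strictly below \(\sfrac12\). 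The sparse-support or amplitude-tilting fixes you sketch would have to contend with exactly the same non-uniformity, and you do not resolve it. As written, the two branches are \(\bw(f_n)=\sfrac12\) versus \(\bw(f_n)\ge\sfrac12\), which the device \(\Ob\) cannot separate.

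The paper avoids the problem by not decomposing over \(m\) at all. It forms the single computable double sequence \(\rpp_{n,m_1,m_2}:=\Psi(n,m_1,m_2)\), observes
\begin{align*}
\inf_{m_1\in\N}\sup_{m_2\in\N}\rpp_{n,m_1,m_2}=\mathds{1}_{\At}(n),
\end{align*}
and then applies Lemma~\ref{lem:ZhengWeihrauchI} once to this \(\inf\)-\(\sup\) expression, obtaining a computable sequence of rationals \((\rp_{n,m})_{m\in\N}\) with \(\limsup_m \rp_{n,m}=\mathds{1}_{\At}(n)\). Setting the \(m\)-th Taylor coefficient so that \(\sqrt[m]{|a_{n,m}|}=\rp_{n,m}\), formula~\eqref{eq:actual-bandwidth-limsup-sqrt-an} gives \(\bw(f)\in\{0,1\}\) outright, cleanly separated from the threshold \(\sfrac12\). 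No weighted sum, no strict-inequality headache. You invoke Lemma~\ref{lem:ZhengWeihrauchI} only indirectly and per-\(m\), which is what creates the residual summation you cannot control; invoking it directly on the full \(\Pi_2\) characterization of \(\At\) is the missing step. (One small remark on orientation: your construction aims for \(\bw(f_n)<\sfrac12\iff n\in\At\), which matches the stated equality \(\Ob(\SG(\TM(n)))=\OM_{\mathrm T}(n)\) given the definition of \(\Ob\); the paper's proof as written produces \(\bw(f)=1\) iff \(n\in\At\) and would therefore literally compute the complementary oracle, so either \(\Psi\) should be replaced by \(1-\Psi\) in the paper's display or the final sentence should acknowledge the complementation. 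This is cosmetic and does not affect the substance of either argument.)
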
\begin{proof}
									Consider the runtime function \(\Psi : \N^3 \rightarrow \{0,1\}\). For \(n\in\N\) fixed, 
									define the computable double sequence \((\rpp_{n,\tm})_{\tm\in\N^2}\) of rational numbers by setting
									\begin{align*}	\rpp_{n,\tm} := \Psi(n,m_1,m_2)
									\end{align*}
									for all \(\tm\in\N^2\). Then, the sequence \((\rpp_{n,\tm})_{\tm\in\N^2}\) satisfies 
									\begin{align*}	\sup_{m_2\in\N} (\rpp_{n,\tm}) = 	\begin{cases}		1,	& \text{if}~ m_1 \in D(g_n), \\
																																									0, & \text{otherwise},
																																	\end{cases}
									\end{align*}
									for all \(n,m_1\in\N\), and hence
									\begin{align*}	\inf_{m_1\in\N}\sup_{m_2\in\N} (\rpp_{n,\tm}) =	\begin{cases}		1,	& \text{if}~ D(g_n) = \N, \\
																																										0, & \text{otherwise},
																																								\end{cases}
									\end{align*}
									for all \(n\in\N\). Following Lemma \ref{lem:ZhengWeihrauchI}, there exists a Turing machine which computes a mapping 
									\((\rpp_{n,\tm})_{\tm\in\N^2} \mapsto (\rp_{n,m})_{m\in\N}\), such that 
									\((\rp_{n,m})_{m\in\N}\) is a computable sequence of rational numbers that satisfies 
									\begin{align*}	\limsup_{m\to\infty} (\rp_{n,m}) = \inf_{m_1\in\N}\sup_{m_2\in\N} (\rpp_{n,\tm}).
									\end{align*}
									Consider the mapping \(\xi : M \mapsto 1\) and define the computable double sequence \((r_{n,\tm})_{\tm\in\N^2}\) of rational numbers via
									\begin{align*}	r_{n,m_1,m_2} := \big(\rp_{n,m_1}\big)^{m_1} 
									\end{align*}
									for all \(n,m_1,m_2\in\N\). The pair \(((r_{n,\tm})_{\tm\in\N^2},\xi)\) is the standard description
									of a computable sequence \((a_{n,m})_{m\in\N}\)  of computable numbers that satisfies 
									\begin{align*} 	\limsup_{m\to\infty} \sqrt[m]{|a_{n,m}|} = \limsup_{m\to\infty} (\rp_{n,m})
																	=		\begin{cases}		1,	& \text{if}~ D(g_n) = \N, \\
																											0, & \text{otherwise}.
																			\end{cases}.
									\end{align*}
									We now define \(\TM : n \mapsto \TM(n) := (a_{n,m})_{m\in\N}\). Then, by \eqref{eq:actual-bandwidth-limsup-sqrt-an}, 
									the signal \(f(n) := \SG(\TM(n))\) satisfies
									\begin{align*} 	\bw(f) = 	\begin{cases}	1,	& \text{if}~ D(g_n) = \N, \\
																				0, 	& \text{otherwise}.
																						\end{cases}
									\end{align*}
									Consequently, for all \(n\in\N\), we have \(\sfrac{1}{2} < \bw(f)\) if and only if \(\en_n\) is a total function.
									Hence, \(\Ob(\SG(\TM(n))) = \OM_{\mathrm{T}}(n)\) holds true for all \(n\in\N\).
	\end{proof}

	\begin{theorem}	There exist a \(f \in \CEs_\pi\) as well as a Turing machine \(\TM\) which, for \(n\in\N\) and \(x\in \Q\), 
									computes a mapping \((x,n) \mapsto m\in\{0,1\}\), such that
									\begin{align*}	\TM(\Oa(f, n), n) = \OM_{\mathrm{T}}(n)  
									\end{align*}
									holds true for all \(n\in\N\). 
	\end{theorem}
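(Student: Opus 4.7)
The plan is to produce a single signal $f\in\CEs_\pi$ whose bandwidth $\bw(f)$ has a binary expansion that literally encodes the characteristic function of $\At$, so that inspecting one specific digit of the rational approximation $\Oa(f,n)$ recovers $\mathds{1}_{\At}(n)$ by an elementary arithmetic operation.

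Concretely, I would set $y:=\sum_{n\in\At}2^{-n}$. By Lemma~\ref{lem:totality_Set_in_Pi2} we have $\At\in\Pi_2^0$, and the Zheng--Weihrauch transfer stated after Definition~\ref{def:KlMo} gives $x[\At]\in\Pi_2$; since $y=2\,x[\At]$ it follows that $y\in\Pi_2$. The set $\At$ is infinite (it contains, for instance, the indices of all constant functions) but strictly contained in $\N$, so $y\in(0,2)\subset[0,\pi]$. Theorem~\ref{thm:CEs_Pi2}(2) then yields an $f\in\CEs_\pi$ with $\bw(f)=y$. From $y/4=\sum_{n\in\At}2^{-(n+2)}$ and the infiniteness of $\At$, the set $\{n+2:n\in\At\}$ is countably infinite and satisfies the defining identity of $A[\cdot]$, so by the uniqueness convention in Section~\ref{sec:OracleComputations} we must have $A[y]=\{n+2:n\in\At\}$.

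Plugging this into the definition of $\Oa$ gives
\begin{align*}
\Oa(f,n) \;=\; x[A_n[y]] \;=\; \sum_{k\in\At,\,k\leq n}\frac{1}{2^{k+3}},
\end{align*}
so $2^{n+3}\Oa(f,n)=\sum_{k\in\At,\,k\leq n}2^{n-k}$ is a non-negative integer whose parity equals $\mathds{1}_{\At}(n)$ (the $k=n$ term contributes $1$; every other term is even). I would therefore take $\TM$ to be the machine that, on input $(x,n)\in\Q\times\N$, forms the rational $2^{n+3}x$, rounds to the nearest integer, and outputs the result modulo $2$. This is an elementary rational-arithmetic operation, and by the identity above $\TM(\Oa(f,n),n)=\mathds{1}_{\At}(n)=\OM_{\mathrm{T}}(n)$ for every $n\in\N$.

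The only non-routine point I anticipate is justifying that $A[y]$ genuinely equals $\{n+2:n\in\At\}$ rather than some competing finite representation of the same real number; this hinges entirely on the infiniteness of $\At$, which is immediate. Everything else reduces to an invocation of Theorem~\ref{thm:CEs_Pi2}(2) together with the trivial observation that extracting a single binary digit of a rational number is Turing computable.
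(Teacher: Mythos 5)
Your proof is correct and follows essentially the same strategy as the paper's: encode $\At$ into the binary expansion of $\bw(f)$ by invoking Theorem~\ref{thm:CEs_Pi2}(2) on (a scaling of) $x[\At]$, and then observe that $\Oa(f,n)$ hands the $n$-th bit of $\ind_{\At}$ to a Turing machine, which extracts it by elementary rational arithmetic. The only deviation is your choice of $\bw(f)=2x[\At]$ rather than $x[\At]$ itself, which cleanly aligns the digit that witnesses $n\in\At$ with the precision parameter in $\Oa(f,n)$ and avoids any off-by-constant bookkeeping in the bit-extraction step.
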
\begin{proof}		
									We have \(x[\At] \in \Pi_2\), c.f. Lemma \ref{lem:totality_Set_in_Pi2}. Hence, by Theorem \ref{thm:CEs_Pi2},
									there exists a signal \(f\in\CEs_\pi\) such that \(\bw(f) = x[\At]\) holds true. Observe that
									for all \(n\in\N\), we have \(D(\en_n) = \N\) if and only if
									\begin{align*} n \in A_n[x[\At]] = A_n[\bw(f)] 
									\end{align*} 
									is satisfied. In particular, \(D(\en_n) = \N\) holds true if and only if the \(n\)th binary place of 
									\(4\cdot x[A_n[\bw(f)]]\) equals one. Hence, with \(\Oa(f, n) =  x[A_n[\bw(f)]]\), we have \(\bw(f) = x[\At]\) if and only if
									\begin{align*} \underbrace{\Oa(f, n) \cdot 2^{n+2} - \lfloor \Oa(f, n) \cdot 2^{n+1} \rfloor \cdot 2}_{ =: g(\Oa(f, n), n)} = 1.	
									\end{align*}
									Defining
									\begin{align*} 	\TM(\Oa(f, n), n) := 	\begin{cases}	1, &\text{if}~ g(\Oa(f, n), n) = 1, \\
																																			0, &\text{otherwise},
																												\end{cases}
									\end{align*}
									yields the required Turing machine.
	\end{proof}
	
	The remainder of this section will be dedicated to relating the set \(\CBs_\pi^{\infty}\) to 
	the family of halting oracles \((\OM_{\mathrm{H},n})_{n\in\N}\) as \emph{halting oracles}.

	\begin{theorem}
		Let \(f\) satisfy \(f \in \CBs_\pi^{\infty}\). There exists a Turing machine \(\TM\) that computes a mapping
		\begin{align*} \big((\OM_{\mathrm{H},n})_{n\in\N}, f, m\big) \mapsto q \in \Q,
		\end{align*}
		such that \(| \bw(f) - q | < 2^{-m}\) holds true for all \(m \in \N\).
	\end{theorem}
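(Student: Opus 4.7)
The strategy combines the $\Sigma_1$-characterization of $\bw[\CBs_\pi^\infty]$ from Theorem~\ref{thm:CBs_Sigma1} with the uniform semi-decidability of the Dedekind cut of $\bw(f)$ from Lemma~\ref{lem:CBs_semi_decidable}. A halting oracle promotes semi-decidability to decidability; once the rational Dedekind cut of $\bw(f)$ is decidable, a grid search on a dyadic grid of mesh $2^{-(m+2)}$ yields the required $2^{-m}$-approximation.

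First, I would extract from the proof of Lemma~\ref{lem:CBs_semi_decidable} a uniform construction: given a standard description of $f \in \CBs_\pi^\infty$, compute an index $n(f) \in \N$ of a recursive function $\en_{n(f)}$ such that, on input (the code of) a rational $q \in (0,\pi) \cap \Q$, $\en_{n(f)}$ halts if and only if $q < \bw(f)$. This uniformity follows from the s-m-n theorem applied to the algorithm underlying Lemma~\ref{lem:CBs_semi_decidable}, whose ingredients---the reduction from $p = \infty$ to $p = 1$ cited there, the Turing-computability of $\Ft{f}$ on $[-\pi,\pi]$, and the construction of the rational lower approximants $(\rp_m)_{m\in\N}$ to $x(\sigma)$---are all uniform transformations on standard descriptions. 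With this index, the oracle $\OM_{\mathrm{H},n(f)}$ decides, for each $q \in (0,\pi) \cap \Q$, whether $q < \bw(f)$.

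Second, on input $m \in \N$, the Turing machine $\TM$ sets $K := \lfloor \pi \cdot 2^{m+2}\rfloor$ and $q_k := k \cdot 2^{-(m+2)}$ for $k = 1,\ldots,K$. For each such $k$, it queries $\OM_{\mathrm{H},n(f)}$ at $q_k$, obtaining the bit asserting \enquote{$q_k < \bw(f)$}. It then lets $k^*$ be the largest index with an affirmative answer, or $k^* := 0$ if no such index exists, and outputs $q := q_{k^*}$ (interpreted as $0$ if $k^* = 0$). By construction we have $q_{k^*} \leq \bw(f)$, while $q_{k^*+1} \geq \bw(f)$ --- the latter by maximality of $k^*$ if $k^* < K$, and because $q_{K+1} > \pi \geq \bw(f)$ if $k^* = K$. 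Hence $|\bw(f) - q| \leq 2^{-(m+2)} < 2^{-m}$.

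The main obstacle is the first step: verifying that $n(f)$ is genuinely computable from a standard description of $f$. This requires inspecting Lemma~\ref{lem:CBs_semi_decidable} (and the cited $\CBs_\pi^\infty$-to-$\CBs_\pi^1$ reduction preserving the bandwidth) to confirm that each stage is a uniform operation on standard descriptions, so that the s-m-n theorem yields the required index computably. Once this uniformity is in place, the second and third stages --- the grid search and the error bound --- are routine.
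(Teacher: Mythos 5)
Your proof is correct, but it takes a genuinely different route from the paper's. The paper works directly with the dyadic expansion of \(\bw(f)\): from a description of \(f\) it constructs a monotonically non-decreasing computable sequence of rationals converging to \(\bw(f)\), then extracts from it a recursively enumerable set \(A\subseteq\N\) whose indicator function encodes the binary digits of \(\bw(f)\), and finally uses the halting oracles to decide membership of the first \(m+1\) indices, so that the resulting dyadic partial sum \(x[B_m]\) is within \(2^{-m}\) of \(\bw(f)\). You instead reuse Lemma~\ref{lem:CBs_semi_decidable} to obtain (uniformly, via s-m-n) an index \(n(f)\) semi-deciding the Dedekind cut \(\{q\in(0,\pi)\cap\Q : q<\bw(f)\}\), promote this to a decision procedure with the halting oracle \(\OM_{\mathrm{H},n(f)}\), and then do a dyadic grid search. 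Your decomposition is more modular---it delegates the \(\Sigma_1\) structure entirely to Lemma~\ref{lem:CBs_semi_decidable} rather than rebuilding it---and it sidesteps the well-known non-uniqueness of binary expansions at dyadic rationals, which the paper's set-theoretic construction of \(A_l\) has to navigate implicitly. The paper's approach, on the other hand, makes the connection to the arithmetical hierarchy of \(\bw(f)\) (as a sum \(x[A]\) for some \(A\in\Sigma_1^0\)) more transparent. Both approaches face, and resolve in the same way, the key uniformity issue you flag: passing from a description of \(f\) to a computable index for a recursive function whose domain encodes the relevant information about \(\bw(f)\). One very minor remark on your grid search: the bound \(K:=\lfloor \pi\cdot 2^{m+2}\rfloor\) requires knowing enough bits of \(\pi\) to determine a floor, which is harmless here but can be avoided altogether by taking any computable overestimate such as \(K:=4\cdot 2^{m+2}\).
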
\begin{proof}
		The proof again employs the dyadic expansion of the number \(\bw(f)\). Without loss of generality
		we assume \(\bw(f) \in [0,1]\cap\Sigma_1\). There exists a Turing machine \(\TM'\) that computes a mapping
		\(f \mapsto (r_l)_{l\in\N}\), such that \((r_l)_{l\in\N}\) is a monotonically non-decreasing,
		computable sequence of rational numbers, such that \(\lim_{l\to\infty} r_l = \bw(f)\) holds true, c.f. \cite{boche20d} for details.
		Denote by \(A_l\) the largest subset of \(\N\) such that \(x[A_l] \leq \bw(f)\) holds true and define
		\begin{align*}
			A := \bigcup_{l\in\N} A_l.
		\end{align*} 
		Then, we have \(\bw(f) = x[A]\) as well as \(A_l \subseteq A_{l+1}\) for all \(l\in\N\). Furthermore,
		the set \(A\) is recursively enumerable and the mapping \((r_l)_{l\in\N} \mapsto g\), where \(g\) is a recursive
		function that satisfies \(D(g) = A\), can be computed by a Turing machine. Using the family \((\OM_{\mathrm{H},n})_{n\in\N}\)
		of halting oracles, we can now compute the set
		\begin{align*} B_m := \{k \in \N : k \leq m+1, k \in D(g)\}.
		\end{align*}
		For all \(m\in \N\), we have
		\begin{align*}
			B_m = \{k\in A: k \leq m + 1\}.
		\end{align*}
		Since \(A\) contains the dyadic expansion of \(\bw(f)\), we have \(| \bw(f) - x[B_m] | < 2^{-m}\). 
	\end{proof}
	
	\begin{remark} For \(f \in \CEs_\pi\) such that \(\bw(f) \in \Delta_2\) holds true, the family \((\OM_{\mathrm{H},n})_{n\in\N}\)
		allows for the computation of \(\bw(f)\) as well, c.f. \cite[Lemma~5.4, p.~58]{zw01}. However, the dependency on \(f\) is
		non recursive, since the description \((a_n)_{n\in\N}\) in the sense of Definition \ref{def:CEs_pi} is not a feasible input.
	\end{remark}

\section{Conclusion}\label{sec:relation-prior-work}
	In the previous chapters, we have considered different descriptions for the signal classes \(\CBs_\pi^p\)
	for \(1 \leq p \leq +\infty\) and \(\CEs_\pi\), and showed that \(\bw(\CBs_\pi^p)= \Sigma_1\cap[0,\pi]\) and \(\bw(\CEs_\pi)\cap[0,\pi]\) hold true
	for all \(p\) that satisfy \(1 \leq p \leq +\infty\), which is a full characterization of \(\bw(\CBs_\pi^p)\) and \(\bw(\CEs_\pi)\) in terms of the 
	arithmetical hierarchy of real numbers. In the scope of our analysis, we were able to confirm all conjectures posed in \cite{boche20d}: 
	\begin{enumerate}
		\item There exist signals \(f\in \CEs_\pi\) such that \emph{no} computable sequence \((r_n)_{n\in\N}\) of rational numbers
			satisfies \(\lim_{n\to\infty} r_n = \bw(f)\).
		\item For signals \(f\in\CEs_\pi\) and numbers \(\sigma \in (0,\pi)\cap \Rc\), the set \(\{f \in \CEs_\pi : \bw(f) > \sigma\}\)
			is not semi-decidable.
		\item If \(\TMl\) and \(\TMu\) are Turing machines such that for all signals \(f\) in \(\CEs_\pi\) that satisfy \(\bw(f) \leq \pi\), we have
			\begin{align*} \TMl(f) \leq \bw(f) \leq \TMu(f),
			\end{align*}
			then \(\TMl\) and \(\TMu\) necessarily return trivial values, i.e., \(\TMl(f) = 0\) and \(\TMu(f) = \pi\) holds true for all feasible signals \(f\).
	\end{enumerate}
	Last but not least, we showed that the problem of computing the bandwidth of a signal \(f\in\CEs_\pi\) is at least as hard as computing a totality oracle
	for the set of recursive functions. Computing the bandwidth of a signal \(f\in\CBs^p_\pi\), for \(1\leq p\leq +\infty\), on the other hand, can be reduced to
	computing a halting oracle.
	
	It is interesting to note that some of the questions can be positively answered for
	\(\CBs_{\pi}^2\). For example, the set \(\left\{ f \in \CBs_{\pi}^2 \colon \bw(f) > \sigma \right\}\) is semi-decidable.
	Further, for \(f \in \CBs_{\pi}^2\), we can find a computable monotonically increasing sequence of lower bounds for
	\(\bw(f)\) that converges to \(\bw(f)\). The restrictions that we have in the time and frequency domain for signals in
	\(\CBs_{\pi}^1\) and \(\CBs_{\pi}^2\) make it possible that important signal processing
	problems, such as those above, can be algorithmically solved.
	In contrast, signals in \(\CEs_{\pi}\) have---except for a simple growth condition---no such
	restrictions. 

	Questions of computability have not received much attention in control theory and signal
	processing so far.
	Recent results have shown that there are important problems that cannot
	be solved on a digital computer.
	For example, problems can occur in the computation of the Fourier transform
	\cite{boche19g}, the Fourier series \cite{boche19d}, and the spectral factorization
	\cite{boche20a}, as well as in downsampling and the computation of the bandlimited
	interpolation \cite{boche19k}.

	In \cite{boche20c}, conditions were analyzed under which the computability of a
	discrete-time signal implies the computability of the corresponding continuous-time
	signal.
	The computability of the actual bandwidth was studied in \cite{boche20d} and
	\cite{boche21d_accepted}.
	In \cite{boche20d} it was shown that there exist signals \(f \in \CBs_{\pi}^2\) for which
	the actual bandwidth \(\bw(f)\) is not computable.
	In contrast to \cite{boche20d}, we study the most general class of bandlimited signals in
	this paper and the question whether upper and lower bounds can be algorithmically
	determined.

	The problem of computing the period of a periodic computable continuous function is
	similar to the problem studied in this paper. 
	In Shor's famous algorithms for factorizing natural numbers and computing the discrete
	logarithm \cite{shor97}, the core task---and the only part that has to
	be implemented on a universal quantum computer---is the computation of the period of
	certain computable continuous functions.
	Interestingly, all candidates for a ``post-quantum cryptography'' that already failed,
	were broken by quantum algorithms that compute the period of certain functions.
	It seems as if finding periods of functions is the only class of well-investigated
	mathematical problems for which it was possible to develop quantum algorithms that have a
	substantial complexity advantage over the best known classical algorithms.

	As indicated in the introduction, our results contribute to the problem of understanding
	the differences between different approaches for computation.
	Since the problem of sampling analog signals inherently involves analog technology, the
	question arises if, for example, an analog or neuromorphic computer could solve the
	bandwidth estimation problem up to any accuracy. 
	To answer questions of this kind, it is important to understand the mathematical structure
	behind the different approaches for computation. 
	The authors believe that this topic will gain further importance in the future.


	\bibliographystyle{elsarticle-num-names}
	\bibliography{noauthorhyphen,IEEEfull,names_long,literature_part}

\begin{thebibliography}{39}
\expandafter\ifx\csname natexlab\endcsname\relax\def\natexlab#1{#1}\fi
\providecommand{\url}[1]{\texttt{#1}}
\providecommand{\href}[2]{#2}
\providecommand{\path}[1]{#1}
\providecommand{\DOIprefix}{doi:}
\providecommand{\ArXivprefix}{arXiv:}
\providecommand{\URLprefix}{URL: }
\providecommand{\Pubmedprefix}{pmid:}
\providecommand{\doi}[1]{\href{http://dx.doi.org/#1}{\path{#1}}}
\providecommand{\Pubmed}[1]{\href{pmid:#1}{\path{#1}}}
\providecommand{\bibinfo}[2]{#2}
\ifx\xfnm\relax \def\xfnm[#1]{\unskip,\space#1}\fi
\bibitem[{Isermann(1989)}]{isermann89_book}
\bibinfo{author}{R.~Isermann}, \bibinfo{title}{Digital Control Systems},
  volume~\bibinfo{volume}{1}, \bibinfo{publisher}{Springer-Verlag Berlin
  Heidelberg}, \bibinfo{year}{1989}.
\bibitem[{Ogata(1995)}]{ogata95_book}
\bibinfo{author}{K.~Ogata}, \bibinfo{title}{Discrete-Time Control Systems},
  \bibinfo{publisher}{Pearson}, \bibinfo{year}{1995}.
\bibitem[{Shannon(1949)}]{shannon49}
\bibinfo{author}{C.~E. Shannon},
\newblock \bibinfo{title}{Communication in the presence of noise},
\newblock in: \bibinfo{booktitle}{Proceedings of the IRE},
  volume~\bibinfo{volume}{37}, \bibinfo{year}{1949}, pp.
  \bibinfo{pages}{10--21}.
\bibitem[{Higgins(1996)}]{higgins96_book}
\bibinfo{author}{J.~R. Higgins}, \bibinfo{title}{Sampling Theory in {F}ourier
  and Signal Analysis -- Foundations}, \bibinfo{publisher}{Oxford University
  Press}, \bibinfo{year}{1996}.
\bibitem[{Pawlak and Stadtm{\"u}ller(1996)}]{pawlak96}
\bibinfo{author}{M.~Pawlak}, \bibinfo{author}{U.~Stadtm{\"u}ller},
\newblock \bibinfo{title}{Recovering band-limited signals under noise},
\newblock \bibinfo{journal}{{IEEE} Transactions on Information Theory}
  \bibinfo{volume}{42} (\bibinfo{year}{1996}) \bibinfo{pages}{1425--1438}.
  \DOIprefix\doi{10.1109/18.532883}.
\bibitem[{Cambanis and Masry(1982)}]{cambanis82a}
\bibinfo{author}{S.~Cambanis}, \bibinfo{author}{E.~Masry},
\newblock \bibinfo{title}{Truncation error bounds for the cardinal sampling
  expansion of band-limited signals},
\newblock \bibinfo{journal}{{IEEE} Transactions on Information Theory}
  \bibinfo{volume}{28} (\bibinfo{year}{1982}) \bibinfo{pages}{605--612}.
  \DOIprefix\doi{10.1109/TIT.1982.1056527}.
\bibitem[{Landau(1967)}]{landau67a}
\bibinfo{author}{H.~J. Landau},
\newblock \bibinfo{title}{Sampling, data transmission, and the {N}yquist rate},
\newblock \bibinfo{journal}{Proceedings of the {IEEE}} \bibinfo{volume}{55}
  (\bibinfo{year}{1967}) \bibinfo{pages}{1701--1706}.
\bibitem[{Herley and Wong(1999)}]{herley99}
\bibinfo{author}{C.~Herley}, \bibinfo{author}{P.~W. Wong},
\newblock \bibinfo{title}{Minimum rate sampling and reconstruction of signals
  with arbitrary frequency support},
\newblock \bibinfo{journal}{{IEEE} Transactions on Information Theory}
  \bibinfo{volume}{45} (\bibinfo{year}{1999}) \bibinfo{pages}{1555--1564}.
  \DOIprefix\doi{10.1109/18.771158}.
\bibitem[{Habib(2001)}]{habib01}
\bibinfo{author}{M.~K. Habib},
\newblock \bibinfo{title}{Digital representations of operators on band-limited
  random signals},
\newblock \bibinfo{journal}{{IEEE} Transactions on Information Theory}
  \bibinfo{volume}{47} (\bibinfo{year}{2001}) \bibinfo{pages}{173--177}.
  \DOIprefix\doi{10.1109/18.904520}.
\bibitem[{M{\"o}nich and Boche(2017)}]{monich17b}
\bibinfo{author}{U.~J. M{\"o}nich}, \bibinfo{author}{H.~Boche},
\newblock \bibinfo{title}{A two channel system approximation for bandlimited
  functions},
\newblock \bibinfo{journal}{{IEEE} Transactions on Information Theory}
  \bibinfo{volume}{63} (\bibinfo{year}{2017}) \bibinfo{pages}{5496--5505}.
  \DOIprefix\doi{10.1109/TIT.2017.2724550}.
\bibitem[{Weinstein and Ebert(1971)}]{weinstein71}
\bibinfo{author}{S.~B. Weinstein}, \bibinfo{author}{P.~M. Ebert},
\newblock \bibinfo{title}{Data transmission by frequency-division multiplexing
  using the discrete {Fourier} transform},
\newblock \bibinfo{journal}{{IEEE} Transactions on Communication Technology}
  \bibinfo{volume}{19} (\bibinfo{year}{1971}) \bibinfo{pages}{628--634}.
  \DOIprefix\doi{10.1109/TCOM.1971.1090705}.
\bibitem[{Bingham(1990)}]{bingham90}
\bibinfo{author}{J.~A.~C. Bingham},
\newblock \bibinfo{title}{Multicarrier modulation for data transmission: an
  idea whose time has come},
\newblock \bibinfo{journal}{{IEEE} Communications Magazine}
  \bibinfo{volume}{28} (\bibinfo{year}{1990}) \bibinfo{pages}{5--14}.
  \DOIprefix\doi{10.1109/35.54342}.
\bibitem[{Boche and M{\"o}nich(2019{\natexlab{a}})}]{boche19g}
\bibinfo{author}{H.~Boche}, \bibinfo{author}{U.~J. M{\"o}nich},
\newblock \bibinfo{title}{{T}uring computability of the {F}ourier transform of
  bandlimited functions},
\newblock in: \bibinfo{booktitle}{Proceedings of the 2019 {IEEE} International
  Symposium on Information Theory}, \bibinfo{year}{2019}{\natexlab{a}}, pp.
  \bibinfo{pages}{380--384}. \DOIprefix\doi{10.1109/ISIT.2019.8849462}.
\bibitem[{Boche and M{\"o}nich(2019{\natexlab{b}})}]{boche19f}
\bibinfo{author}{H.~Boche}, \bibinfo{author}{U.~J. M{\"o}nich},
\newblock \bibinfo{title}{Computability of the {F}ourier transform and {ZFC}},
\newblock in: \bibinfo{booktitle}{2019 International Conference on Sampling
  Theory and Applications {(SampTA)}}, \bibinfo{year}{2019}{\natexlab{b}}, pp.
  \bibinfo{pages}{1--4}. \DOIprefix\doi{10.1109/SampTA45681.2019.9030870}.
\bibitem[{Boche and Pohl(2019)}]{boche19h}
\bibinfo{author}{H.~Boche}, \bibinfo{author}{V.~Pohl},
\newblock \bibinfo{title}{On the algorithmic solvability of the spectral
  factorization and the calculation of the {W}iener filter on {T}uring
  machines},
\newblock in: \bibinfo{booktitle}{Proceedings of the 2019 {IEEE} International
  Symposium on Information Theory}, \bibinfo{year}{2019}, pp.
  \bibinfo{pages}{2459--2463}. \DOIprefix\doi{10.1109/ISIT.2019.8849557}.
\bibitem[{Boche and M{\"o}nich(2021)}]{boche21d_accepted}
\bibinfo{author}{H.~Boche}, \bibinfo{author}{U.~J. M{\"o}nich},
\newblock \bibinfo{title}{Algorithmic computability of the signal bandwidth},
\newblock \bibinfo{journal}{{IEEE} Transactions on Information Theory}
  \bibinfo{volume}{67} (\bibinfo{year}{2021}) \bibinfo{pages}{2450--2471}.
  \DOIprefix\doi{10.1109/TIT.2021.3057672}.
\bibitem[{Boche and M{\"o}nich(2020)}]{boche20d}
\bibinfo{author}{H.~Boche}, \bibinfo{author}{U.~J. M{\"o}nich},
\newblock \bibinfo{title}{Optimal sampling rate and bandwidth of bandlimited
  signals---an algorithmic perspective},
\newblock in: \bibinfo{booktitle}{Proceedings of the {IEEE} International
  Conference on Acoustics, Speech, and Signal Processing ({ICASSP} '20)},
  \bibinfo{year}{2020}, pp. \bibinfo{pages}{5905--5909}.
  \DOIprefix\doi{10.1109/ICASSP40776.2020.9053158}.
\bibitem[{Levin(1996)}]{levin96_book}
\bibinfo{author}{B.~Y. Levin}, \bibinfo{title}{Lectures on Entire Functions},
  \bibinfo{publisher}{AMS, Providence, Rhode Island}, \bibinfo{year}{1996}.
\bibitem[{Doetsch(1950)}]{doetsch50_book}
\bibinfo{author}{G.~Doetsch}, \bibinfo{title}{Handbuch der
  Laplace-transformationen -- Band I -- Theorie der Laplace-Transformationen},
  \bibinfo{publisher}{Verlag Birkh\"auser Basel}, \bibinfo{year}{1950}.
\bibitem[{Weihrauch(2000)}]{weihrauch00_book}
\bibinfo{author}{K.~Weihrauch}, \bibinfo{title}{Computable Analysis: An
  Introduction}, \bibinfo{publisher}{Springer-Verlag Berlin Heidelberg},
  \bibinfo{year}{2000}.
\bibitem[{Pour-El and Richards(1989)}]{pour-el89_book}
\bibinfo{author}{M.~B. Pour-El}, \bibinfo{author}{J.~I. Richards},
  \bibinfo{title}{Computability in Analysis and Physics},
  \bibinfo{publisher}{Springer-Verlag Berlin Heidelberg}, \bibinfo{year}{1989}.
\bibitem[{Boolos et~al.(2002)Boolos, Burgess, and Jeffrey}]{boolos02_book}
\bibinfo{author}{G.~S. Boolos}, \bibinfo{author}{J.~P. Burgess},
  \bibinfo{author}{R.~C. Jeffrey}, \bibinfo{title}{Computability and Logic},
  \bibinfo{publisher}{Cambridge University Press}, \bibinfo{year}{2002}.
\bibitem[{Avigad and Brattka(2014)}]{avigad14}
\bibinfo{author}{J.~Avigad}, \bibinfo{author}{V.~Brattka},
\newblock \bibinfo{title}{Computability and analysis: the legacy of {A}lan
  {T}uring},
\newblock in: \bibinfo{editor}{R.~Downey} (Ed.), \bibinfo{booktitle}{{T}uring's
  Legacy: Developments from {T}uring's Ideas in Logic},
  \bibinfo{publisher}{Cambridge University Press}, \bibinfo{year}{2014}.
\bibitem[{Soare(1987)}]{soare87_book}
\bibinfo{author}{R.~I. Soare}, \bibinfo{title}{Recursively Enumerable Sets and
  Degrees}, Perspectives in Mathematical Logic,
  \bibinfo{publisher}{Springer-Verlag Berlin Heidelberg}, \bibinfo{year}{1987}.
\bibitem[{Soare(2016)}]{Soare2016}
\bibinfo{author}{R.~I. Soare}, \bibinfo{title}{Turing Computablility},
  \bibinfo{publisher}{Springer-Verlag Berlin Heidelberg}, \bibinfo{year}{2016}.
\bibitem[{Turing(1936)}]{turing36}
\bibinfo{author}{A.~M. Turing},
\newblock \bibinfo{title}{On computable numbers, with an application to the
  {Entscheidungsproblem}},
\newblock \bibinfo{journal}{Proceedings of the London Mathematical Society}
  \bibinfo{volume}{s2-42} (\bibinfo{year}{1936}) \bibinfo{pages}{230--265}.
\bibitem[{Turing(1937)}]{turing37}
\bibinfo{author}{A.~M. Turing},
\newblock \bibinfo{title}{On computable numbers, with an application to the
  {Entscheidungsproblem}. {A} correction},
\newblock \bibinfo{journal}{Proceedings of the London Mathematical Society}
  \bibinfo{volume}{s2-43} (\bibinfo{year}{1937}) \bibinfo{pages}{544--546}.
\bibitem[{Kleene(1936)}]{Kl36}
\bibinfo{author}{S.~C. Kleene},
\newblock \bibinfo{title}{General recursive functions of natural numbers},
\newblock \bibinfo{journal}{Mathematische Annalen} \bibinfo{volume}{112}
  (\bibinfo{year}{1936}) \bibinfo{pages}{727–742}. \URLprefix
  \url{https://doi.org/10.1007/BF01565439}. \DOIprefix\doi{10.1007/BF01565439}.
\bibitem[{Turing(1937)}]{Tu37}
\bibinfo{author}{A.~M. Turing},
\newblock \bibinfo{title}{Computability and \(\lambda\)-definability},
\newblock \bibinfo{journal}{Journal of Symbolic Logic} \bibinfo{volume}{2}
  (\bibinfo{year}{1937}) \bibinfo{pages}{153–163}.
  \DOIprefix\doi{10.2307/2268280}.
\bibitem[{Manin(2010)}]{Manin10}
\bibinfo{author}{Y.~I. Manin}, \bibinfo{title}{A Course in Mathematical Logic
  for Mathematicians}, Graduate Texts in Mathematics,
  \bibinfo{publisher}{Springer-Verlag New York}, \bibinfo{year}{2010}.
\bibitem[{Gödel(1931)}]{Go31}
\bibinfo{author}{K.~Gödel},
\newblock \bibinfo{title}{Über formal unentscheidbare sätze der principia
  mathematica und verwandter systeme i},
\newblock \bibinfo{journal}{Monatshefte für Mathematik und Physik}
  \bibinfo{volume}{38} (\bibinfo{year}{1931}) \bibinfo{pages}{173–198}.
  \DOIprefix\doi{10.1007/BF01700692}.
\bibitem[{Zheng and Weihrauch(2001)}]{zw01}
\bibinfo{author}{X.~Zheng}, \bibinfo{author}{K.~Weihrauch},
\newblock \bibinfo{title}{The arithmetical hierarchy of real numbers},
\newblock \bibinfo{journal}{Mathematical Logic Quarterly} \bibinfo{volume}{47}
  (\bibinfo{year}{2001}) \bibinfo{pages}{51--65}.
\bibitem[{Kleene(1943)}]{Kl43}
\bibinfo{author}{S.~C. Kleene},
\newblock \bibinfo{title}{Recursive predicates and quantifiers},
\newblock \bibinfo{journal}{Transactions of the American Mathematical Society}
  \bibinfo{volume}{53} (\bibinfo{year}{1943}) \bibinfo{pages}{41--73}.
\bibitem[{Mostowski(1947)}]{Mo47}
\bibinfo{author}{A.~Mostowski},
\newblock \bibinfo{title}{On definable sets of positive integers},
\newblock \bibinfo{journal}{Fundamenta Mathematicae} \bibinfo{volume}{34}
  (\bibinfo{year}{1947}) \bibinfo{pages}{81--112}.
\bibitem[{Boche and M{\"o}nich(2019)}]{boche19d}
\bibinfo{author}{H.~Boche}, \bibinfo{author}{U.~J. M{\"o}nich},
\newblock \bibinfo{title}{On the {Fourier} representation of computable
  continuous signals},
\newblock in: \bibinfo{booktitle}{Proceedings of the {IEEE} International
  Conference on Acoustics, Speech, and Signal Processing ({ICASSP} '19)},
  \bibinfo{year}{2019}, pp. \bibinfo{pages}{5013--5017}.
  \DOIprefix\doi{10.1109/ICASSP.2019.8683074}.
\bibitem[{Boche and Pohl(2020)}]{boche20a}
\bibinfo{author}{H.~Boche}, \bibinfo{author}{V.~Pohl},
\newblock \bibinfo{title}{On the algorithmic solvability of spectral
  factorization and applications},
\newblock \bibinfo{journal}{{IEEE} Transactions on Information Theory}
  \bibinfo{volume}{66} (\bibinfo{year}{2020}) \bibinfo{pages}{4574--4592}.
  \DOIprefix\doi{10.1109/TIT.2020.2968028}.
\bibitem[{Boche and M{\"o}nich(2019)}]{boche19k}
\bibinfo{author}{H.~Boche}, \bibinfo{author}{U.~J. M{\"o}nich},
\newblock \bibinfo{title}{Downsampling of bounded bandlimited signals and the
  bandlimited interpolation: Analytic properties and computability},
\newblock \bibinfo{journal}{{IEEE} Transactions on Signal Processing}
  \bibinfo{volume}{67} (\bibinfo{year}{2019}) \bibinfo{pages}{6424--6439}.
  \DOIprefix\doi{10.1109/TSP.2019.2954972}.
\bibitem[{Boche and M{\"o}nich(2020)}]{boche20c}
\bibinfo{author}{H.~Boche}, \bibinfo{author}{U.~J. M{\"o}nich},
\newblock \bibinfo{title}{Effective approximation of bandlimited signals and
  their samples},
\newblock in: \bibinfo{booktitle}{Proceedings of the {IEEE} International
  Conference on Acoustics, Speech, and Signal Processing ({ICASSP} '20)},
  \bibinfo{year}{2020}, pp. \bibinfo{pages}{5590--5594}.
  \DOIprefix\doi{10.1109/ICASSP40776.2020.9053196}.
\bibitem[{Shor(1997)}]{shor97}
\bibinfo{author}{P.~W. Shor},
\newblock \bibinfo{title}{Polynomial-time algorithms for prime factorization
  and discrete logarithms on a quantum computer},
\newblock \bibinfo{journal}{{SIAM} Journal on Computing} \bibinfo{volume}{26}
  (\bibinfo{year}{1997}) \bibinfo{pages}{1484--1509}.
  \DOIprefix\doi{10.1137/S0097539795293172}.

\end{thebibliography}
\end{document}